\newcommand{\1}{\mathds{1}}
\definecolor{darkgreen}{rgb}{0,0.5,0}
\definecolor{darkblue}{rgb}{0,0,0.6}
\Crefname{theorem}{Theorem}{Theorems}
\Crefname{lemma}{Lemma}{Lemmas}
\Crefname{claim}{Claim}{Claims}
\Crefname{remark}{Remark}{Remarks}
\Crefname{observation}{Observation}{Observations}
\newtheorem{theorem}{Theorem}[section]
\newtheorem{lemma}[theorem]{Lemma}
\newtheorem{meta-theorem}[theorem]{Meta-Theorem}
\newtheorem{claim}[theorem]{Claim}
\newtheorem{remark}[theorem]{Remark}
\newtheorem{definition}{Definition}[section]
\algnewcommand\algorithmicswitch{\textbf{switch}}
\algnewcommand\algorithmiccase{\textbf{case}}
\newcommand{\eps}{\varepsilon}
\newcommand{\CONGEST}{$\mathsf{CONGEST}$\xspace}
\newcommand{\LOCAL}{$\mathsf{LOCAL}$\xspace}
\newcommand{\OPT}{\mathsf{OPT}}
\newcommand{\local}{\LOCAL}
\newcommand{\congest}{\CONGEST}
\newcommand{\poly}{\operatorname{\text{{\rm poly}}}}
\newcommand{\quasipoly}{\operatorname{\text{{\rm quasipoly}}}}
\newcommand{\set}[1]{\left\{#1\right\}}
\renewcommand{\tilde}{\widetilde}
\DeclareMathOperator{\E}{\mathbb{E}}
\newcommand{\Labels}{\Sigma}
\newcommand{\R}{\mathbb{R}}
\newcommand{\Rp}{\mathbb{R}_{\geq 0}}
\newcommand{\N}{\mathbb{N}}
\newcommand{\calL}{\mathcal{L}}
\newcommand{\utility}{\mathbf{u}}
\newcommand{\cost}{\mathbf{c}}
\newcommand{\fE}{\mathcal{E}}
\newcommand{\tx}{\tilde{x}}
\newcommand{\FullOrShort}{full}
  \newcommand{\fullOnly}[1]{#1}
  \newcommand{\shortOnly}[1]{}
    \newcommand{\fullOnly}[1]{}
    \newcommand{\IncludePictures}[1]{}
\begin{document}
\date{}
\title{Local Distributed Rounding: \\ Generalized to MIS, Matching, Set Cover, and Beyond}
 \author{
   Salwa Faour \\
   \small{University of Freiburg} \\
   \small{salwa.faour@cs.uni-freiburg.de}
   \and
   Mohsen Ghaffari \\
   \small{MIT}\\
   \small{ghaffari@mit.edu}
   \and
   Christoph Grunau \\
   \small{ETH Zurich}\\
   \small{cgrunau@inf.ethz.ch}
   \and
   Fabian Kuhn \\
   \small{University of Freiburg} \\
   \small{kuhn@cs.uni-freiburg.de}
   \and
   Václav Rozhoň \\
 \small{ETH Zurich} \\
 \small{rozhonv@inf.ethz.ch}
 }
\maketitle

\begin{abstract} 
We develop a general deterministic distributed method for locally rounding fractional solutions of graph problems for which the analysis can be broken down into analyzing pairs of vertices. Roughly speaking, the method can transform fractional/probabilistic label assignments of the vertices into integral/deterministic label assignments for the vertices, while approximately preserving a potential function that is a linear combination of functions, each of which depends on at most two vertices (subject to some conditions usually satisfied in pairwise analyses). The method unifies and significantly generalizes prior work on deterministic local rounding techniques [Ghaffari, Kuhn FOCS'21; Harris FOCS'19; Fischer, Ghaffari, Kuhn FOCS'17; Fischer DISC'17] to obtain polylogarithmic-time deterministic distributed solutions for combinatorial graph problems. Our general rounding result enables us to locally and efficiently derandomize a range of distributed algorithms for local graph problems, including maximal independent set (MIS), maximum-weight independent set approximation, and minimum-cost set cover approximation. As highlights, we in particular obtain the following results.
\begin{itemize}
    \item We obtain a deterministic $O(\log^2\Delta\cdot\log n)$-round algorithm for computing an MIS in the \LOCAL model and an almost as efficient $O(\log^2\Delta\cdot\log\log\Delta\cdot\log n)$-round deterministic MIS algorithm in the \CONGEST model. As a result, the best known deterministic distributed time complexity of the four most widely studied distributed symmetry breaking problems (MIS, maximal matching, $(\Delta+1)$-vertex coloring, and $(2\Delta-1)$-edge coloring) is now $O(\log^2\Delta\cdot\log n)$. Our new MIS algorithm is also the first direct polylogarithmic-time deterministic distributed MIS algorithm, which is not based on network decomposition.
    \item We obtain efficient deterministic distributed algorithms for rounding fractional solutions for maximum (weighted) independent set and minimum (weighted) set cover. We in particular give a deterministic $O(\log^2\Delta + \log^* n)$-round algorithms for computing an independent set of size $(1/2-\varepsilon)\cdot n/\deg_{\mathrm{avg}}$ and we give deterministic $O(\log^2(\Delta W) + \log^* n)$-round algorithms for computing a $(1-\varepsilon)/\Delta$-approximation of maximum weight independent set, and for computing a $(1-\varepsilon)/r$-approximation of maximum weight matching in hypergraphs of rank $r$. For minimum set cover instances with sets of size at most $s$ and where each element is contained in at most $t$ sets, we show that an $O(\log s)$-approximation can be computed in time $O(\log s \cdot\log^2 t + \log^* n)$.
\end{itemize}
\end{abstract}
\setcounter{page}{0}
\thispagestyle{empty}


{   \newpage
    \smallskip
    \hypersetup{linkcolor=blue}
    \tableofcontents
    \setcounter{page}{0}
    \thispagestyle{empty}
}
\newpage

\section{Introduction \& Related Work}
In this paper, we present a \emph{local rounding method}
that can be used to transform probabilistic/fractional assignments in a certain class of randomized distributed graph algorithms into deterministic/integral assignments. Roughly speaking, the method applies to randomized distributed algorithms that can be analyzed assuming pairwise independence. This allows us to transform such randomized algorithms into deterministic distributed algorithms. This unified approach leads to novel and/or improved algorithms for a number of the central problems of the area---e.g., maximal independent set, $(\Delta+1)$-coloring, maximal (hypergraph) matching, and set cover. Several of these problems had remained open for over three decades until a recent breakthrough on network decomposition~\cite{rozhonghaffari20}. Our approach is completely independent and yields faster algorithms (besides providing a new and more systematic construction for the network decomposition problem itself).

\subsection{Background and State of the Art}
\noindent\textbf{Model.} We work with the standard synchronous message-passing model of distributed computing~\cite{peleg00}. The network is abstracted as an undirected graph $G=(V, E)$, with $n:=|V|$. Initially, each node knows only its own unique $O(\log n)$-bit identifier, and nothing else about the graph $G$, except for potentially some upper bounds on global parameters such as $n$, the number of nodes, and $\Delta$, the maximum degree. Per round, each node can send one $B$-bit message to each of its neighbors. We usually assume $B=O(\log n)$ and refer to this model variant as \CONGEST. The variant that allows unbounded message sizes is referred to as \LOCAL. At the end of the computation, each node should know its own part of the output, e.g., whether it is in the computed independent set.

\paragraph{Linial's MIS question.}
Four of the best studied problems in distributed graph algorithms are maximal independent set (MIS), maximal matching, $(\Delta+1)$-vertex coloring, and $(2\Delta-1)$-edge coloring. MIS is the hardest of these four, as all others can be reduced to it~\cite{luby86,linial1987LOCAL}. Luby's classic algorithm from 1986 provides an $O(\log n)$-round randomized algorithm for MIS, and thus for all the others~\cite{luby86} (see also \cite{alon86}). It however remained open for over three decades whether there also is a deterministic $\poly\log n$-time algorithm for MIS, and this came to be known as \emph{Linial's open question}, first raised in~\cite{linial1987LOCAL, linial92}. See also the book of Barenboim and Elkin~\cite{barenboimelkin_book} for further discussions on the significance of this and other related problems.

\paragraph{The first solution, via network decomposition. } The first solution to the MIS question was provided recently by Rozho\v{n} and Ghaffari~\cite{rozhonghaffari20}. Their algorithm actually solves the network decomposition problem in $\poly\log n$ time, which roughly speaking partitions the vertices into $\poly\log n$ colors, such that the components in the subgraph induced by each color have $\poly\log n$ diameter. This directly leads to a $\poly\log n$-time deterministic algorithm for MIS in the \LOCAL model. In the \CONGEST model, a more complex $\poly\log n$-time MIS algorithm follows from combining network decomposition with a  deterministic \CONGEST algorithm of \cite{censor2017derandomizing} for low-diameter networks. The combination works by leveraging the pairwise analysis of the randomized MIS algorithms (Luby~\cite{luby1993removing} or Ghaffari~\cite{gmis}) and using the low-diameter components of the decomposition to fix the bits of the randomized algorithm one by one. Currently, the fastest known deterministic MIS algorithm both in \LOCAL and \CONGEST runs in $O(\log^5 n)$ rounds, and it is based on an improved variant of the Rozho\v{n}-Ghaffari decomposition presented in \cite{GGR20}.

As a side note, we remark that the decomposition result, in combination with \cite{ghaffari2017complexity, ghaffari2018derandomizing}, yields a \LOCAL-model derandomization with $\poly\log n$ round slow down, for $\poly\log n$-time checkable problems. However, there is no such general derandomization for the \CONGEST model.

\paragraph{A different line of attack, via rounding.}
A more direct line of attack toward developing $\poly\log n$-time deterministic algorithms for the four problems mentioned above is via \emph{local rounding}. Roughly speaking, this approach starts with a suitable fractional solution to a certain problem and gradually and deterministically rounds this fractional solution to obtain an integral solution with similar quality. One can trace this local rounding approach back to the work of Hanckowiak, Karonski, and Panconesi~\cite{hanckowiak01} who gave the first $\poly\log n$-time deterministic distributed algorithm for the maximal matching problem; though they never talked about rounding. The \emph{local rounding} nature of the approach was made explicit and used to improve the complexity for maximal matching in the work of Fischer\cite{fischer2020improved}. Fischer observed that for the matching problem, it is easy to obtain an $O(1)$-approximation of maximum fractional matching in a deterministic way, and that the real challenge is in rounding this fractional matching into an integral one without much loss in the size.  Fischer developed a specific $O(\log^2 \Delta)$-time rounding procedure for the matching problem (in bipartite graphs). Repeating this $O(\log n)$ times (in a suitable bipartite version of the graph) adds in each repetition a matching within a constant factor of the remaining maximum, hence reducing the maximum remaining matching by a constant factor, and therefore yields a maximal matching in $O(\log^2 \Delta \cdot \log n)$ time. That remains the state of the art for the maximal matching problem. It however should be noted that Fischer's rounding was very specific to matching in graphs, and it did not generalize to the other three problems. In fact, the method did not extend even to matching in hypergraphs of rank three, i.e., where each edge has three endpoints. 

A follow-up work by Fischer, Ghaffari, and Kuhn\cite{FischerGK17} developed a different rounding method for matching that extended to low-rank hypergraphs. Along with a reduction that they gave from $(2\Delta-1)$-edge coloring in graphs to maximal matching in hypergraphs of rank $3$, this led to a $\poly\log n$-time deterministic algorithm for $(2\Delta-1)$-edge coloring. This result put the second of the four problems in the $\poly\log n$-time deterministic regime. The exact complexity was notably higher. Harris\cite{harris2019distributed} improved the complexity of this hypergraph matching rounding. Among other speed-ups, that led the complexity of deterministic $(2\Delta-1)$-edge coloring to reach $\tilde{O}(\log^2 \Delta \cdot \log n)$, which nearly matches that of maximal matching.  

The above lines of work on local rounding appeared limited to rounding for matching in graphs or hypergraphs. Then (and shortly after the network decomposition result~\cite{rozhonghaffari20}), Ghaffari and Kuhn~\cite{GhaffariK21} developed an efficient rounding method for $\Delta+1$ coloring. Roughly speaking, their approach starts with a simple fractional assignment of the colors (where each node takes an equal portion of each of the colors in its palette) and gradually rounds this assignment until reaching (a potentially improper) integral color assignments. This is done in a manner that, the rounding approximately maintains a certain potential function, such that in the end, once given the final integral color assignments, we can efficiently turn it into a proper coloring of a constant fraction of the remaining node. Interestingly enough, this rounding also led to the $O(\log^2 \Delta \cdot \log n)$ time complexity for vertex coloring. This put the third of the four problems in the $\poly\log\Delta\cdot \log n$-time, and concretely $O(\log^2 \Delta \cdot \log n)$-time, deterministic regime. We note that, at that time, $\poly\log n$-round complexity coloring was already known from the decomposition result~\cite{rozhonghaffari20}, but the rounding-based method~\cite{GhaffariK21} gave a more direct and faster coloring algorithm.

Despite this exciting progress on local rounding, the known methods appeared ad hoc, specifically tailored to matching or coloring problems. For instance, they did not extend to the hardest of the four classic problems, the maximal independent set problem. 

\subsection{Our Contributions}
\label{sec:contributions}

In this paper, we vastly generalize the local rounding method, in such a way that, roughly speaking, we can now derandomize randomized local algorithms that can be analyzed with pairwise independence. This rounding works in a local and efficient manner, and without relying on network decompositions. We show that this generalized local rounding method yields new and/or improved deterministic distributed algorithms for a range of graph problems of interest. As a notable example, we obtain an $O(\log^2 \Delta \cdot \log n)$ round \LOCAL model algorithm for MIS, and a  $O(\log^2 \Delta \cdot \log\log \Delta \cdot \log n)$-round \CONGEST model variant. 

\begin{theorem} There is a deterministic distributed algorithm that computes an MIS in time $O(\log^2 \Delta \cdot \log n)$ in the \LOCAL model. A variant of this algorithm computes an MIS deterministically in $O(\log^2 \Delta\cdot  \log\log \Delta \cdot \log n)$ rounds of the \CONGEST model.
\end{theorem}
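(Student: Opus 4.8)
The plan is to derandomize Ghaffari's randomized MIS algorithm~\cite{gmis} using the general pairwise local rounding theorem developed earlier in the paper. Recall that in that algorithm every still-active vertex $v$ carries a \emph{desire level} $p_v\in[0,\tfrac12]$, initialized to $\tfrac12$, and fix a small constant $\gamma\in(0,1)$. A round proceeds as follows: each active $v$ is independently \emph{marked} with probability $p_v$; if $v$ is marked and no neighbor of $v$ is marked, then $v$ joins the MIS and $v$ together with all its neighbors is deactivated; finally $p_v$ is updated from its \emph{effective degree} $d_v:=\sum_{u\in N(v)}p_u$, being halved if $d_v\ge 1-\gamma$ and doubled (capped at $\tfrac12$) otherwise. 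Call a round \emph{golden} for $v$ if at its start $p_v=\tfrac12$ and $d_v<1-\gamma$; in a golden round $\Pr[v\text{ joins the MIS}]\ge p_v(1-d_v)\ge\gamma/2$, a bound that holds already under pairwise independence of the marks, via a union bound. Ghaffari's analysis shows that over the first $\Theta(\log\Delta)$ rounds all but at most $n/4$ vertices are either deactivated or golden in $\Omega(\log\Delta)$ of their rounds, and the only genuinely randomized ingredient is the marking, since the desire-level updates are deterministic functions of the (already fixed) past.

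The core step is to realize one round as an instance of the pairwise rounding framework. The label of a vertex is its mark $\ell_v\in\{0,1\}$, drawn (it suffices) from a pairwise-independent distribution with marginals $p_v$, so the label domain has size two. The difficulty is that the quantity we care about --- ``$v$ is deactivated this round'' --- is not a function of boundedly many vertices. I would sidestep this with the union-bound surrogate $\ell_v\bigl(1-\sum_{u\in N(v)}\ell_u\bigr)$: as a function of the realized marks this is a pointwise lower bound on $\1[v\text{ marked and isolated}]=\1[v\text{ joins MIS}]$, its expectation equals $p_v(1-d_v)$, and it expands into the ``unary'' term $\ell_v$ minus a sum of edge terms $\ell_v\ell_u$, each depending on exactly two vertices. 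Thus for round $t$ one takes the potential
\[
\Phi^{(t)}\;=\;\sum_{v\ \text{golden in round }t} w_v\cdot\ell_v\Bigl(1-\sum_{u\in N(v)}\ell_u\Bigr),
\]
with weights $w_v\ge 1$ (say $w_v=1$, or $w_v=1+\deg(v)$ to also credit deactivated neighbors); this $\Phi^{(t)}$ is a nonnegative linear combination of functions of at most two vertices, with $\E[\Phi^{(t)}]=\sum_{v\ \text{golden}}w_vp_v(1-d_v)=\Omega\bigl(\#\{v:v\text{ golden in round }t\}\bigr)$, so it meets the hypotheses of the rounding theorem. Fixing the marks with that theorem yields a deterministic assignment with $\Phi^{(t)}(\ell)\ge(1-\eps)\E[\Phi^{(t)}]$ for a small constant $\eps$; since $\Phi^{(t)}(\ell)$ is at most $\sum w_v$ over the golden vertices that are actually marked-and-isolated, an $\Omega(1)$-fraction (counted with the $w_v$) of the golden vertices provably join the MIS in round $t$ and are deactivated along with their neighbors.

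It remains to assemble the rounds and count. One processes rounds one at a time, so that when round $t$ is derandomized the desire levels $p_v^{(t)}$ --- hence which vertices are golden --- are already fixed, and each round invokes the rounding subroutine once. Because each derandomized round deactivates an $\Omega(1)$-fraction of the vertices golden in it, summing over $\Theta(\log\Delta)$ rounds and plugging in Ghaffari's bound on how many vertices can avoid being golden sufficiently often shows that a block of $\Theta(\log\Delta)$ derandomized rounds deactivates a constant fraction of all vertices (the amortized counting is exactly Ghaffari's, and it tolerates the $(1-\eps)$ slack). Repeating the block $O(\log n)$ times deactivates every vertex, at which point the set of vertices that joined is a maximal independent set. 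For the running time, the general theorem rounds a constant-size-domain instance whose pairwise terms live on edges in $O(\log\Delta)$ rounds of \LOCAL (the dependency graph has degree $\poly(\Delta)$), so a block costs $O(\log\Delta)\cdot O(\log\Delta)=O(\log^2\Delta)$ rounds and the whole algorithm $O(\log^2\Delta\cdot\log n)$. For \CONGEST one additionally has to compute the effective degrees and, more importantly, run the rounding subroutine with $O(\log n)$-bit messages; the standard remedy --- bucketing the desire levels into $O(\log\log\Delta)$ scales and carrying out the conditional-expectation estimates approximately --- costs a further $O(\log\log\Delta)$ factor, giving $O(\log^2\Delta\cdot\log\log\Delta\cdot\log n)$.

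The main obstacle is the construction and analysis of the potential $\Phi^{(t)}$: one must find a surrogate for ``deactivation'' that (i) splits into functions of at most two vertices and satisfies the side conditions of the rounding theorem, (ii) has the right expectation under merely pairwise-independent marks, and (iii) is a pointwise lower bound on true progress, so that the theorem's $(1-\eps)$-guarantee on $\Phi^{(t)}$ translates into genuinely deactivating a constant fraction of the golden vertices --- and all of this must be compatible with Ghaffari's amortized degree-reduction bookkeeping surviving a constant multiplicative loss per round. The remaining work --- scheduling the iterations so the round budget stays $O(\log^2\Delta\cdot\log n)$ rather than a logarithmic factor more, and the bandwidth-efficient \CONGEST implementation (in the spirit of \cite{censor2017derandomizing,harris2019distributed}) --- is where the ``local, without network decomposition'' character of the general rounding theorem does the heavy lifting.
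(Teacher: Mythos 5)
You take a genuinely different route from the paper's: you derandomize \emph{Ghaffari's} randomized MIS algorithm (with desire levels and golden rounds) round by round, whereas the paper derandomizes a single iteration of \emph{Luby's} algorithm (mark $v$ with probability $1/(20\deg(v))$, break ties by degree/ID), with a potential that counts removed \emph{edges} charged to ``good'' vertices (those with $\ge\deg(v)/3$ incoming edges). Your surrogate $\ell_v(1-\sum_{u\in N(v)}\ell_u)$ is essentially the same union-bound trick the paper uses (over $IN^*(v)$ and $OUT(u)$), and the verification $\utility-\cost\ge\mu\,\utility$ for constant $\mu$ via the golden-round condition $d_v<1-\gamma$ is sound. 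Your amortized counting that a block of $\Theta(\log\Delta)$ derandomized Ghaffari rounds removes $\Omega(n)$ vertices does go through (if $R_T<n/100$ then at least $n/2$ vertices accumulate $\Omega(\log\Delta)$ golden rounds each, making $\sum_t g_t=\Omega(n\log\Delta)$, while $R_T\ge c\sum_t g_t$, contradiction).

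However, the round-complexity accounting has a genuine $\log\Delta$ gap. You assert that the rounding subroutine costs $O(\log\Delta)$ rounds per Ghaffari iteration, but the paper's \Cref{lemma:distributedonG} gives $O\bigl(\frac{k^2}{\eps\mu}+k\log^*\zeta\bigr)$ rounds, where $k=\log(1/\lambda_{\min})$; since Ghaffari's desire levels reach $1/\poly(\Delta)$ within a block, $\lambda_{\min}=1/\poly(\Delta)$, $k=\Theta(\log\Delta)$, and with constant $\eps,\mu$ each invocation is $\Theta(\log^2\Delta)$. (You appear to conflate the number of gradual-rounding steps $k=O(\log\Delta)$ with the total round cost $O(k^2)$.) A block is therefore $O(\log\Delta)\cdot O(\log^2\Delta)=O(\log^3\Delta)$, and with $O(\log n)$ blocks your algorithm runs in $O(\log^3\Delta\cdot\log n)$ rounds, not the claimed $O(\log^2\Delta\cdot\log n)$. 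This is precisely the efficiency reason the paper uses Luby's iteration structure instead: there, a \emph{single} rounding invocation per iteration removes $\Omega(|E|)$ edges, so $O(\log n)$ iterations suffice and the total stays $O(\log^2\Delta\cdot\log n)$, rather than needing $\Theta(\log\Delta)$ separate derandomizations per block of progress. To salvage the Ghaffari-based approach you would either need to show a single block's $\Theta(\log\Delta)$ markings can be derandomized in a single $O(\log^2\Delta)$-round invocation (which breaks the pairwise structure, since ``$v$ joins the MIS in some round of the block'' depends on the entire block history), or show each per-round rounding can be done in $O(\log\Delta)$ rounds (which contradicts the rounding lemma's $O(k^2)$ dependence).
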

Hence, now all the four classic problems are in the $O(\log^2 \Delta \cdot \log n)$ deterministic round complexity regime, and in a unified way. This result provides a second solution to Linial's famous open question\cite{linial1987LOCAL}, which had remained open for over three decades. The new solution is completely independent of the first decomposition-based solution and is also more efficient.  In contrast, the fastest previously known MIS algorithm, in either \LOCAL or \CONGEST models, required $O(\log^5 n)$ rounds\cite{GGR20} and was based on network decomposition. 

We also note for maximal independent set and maximal matching, a celebrated recent result of Balliu et al.~\cite{balliu2019LB} gives a lower bound of $\Omega(\min\{\Delta, \log n/\log\log n\})$ for deterministic algorithms in the \LOCAL model. This almost justifies the need for one $\log n$ factor in the upper bound. In particular, in the regime where $\Delta=\poly(\log n)$, the lower bound becomes  $\Omega(\log n/\log\log n)$ and our \LOCAL-model upper bound is $O(\log n\cdot \log^2\log n)$, which means that the upper and lower bounds are matching up to exponentially lower order factors.

\paragraph{Other applications, set cover.} As another prominent application, we obtain an improved approximation algorithm for the minimum set cover problem, with an approximation factor roughly matching that of the best-known centralized algorithm (below which the approximation problem becomes $\mathsf{NP}$-hard):

\begin{theorem} There is a deterministic distributed algorithm that computes an $O(\log s)$ approximation of the minimum set cover problem in $\tilde{O}(\log s \log^2 t+ \log^* n)$ rounds of the \CONGEST model, where $s$ denotes the maximum set cardinality, and $t$ denotes the maximum number of sets that contain a given element.
\end{theorem}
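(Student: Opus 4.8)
The plan is to follow the classical two-phase LP-rounding template for set cover --- first compute a constant-factor fractional cover, then round it up, losing an $O(\log s)$ factor --- carrying out both phases with the paper's deterministic local-rounding machinery and a single initial $O(\log^* n)$-round symmetry-breaking step. Model the instance by its bipartite incidence graph $H$ between sets and elements, in which every set has degree at most $s$ and every element has degree at most $t$, and compute a proper coloring of $H$ in $O(\log^* n)$ rounds (the only genuinely global ingredient, and the source of the additive $\log^* n$ term). For Phase~1, I would deterministically produce a fractional cover $x\colon \text{sets}\to\Rp$ with $\sum_{S\ni e} x_S\ge 1$ for every element $e$ and $\sum_S c_S x_S \le O(1)\cdot\OPT_{\mathrm{LP}}\le O(1)\cdot\OPT$. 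Since each covering constraint involves at most $t$ variables, such an $O(1)$-approximate fractional cover is exactly the kind of object a short deterministic distributed covering-LP routine (or, cleaner for the analysis, an $O(\log s)$-fold iteration of the rounding lemma on residual instances) can output in $\tilde O(\log^2 t)$ rounds on $H$.

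For Phase~2, scale the fractional cover by $\Theta(\log s)$: set $p_S:=\min\{1,\Theta(\log s)\cdot x_S\}$, so that $\sum_S c_S p_S = O(\log s)\cdot\OPT$ while, for every element $e$,
\[
\prod_{S\ni e}(1-p_S)\;\le\;\exp\!\Big(-\!\sum_{S\ni e}p_S\Big)\;\le\;\exp(-\Theta(\log s))\;\le\;s^{-c}.
\]
In probabilistic terms, sampling $z_S\sim\mathrm{Ber}(p_S)$ independently yields expected cost $O(\log s)\cdot\OPT$ and leaves each element uncovered with probability at most $s^{-c}$. I would then invoke the paper's local-rounding theorem to obtain deterministically an integral $z$ that simultaneously keeps the cost at $O(\log s)\cdot\OPT$ and keeps every element's ``local uncovered-probability'' below $s^{-c}$, using as potential a linear combination of the cost terms $c_Sp_S$ and of the per-element terms $\prod_{S\ni e}(1-p_S)$. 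Finally, charge every element still uncovered by $z$ to a cheapest set containing it; since within each set only a $\le s^{-c}$-fraction of its (at most $s$) elements needs this treatment, taking $c$ large makes the clean-up cost an extra $O(1)\cdot\OPT$ only, so the final cover is $O(\log s)\cdot\OPT$. The round count is then the $O(\log^* n)$ coloring, plus Phase~1, plus one (or $O(\log s)$ iterated) application(s) of the rounding lemma on a structure of element-degree $\le t$, i.e.\ $\tilde O(\log s\,\log^2 t)$ in total.

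The main obstacle is that the local-rounding theorem handles potentials that split into terms depending on \emph{pairs} of vertices, whereas the coverage term $\prod_{S\ni e}(1-p_S)$ of a single element couples up to $t$ set-variables at once, so the set-cover coverage constraints do not fit the pairwise template directly. The crux of the proof is to repair this: one route is to replace each element's product by a telescoping ``one-bad-set-at-a-time'' pessimistic estimator that still upper-bounds the true uncovered-probability, that decomposes into set--element pairwise contributions along $H$, and that can be verified to meet the monotonicity and boundedness hypotheses the rounding theorem demands; another is to run the bit-rounding phases along the $O(\log^* n)$-coloring of $H$ so that each phase only reasons about a set together with one incident element. Getting this encoding right, and confirming that the per-element guarantee survives the clean-up step without eroding the $O(\log s)$ factor, is the technical heart; the remaining accounting --- costs, round complexity, and \CONGEST feasibility (all messages are $O(\log n)$ bits since $c_S,s,t\le\poly(n)$) --- is routine.
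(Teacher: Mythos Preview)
Your Phase~2 has a genuine gap that is the heart of the matter. Scaling each $x_S$ up by $\Theta(\log s)$ and then asking the rounding lemma to certify a per-element failure probability of at most $s^{-c}$ cannot work with the paper's framework: the rounding machinery applies only to potentials that decompose into \emph{pairwise} terms, and a pairwise pessimistic estimator can at best certify that an element is covered with constant probability, not with probability $1-s^{-c}$. The product $\prod_{S\ni e}(1-p_S)$ being $\le s^{-c}$ is a genuinely high-independence statement; no ``telescoping one-bad-set-at-a-time'' rewriting into pairwise terms can recover an exponentially small bound from a pairwise analysis. Your second suggested fix (iterating along the coloring) does not address this either, since the coloring has $\poly(st)$ classes, not $O(\log s)$, and in any case the obstacle is algebraic, not scheduling.

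The paper takes the other route you mention only parenthetically: it never scales up by $\log s$. Instead it runs $\tau=O(\log s)$ sequential rounding phases, each starting from the \emph{same} constant-scaled fractional solution $x$. Pairwise analysis (inclusion--exclusion truncated at the second level) shows that each phase covers a constant fraction of the still-uncovered elements, so after $\tau$ phases only $|U|/s\le\OPT$ elements remain and a greedy clean-up suffices. The technical content is in controlling the number of sets picked across phases: the paper defines, in phase $i$, utility and cost functions that combine (i) a pairwise estimator for the number of newly covered elements, weighted by a slowly increasing factor $1.01^{-(\tau-i)}$, and (ii) the number of sets selected in this phase. A potential argument over the $\tau$ phases then shows that the combined quantity $|U_\tau|+\sum_i|V_i'|$ stays $O(\log s)\cdot\OPT$. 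This simultaneous control of coverage and cost within a single pairwise-decomposable objective is precisely the missing idea in your outline.
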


Notably, the complexity of this algorithm depends logarithmically only on the local parameters of the problem (degrees in a bipartite formulation) and has only an additive $O(\log^* n)$ dependency on the network size, which is known to be necessary~\cite{lenzen08}. In contrast, the best previously known algorithm has a polylogarithmic dependency on the global network size $n$~\cite{Deurer2019, GGR20}. This is an important qualitative difference and has implications for other computational settings\footnote{In particular, in the \emph{local computations algorithms} (LCA) model of sublinear-time centralized computation \cite{rubinfeld2011LCA, alon2012LCA}, our result provides a deterministic set cover approximation with query complexity $d^{\tilde{O}(\log^3 d)} \poly(\log n)$, where $d=st$. This follows from a direct stimulation via gathering the relevant local neighborhood\cite{parnas2007approximating} (with an appropriate prior coloring~\cite{chang2019complexity}). This LCA can answer whether each asked set is in the selected cover or not, using only $d^{\tilde{O}(\log^3 d)} \poly(\log n)$ query accesses to the adjacency lists of the graph.  In contrast, a simulation of the decomposition-based approach would imply query complexity $d^{\poly(\log n)}$, which is even super-polynomial in $n$ and thus completely useless/uninteresting. The best known LCA for set cover has query complexity $d^{O(\log d \log\log d)} \poly(\log n)$~\cite{grunau2020improved}, but that heavily relies on randomness. No deterministic LCA with query complexity $\quasipoly(d) \cdot \poly(\log n)$ was known previously.}.

We also note that Deuer, Kuhn, and Maus~\cite{Deurer2019} previously used a deterministic rounding method to obtain an $O(\log \Delta)$ approximation for the minimum dominating set problem in $O(\Delta \poly(\log \Delta) + \poly(\log \Delta) \log^* n)$ rounds of the \CONGEST model, in any graph with maximum degree at most $\Delta$. The minimum dominating set problem is  equivalent to the set cover problem in the setting where $s=t=\Delta$, and one can provide a simple approximation preserving reduction in both directions. Using the straightforward direction of this connection (reducing dominating set to set cover by viewing each node as a set that includes all of its neighbors) and invoking our minimum set cover approximation algorithm, we obtain an $O(\log \Delta)$ approximation for minimum dominating set in $\tilde{O}(\log^3 \Delta +  \log^* n)$ rounds of the \CONGEST model. This improves on the $O(\Delta \poly(\log \Delta) + \poly(\log \Delta) \log^* n)$ complexity of ~\cite{Deurer2019} nearly exponentially. 

\paragraph{Other applications, network decomposition.} In fact, since the network decomposition problem has been known to be reducible to the set cover problem\cite{ghaffari2018derandomizing}, our method yields a novel $\poly(\log n)$-round algorithm for the network decomposition problem. One may view this as a more systematic solution for network decomposition, in contrast to the specific combinatorial approach of \cite{rozhonghaffari20}. This new construction reduces the decomposition problem to its pairwise analyzable core (captured in set cover) and then derandomizes the corresponding natural randomized algorithm via local rounding. This is more systematic in the sense that it is similar to other problems that are derandomized by going through pairwise analysis. 

\paragraph{Other applications, maximum independent set and hypergraph matching.} As another applications, we get efficient algorithms for computing large cardinality or weight independent sets. In the following, the neighborhood independence $\beta$ of a graph $G=(V,E)$ is the size of the largest independent set of any subgraph $G[N(v)]$ induced by the set of neighbors $N(v)$ of some node $v\in V$. Further, for a node weight function $w:V\to\N$ and a node set $S\subseteq V$, $w(S)=\sum_{v\in S} w(v)$.

\begin{theorem}\label{thm:ISthm}
  Let $G=(V,E)$ be a node-weighted $n$-graph of maximum degree $\Delta$, neighborhood independence $\beta$, node weights $w:V\to \N$, and maximum weight $W$. Further, let $\OPT$ be the weight of a maximum weight independent set and assume that $G$ is equipped with an proper $\xi$-coloring. Then, for every $\eps>0$, there are deterministic \CONGEST algorithms to compute independent sets
  \begin{align}
    \text{of weight } & \frac{1-\eps}{\beta}\cdot\OPT & \text{ in } & O\big(\log^2(\Delta W)\cdot\log(1/\eps) + \log^* \xi\big)\text{ rounds,}\label{eq:ISthm1}\\
    \text{of weight } & (1-\eps)\cdot \frac{w(V)}{\Delta+1} & \text{ in } & O\big(\log^2\Delta\cdot\log(1/\eps) + \log^* \xi\big)\text{ rounds, and}\label{eq:ISthm2}\\
    \text{of weight } & \left(\frac{1}{2}-\eps\right)\cdot \sum_{v\in V}\frac{(w(v))^2}{w(v) + w(N(v))} & \text{ in } & O\left(\frac{\log^2(\Delta/\eps)}{\eps} + \log^* \xi\right)\text{ rounds.}\label{eq:ISthm3}
  \end{align}
\end{theorem}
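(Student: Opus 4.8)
\emph{Common template.}
The plan is to reduce all three statements to a single use of the general deterministic local rounding machinery developed earlier in the paper. For a vector $x\in[0,1]^V$, which we think of as a probabilistic ``selection'' of each vertex, consider the potential
\[
\Phi(x)\;=\;\sum_{v\in V} w(v)\,x_v\;-\;\sum_{\{u,v\}\in E}\bigl(w(u)+w(v)\bigr)\,x_u x_v ,
\]
which is a linear combination of functions each depending on at most two vertices, i.e.\ exactly of the form the rounding theorem can approximately preserve. Its relevance is that if $x$ is \emph{integral} and $S=\{v:x_v=1\}$, then the ``clean'' subset $S'\subseteq S$ of selected vertices having no selected neighbour satisfies $w(S')\ge\Phi(x)$: deleting from $S$ all endpoints of monochromatic edges removes at most $\sum_{\{u,v\}\in E,\,u,v\in S}(w(u)+w(v))$ units of weight, and for integral $x$ this sum is precisely $w(S)-\Phi(x)$. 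So the schema is: (i) write down a good fractional $x^{(0)}$; (ii) invoke the rounding theorem to obtain integral $x$ with $\Phi(x)\ge(1-\eps')\,\Phi(x^{(0)})$; (iii) output $S'$, which is independent and has $w(S')\ge(1-\eps')\,\Phi(x^{(0)})$. Step (ii) is where the $\log^2(\cdot)$ factors come from: a single call to the rounding theorem costs $O(\log^2 p\cdot\log(1/\eps'))$ rounds, where $p$ is the range/bit‑precision of the quantities involved, plus an additive $O(\log^*\xi)$ to turn the given proper $\xi$‑colouring into the $\poly$‑size colouring that the rounding step (and the simulation on the relevant bounded‑degree / bounded‑independence graph) requires.

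\emph{The three fractional solutions.}
For \eqref{eq:ISthm2} take $x^{(0)}_v\equiv c/(\Delta+1)$ for a suitable constant $c$; then $\sum_v w(v)x^{(0)}_v=\tfrac{c}{\Delta+1}w(V)$, while $\sum_{\{u,v\}\in E}(w(u)+w(v))x^{(0)}_u x^{(0)}_v=\tfrac{c^2}{(\Delta+1)^2}\sum_v w(v)\,d(v)\le\tfrac{c^2\Delta}{(\Delta+1)^2}w(V)$, so a constant fraction of $w(V)/(\Delta+1)$ is obtained in one shot; iterating the whole pipeline on the residual graph $G\setminus(S'\cup N(S'))$ — noting that the sets $S'$ produced in distinct iterations are pairwise non‑adjacent, hence their union is independent — and arguing that the deficit against the $w(V)/(\Delta+1)$ target shrinks geometrically boosts this to $(1-\eps)$ in $O(\log(1/\eps))$ iterations, giving the claimed $O(\log^2\Delta\cdot\log(1/\eps)+\log^*\xi)$ bound. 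For \eqref{eq:ISthm3} take $x^{(0)}_v=\tfrac12\cdot\tfrac{w(v)}{w(v)+w(N(v))}$, so that $\sum_v w(v)x^{(0)}_v=\tfrac12\sum_v\tfrac{w(v)^2}{w(v)+w(N(v))}$, and bound the penalty term using $x^{(0)}_u\le\tfrac12\cdot\tfrac{w(u)}{w(u)+w(v)}$ for $u\in N(v)$ together with the refinement iterations, which forces the $\tfrac12$ down to $\tfrac12-\eps$; here $\eps$ enters both the required precision and the number of iterations, which is why the round count degrades to $O(\log^2(\Delta/\eps)/\eps+\log^*\xi)$. For \eqref{eq:ISthm1} one starts from an approximately optimal fractional maximum‑weight independent set together with its local‑ratio / primal–dual witness in the graph of neighbourhood independence $\beta$; the standard $\beta$‑approximation argument for such graphs is itself pairwise in nature, and the induced $x^{(0)}$ has $\Phi(x^{(0)})\ge(1-\eps')\OPT/\beta$. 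Since weights up to $W$ now enter the fractional values, the rounding theorem costs $O(\log^2(\Delta W))$ per call, and $O(\log(1/\eps))$ refinement rounds yield \eqref{eq:ISthm1}.

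\emph{Post‑processing.}
After the last rounding, computing $S'$ from $S$ is one local step, and the defining inequality of $\Phi$ hands us the weight guarantee directly; no further cleanup is needed inside one iteration. Across iterations we always recurse on $G\setminus(S'\cup N(S'))$, so independence of the accumulated set is automatic, and the only quantitative point is that each iteration removes a constant fraction of the gap between the currently available weight and the target ratio bound.

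\emph{Main obstacle.}
The technically delicate parts are (a) verifying, for each of the three choices of $x^{(0)}$, that $\Phi(x^{(0)})$ meets the target using only the union‑bound / pairwise estimates that are compatible with the rounding analysis — in particular for \eqref{eq:ISthm1}, where the $\beta$‑approximation guarantee for bounded‑neighbourhood‑independence graphs must be recast so that it is certified by a pairwise potential of the exact shape the rounding theorem accepts — and (b) the bookkeeping that converts the per‑call multiplicative $\bigl(1-O(\eps/\log(1/\eps))\bigr)$ losses of the rounding theorem, composed with the geometric shrinkage of the residual deficit over the iterations, into a single clean $(1-\eps)$ factor while keeping the round complexity at the stated bounds.
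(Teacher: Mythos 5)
Your scheme is close in spirit to the paper's (rounding a fractional IS against a pairwise potential, then iterating/boosting), but the specific pairwise potential you chose is genuinely weaker and this breaks result~\eqref{eq:ISthm3}.

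You set the edge penalty to $(w(u)+w(v))\,x_u x_v$, corresponding to an extraction rule that discards \emph{both} endpoints of every conflict. The paper instead uses the tighter $\min\{w(u),w(v)\}\,x_u x_v$, which is the right bookkeeping when one discards only the \emph{lighter} endpoint (its Lemma on extracting an IS from an integral vector). The $\min$ penalty is still a function of a pair of vertices, so it fits the rounding framework just as well as yours — but it is smaller by up to a factor~$2$, and for the Caro--Wei bound this factor is not recoverable. Concretely, for $x_v=w(v)/W_v$ (with $W_v=w(N^+(v))$) the paper shows, via an AM--GM step and $\sum_{v\in N(u)}w(v)/W_u\le 1$, that $\cost_{\min}\le\tfrac12\utility$, so a single rounding call already yields $(\tfrac12-\eps)\sum_v w(v)^2/W_v$. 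With your penalty this fails: take a star whose center has weight $\Delta$ and whose $\Delta$ leaves each have weight $1$; then $\sum_v w(v)^2/W_v\approx\Delta/2$, but $\sum_{\{u,v\}\in E}(w(u)+w(v))\,x_u x_v\approx\Delta/2\approx\utility$, so $\Phi(x)\approx 0$. Scaling $x$ by $\tfrac12$ (as you propose) recovers $\utility\ge 2\cost$, but then $\Phi\approx\tfrac14\sum_v w(v)^2/W_v$ — a constant factor short of the $\tfrac12-\eps$ claimed. The subsequent ``refinement iterations'' you invoke do not close this gap in any obvious way for~\eqref{eq:ISthm3}: unlike the $w(V)/(\Delta+1)$ target of~\eqref{eq:ISthm2}, the quantity $\sum_v w(v)^2/(w(v)+w(N(v)))$ is not additive across residual subgraphs and does not telescope cleanly when you peel off $S'\cup N(S')$, which is exactly why the paper does not iterate for part~\eqref{eq:ISthm3} and the $1/\eps$ in its round bound comes from precision, not iteration count.

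Two secondary points. For~\eqref{eq:ISthm1} you correctly sense that a primal--dual/local-ratio certification of the $\beta$-approximation is needed, but you leave exactly that step unresolved; the paper handles it by feeding a $2/3$-approximate solution of the packing LP $\sum_{u\in N^+(v)}x_u\le 1$ into the rounder, then boosting via a Kawarabayashi--Khoury--Schild--Schwartzman-style local-ratio iteration, using strong LP duality to lower-bound each step's gain. For~\eqref{eq:ISthm2}, your recursion ``on $G\setminus(S'\cup N(S'))$'' is not quite what the paper does either — it reduces residual weights rather than deleting vertices, which is what makes the geometric decrease of $w_t(V)/(\Delta+1)$ go through cleanly. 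The single concrete fix that would repair the core of your argument is to replace $(w(u)+w(v))$ by $\min\{w(u),w(v)\}$ in the potential and to discard only the lighter endpoint when extracting the IS; after that, the rest of your outline for~\eqref{eq:ISthm3} becomes a one-shot rounding with no iterations, and for~\eqref{eq:ISthm1} and~\eqref{eq:ISthm2} you would still need to flesh out the local-ratio bookkeeping.
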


{\it \noindent\Cref{thm:ISthm} has several implications:}
\begin{itemize}
    \item Since $\beta\leq \Delta$, result \eqref{eq:ISthm1} implies a $(1-\eps)/\Delta$-approximation in the same round complexity. For small $\Delta$ (and $W$), this is a significant improvement over an $O(T_{\mathrm{MIS}}/\eps)$-round algorithm for this problem in \cite{KawarabayashiKS20} ($T_{\mathrm{MIS}}$ denotes the time for computing an MIS).
    \item Line graphs of hypergraphs of rank $r$ have neighborhood independence $r$. Thus, result \eqref{eq:ISthm1} implies the same result for computing a $(1-\eps)/r$-approximation for maximum weight matching in hypergraphs of rank $r$. For small $W$, this is an improvement over an algorithm  in \cite{harris2019distributed}.
    \item Result \eqref{eq:ISthm1} also implies an $O(\log^2\Delta + \log^*\xi)$-round algorithm for computing a constant maximum matching approximation in graphs. Repeating $O(\log n)$ times yields an $O(\log^2\Delta\cdot\log n)$-round maximal matching algorithm and in \Cref{sec:boundedindep} (\Cref{thm:maximalmatching}), we show that this also works in the \CONGEST model. This gives an alternative to the deterministic $O(\log^2\Delta\cdot \log n)$-round \CONGEST algorithm of Fischer~\cite{fischer2020improved} and it implies that our generic rounding framework implies the current best deterministic \LOCAL and \CONGEST algorithms for the four classic distributed symmetry breaking problems.
    \item In \cite{KawarabayashiKS20}, Kawarabayashi, Khoury, Schild, and Schwartzman showed that with randomization, an independent set of size $\Omega(n/\Delta)$ can be computed exponentially faster than an MIS and they raised the question whether the same is also true for deterministic algorithms. Result \eqref{eq:ISthm2} answers this question in the affirmative, because for deterministically computing an MIS, there is an $\Omega\big(\min\set{\Delta,\frac{\log n}{\log\log n}}\big)$-round lower bound even in the \LOCAL model~\cite{balliu2019LB}.
    \item By a simple application of the Cauchy-Schwarz inequality, the independent set weight in \eqref{eq:ISthm3} can be lower bounded by $\big(\frac{1}{2}-\eps\big)\cdot\frac{(w(V))^2}{w(V) + \sum_{v\in V} w(N(v))}$ (for details, see \cite{kako2005approximation}). Without the factor $\frac{1}{2}-\eps$, the two bounds are natural weighted generalizations of the well-known lower bound $\sum_{v\in V}\frac{1}{\deg(v)+1}\geq \frac{n}{\deg_{\mathrm{avg}}+1}$ on the size of a maximum cardinality independent set. The bound is sometimes also known as the Caro-Wei-Tur\'an bound~\cite{turan1941external,Wei1981generalizedturan,griggs1983lower}.
\end{itemize}

\subsection{Overview of Our Method}

In the following, we give an extended high-level description of our general rounding method and we also discuss the main novel ideas that are necessary to obtain the results given in \Cref{sec:contributions}.

\subsubsection[The Rounding of Ghaffari and Kuhn\cite{GhaffariK21} for $(\Delta+1)$-Coloring]{The Rounding of Ghaffari and Kuhn\cite{GhaffariK21} for \boldmath$(\Delta+1)$-Coloring} 

Our generic rounding algorithm is a generalization of a recent deterministic, distributed algorithm by Ghaffari and Kuhn~\cite{GhaffariK21} for solving the $(\Delta+1)$-coloring problem. We therefore first review (a slightly adapted version of) the algorithm of \cite{GhaffariK21}. At the core of the algorithm of \cite{GhaffariK21} is a method to color a constant fraction of the nodes of the graph in time $O(\log^2\Delta)$. Repeating this $O(\log n)$ times then colors the whole graph in time $O(\log^2\Delta\cdot\log n)$.\footnote{Formally, this requires the method for coloring a constant fraction of the nodes to work for the more general $(\mathit{degree}+1)$-list coloring problem. We however ignore this in this high-level overview.} The algorithm for coloring a constant fraction of the nodes is based on derandomizing the following trivial randomized algorithm for the same problem: Every node $v\in V$ chooses one of the $\Delta+1$ colors uniformly at random and $v$ keeps the color if no higher-ID neighbor chooses the same color. When doing this, in expectation, a constant fraction of the nodes can keep their color. One way to see this is by the following analysis. For every edge $\set{u,v}$ of the graph, the probability that both nodes $u$ and $v$ choose the same color is $1/(\Delta+1)$. Hence, the total expected number of such monochromatic edges is at most $|E|/(\Delta+1)<n/2$. For each such edge, we uncolor the lower ID node. The expected number of nodes that do not keep their color is thus less than $n/2$.

\paragraph{Rounding a Fractional Solution.} In \cite{GhaffariK21}, the above step is derandomized by considering the assignment of a uniformly random color to each node as a fractional assignment of colors to nodes, and picking a single color per node is then considered as rounding this fractional color assignment to an integral color assignment. To do this, the algorithm \cite{GhaffariK21} assigns a cost to each edge, which is equal to the probability of both endpoints picking the same color when independently choosing the color at random from the current fractional color assignment. We can then define the total cost of a given fractional color assignment as the sum over the individual edge costs. Note that by the linearity of expectation, the total cost is equal to the expected number of monochromatic edges if each node picks its color randomly according to its current fractional distribution over colors. As discussed above for the initial fractional assignment (i.e., if each node has a uniform distribution over all $\Delta+1$ colors), the total cost is at most $n/2$. The goal now is to find an integral assignment of colors for which the total cost (i.e., the total number of monochromatic edges) is not much larger (say, at most $3n/4$) such that still, a constant fraction of the nodes can keep their colors. First note that if we sequentially iterate over the nodes, there is a simple way to achieve this. When rounding the fractional assignment of a node $v$, $v$ just picks the color that minimizes the sum of the costs of its edges (i.e., the color that minimizes the total number of monochromatic edges of $v$). In this way, we can obtain an integral color assignment for which the total cost is at most the total cost of the initial fractional color assignment and thus at most $n/2$. Because nodes that are non-adjacent do not share edges, they can do this rounding in parallel. If we are given a proper vertex coloring with $C$ colors, the rounding can therefore be done in $O(C)$ rounds in the distributed setting.

\paragraph{Fast Rounding of a Fractional Solution.} 
While the described rounding algorithm can perfectly preserve the cost of a given fractional color assignment, having to iterate through the colors of a proper vertex coloring of the graph will be extremely slow. In this way, we can at best hope for an algorithm with a round complexity that is linear in $\Delta$, which is exponentially slower than what we aim to achieve. In order to obtain a faster rounding algorithm, \cite{GhaffariK21} applies two main ideas. First, instead of iterating over the colors of a proper vertex coloring, the algorithm first computes a defective $C$-coloring and then iterates over the $C$ colors of this coloring. In a defective coloring, each node is allowed to have some neighbors of the same color. By extending ideas of \cite{Kuhn2009WeakColoring,BEG18,KawarabayashiS18}, it is shown in \cite{GhaffariK21} that on a graph $G=(V,E)$ with edge weights $w(e)\geq0$, for a given parameter $\delta>0$, one can compute a vertex coloring with only $O(1/\delta)$ colors such that the total weight of the monochromatic edges is at most a $\delta$-fraction of the total weight of all edges. We call such a coloring a \emph{weighted $\delta$-relative average defective coloring}. If one starts with an initial proper $O(\Delta^2)$-coloring of $G$ (which can be computed in time $O(\log^* n)$ by using an algorithm of \cite{linial92}), a weighted $\delta$-relative average defective $O(1/\delta)$-coloring can be computed deterministically in time $O(1/\delta + \log^*\Delta)$. The idea now is to compute a weighted $\delta$-relative average defective $O(1/\delta)$-coloring where the edge weights are equal to the edge costs before rounding and to run the above simple distributed rounding algorithm on the graph induced by the bichromatic edges (w.r.t.\ the defective coloring). For this, we have to iterate over $O(1/\delta)$ colors, which even when choosing $\delta=1/\poly\log \Delta$ is much faster than iterating over $\Omega(\Delta)$ colors. The problem with the approach is that the cost of monochromatic edges (w.r.t.\ the defective coloring) can now grow arbitrarily. Those edges are not considered during rounding and their cost can increase from initially $1/(\Delta+1)$ to $1$ if both nodes of an edge choose the same color. In order to avoid such drastic cost increases of the monochromatic edges, we have to do the rounding in several small steps. If in each iteration, each node at most doubles the fractional value for each color, the probability of an edge becoming monochromatic can increase by at most a factor $4$, even if the nodes are rounded in a worst-case way. As a consequence, in such a rounding step, the total cost over all edges can grow by at most a factor of $1+O(\delta)$. In the algorithm of \cite{GhaffariK21}, the step-wise rounding is done in such a way that with every iteration, the minimal non-zero fractional color value is at least doubled. One can then get from fractional values $1/O(\Delta)$ to integral values in $O(\log\Delta)$ steps. By choosing $\delta=c/\log\Delta$ for a sufficiently small constant $c>0$, the total cost can grow by at most a constant factor that can be made arbitrarily close to $1$. Since in each rounding step, we have to compute a weighted $\delta$-relative average defective coloring with $O(1/\delta)$ colors and iterate over $O(1/\delta)=O(\log\Delta)$ colors of this defective coloring, each rounding step requires $O(\log\Delta)$ rounds. We can therefore obtain a coloring with at most $3n/4$ monochromatic edges in $O(\log^2\Delta)$ rounds and we can thus properly color at least $n/4$ nodes in $O(\log^2\Delta)$ rounds.

\subsubsection{Extending the Algorithm to Obtain Large Independent Sets}

In the present paper, we generalize the rounding method of \cite{GhaffariK21} in a significant way to make it applicable to a much wider family of problems. Our simplest application of the more general rounding algorithm is a deterministic algorithm for obtaining large (or heavy in the case of node-weighted graphs) independent sets. As this algorithm already allows to highlight some of the key challenges and ideas, we discuss it here first. 

For this high-level discussion, assume that we want to compute an independent set of size $\Omega(n/\Delta)$. There is a simple randomized algorithm to achieve this (in expectation). First, each node marks independently itself with probability $1/\Delta$ and afterward, for every edge $\set{u,v}$, if both $u$ and $v$ are marked, then we unmark the lower-ID node. The set of nodes that are still marked then clearly form an independent set of $G$. The expected number of marked nodes is $n/\Delta$ and for every edge, both nodes are marked with probability $1/\Delta^2$. The expected number of edges for which both nodes are marked is therefore at most $|E|/\Delta^2 \leq n/(2\Delta)$. Therefore, in expectation, we obtain an independent set of size at least $n/(2\Delta)$. In order to obtain a deterministic variant of this algorithm, we can try to adapt the idea of \cite{GhaffariK21}. A fractional independent set is an assignment of fractional values $x_v\in [0,1]$ to each node and we can define a potential function $\Phi(\vec{x})$ that measures the expected size of the resulting independent set when running the randomized algorithm described above, i.e., if each node $v$ gets marked independently with probability $x_v$ and if for every edge $\set{u,v}$, if both $u$ and $v$ are marked, the lower-ID node unmarks itself. When doing this, we obtain
\[
    \Phi(\vec{x}) = \sum_{v\in V} x_v - \sum_{\set{u,v}\in E} \!\!x_u\cdot x_v
    = \sum_{\set{u,v}\in E} \!\!\bigg(\underbrace{\frac{x_u}{\deg(u)}+\frac{x_v}{\deg(v)} - x_u\cdot x_v}_{=: \phi_{\set{u,v}}(\vec{x})}\bigg).
\]
That is, the expected size of the resulting independent set can be expressed by a potential function $\Phi(\vec{x})$ that can be written as a sum over individual edge potentials $\phi_e(\vec{x})$. By iterating over the colors of a proper coloring of the graph $G$, we can therefore again round the fractional value of each node such that the independent set in the resulting rounded solution is at least as large as the expected independent set size initially. However, unlike in the case of the coloring algorithm of \cite{GhaffariK21}, when speeding up defective coloring and rounding gradually, a direct generalization of the algorithm of \cite{GhaffariK21} does not work. We next discuss the reason and explain what we do instead.

\paragraph{Fast Rounding of Large Fractional Independent Sets.} Note that the potential $\Phi(\vec{x})$ can be split into a positive \emph{utility} $\utility(\vec{x})$ and a negative \emph{cost} $\cost(\vec{x})$. We have $\Phi(\vec{x})=\utility(\vec{x}) - \cost(\vec{x}) =\sum_{e\in E} \phi_e(\vec{x}) = \sum_{e\in E} \big(\utility_e(\vec{x}) - \cost_e(\vec{x})\big)$, where
\[
\utility(\vec{x}) = \sum_{v\in V} x_v,\ 
\cost(\vec{x}) = \sum_{e\in E} x_u x_v,\text{ and for }
e=\set{u,v}\in E: 
\utility_{e}(\vec{x}) = \frac{x_u}{\deg(u)}+\frac{x_v}{\deg(v)},\ 
\cost_{e}(\vec{x}) = x_u x_v. 
\]
Assume again that we round gradually such that in a single rounding step each fractional $x_v$ value is at most doubled and that we use a defective coloring to obtain a fast implementation of such a rounding step. We then again have to assume that the rounding on monochromatic edges happens in a worst-case way. However, while gradual rounding guarantees that   $\cost_{e}(\vec{x})$ can increase by at most a factor $4$, we cannot lower bound $\utility_{e}(\vec{x})$ and even if we could, the difference $\phi_e(\vec{x})=\utility_e(\vec{x})-\cost_e(\vec{x})$ can change by an arbitrary factor. Note that before the rounding step we could even have $\phi_e(\vec{x})=0$ so that the multiplicative change in the edge potential even becomes unbounded. 

It is however possible to efficiently implement a gradual rounding step if for the fractional assignment $\vec{x}$ before the rounding, $\utility(\vec{x})$ is by a constant factor larger than $\cost(\vec{x})$, i.e., if for example $\utility(\vec{x})\geq 2\cost(\vec{x})$. In this case, one can compute a weighted $\delta$-relative defective coloring w.r.t.\ edge weights $\utility_e(\vec{x})+\cost_e(\vec{x})$. One can then show that even doing a worst-case gradual rounding step for all the monochromatic edges (w.r.t.\ the defective coloring), the potential decreases by at most $O\big(\delta\cdot(\utility(\vec{x}) +\cost(\vec{x}))\big)=O\big(\delta\cdot\Phi(\vec{x})\big)$. For our initial fractional independent set, the condition  $\utility(\vec{x})\geq 2\cost(\vec{x})$ is satisfied and we can therefore efficiently carry out the first rounding step. However, in a single rounding step, we can only guarantee that the difference $\utility(\vec{x})-\cost(\vec{x})$ is approximately preserved and we cannot guarantee that the terms $\utility(\vec{x})$ and $\cost(\vec{x})$ are approximately preserved individually. We therefore cannot guarantee that the condition $\utility(\vec{x})\geq 2\cost(\vec{x})$ is also approximately preserved. To cope with this challenge, we introduce one additional idea, which we describe next.

\paragraph{Dynamically Adapting the Potential Function.}
Instead of using a single potential function $\Phi(\vec{x})$ for the whole rounding process, we use a sequence $\Phi_0(\vec{x}), \Phi_1(\vec{x}), \dots, \Phi_{T}(\vec{x})$ of potential functions for the $T=O(\log\Delta)$ gradual rounding steps. Assume that initially, we have $\utility(\vec{x})\geq 2\cost(\vec{x})$ and that we are satisfied if we can maintain the value of $\utility(\vec{x}) -\cost(\vec{x})$ within some constant factor. We can then define
\begin{equation}\label{eq:potentialoverview}
  \Phi_i(\vec{x}) := \utility(\vec{x}) - \eta_i\cdot \cost(\vec{x})\text{, where }\eta_i = \frac{1}{2}\left(3 -   \frac{i}{T}\right).
\end{equation}
That is, we have $\Phi_0(\vec{x}) = \utility(\vec{x}) - \frac{3}{2}\cdot \cost(\vec{x})$, $\Phi_T(\vec{x}) = \utility(\vec{x}) - \cost(\vec{x})$, and $\Phi_{i+1}(\vec{x}) - \Phi_i(\vec{x}) = \frac{3}{2T}\cdot\cost(\vec{x})$. In rounding step $i$, we use the potential function $\Phi_i$ for the rounding. By slightly increasing the gap between the positive and the negative term in the potential function for each rounding step, we can make sure that $\utility(\vec{x})$ is always sufficiently larger than $\cost(\vec{x})$ in order to efficiently perform the rounding step and to make sure that after the rounding step $i$, $\Phi_i$ is within a $(1-O(\delta))$-factor of $\Phi_{i-1}$. That is, in the end, we obtain an integral assignment $\vec{x}'$ for which $\utility(\vec{x}')- \cost(\vec{x}')$ is within a constant factor of $\utility(\vec{x}) - \frac{3}{2}\cdot\cost(\vec{x})$ for the original fractional assignment $\vec{x}$. Overall, we obtain an algorithm to obtain an independent set of size $\Omega(n/\Delta)$ in $O(\log^2\Delta + \log^* n)$ rounds, where the $O(\log^* n)$-term comes from computing a proper $O(\Delta^2)$-coloring of $G$ at the very beginning of the algorithm. The technical details of the rounding procedure appear in \Cref{sec:basicrounding}.

The described rounding method for independent sets works for all fractional independent set assignments $\vec{x}$ with $\utility(\vec{x})\geq 2\cost(\vec{x})$. In this way, we can directly obtain an independent of size $n/\deg_{\mathrm{avg}}$, where $\deg_{\mathrm{avg}}$ denotes the average degree, and analogously in weighted graphs, we can obtain an independent set of total weight asymptotically at least a weighted average degree fractional of the total weight. For the corresponding definitions and details, we refer to \Cref{sec:turan}. In the same way, in graphs of neighborhood independence at most $\beta$, we can obtain a $1/O(\beta)$-approximation to the maximum weight independent set problem in time $O(\log^2(\Delta W)+\log^* n)$ (where $W$ is the ratio between largest and smallest weight). By adapting an idea of \cite{KawarabayashiKS20} and repeating the $1/O(\beta)$-approximation algorithm in an appropriate way, at the cost of an additional $\log(1/\eps)$ factor in the round complexity, we can even obtain a $(1-\eps)/\beta$-approximation for maximum weight independent set.

\subsubsection{Our Generic Rounding Algorithm}
\label{sec:genericroundingsummary}
In \Cref{sec:genericrounding}, we describe a general rounding algorithm that is based on the ideas discussed for the maximum independent set problem above. We consider local graph problems where each node has to choose one label from a finite alphabet $\Labels$ of labels. In the $(\Delta+1)$-coloring problem, the set of labels is $\Labels=\set{1,\dots,\Delta+1}$ or more generally the colors from some given color space if we consider the list coloring variant of the problem. In the independent set problem discussed above, there are only two possible labels, either a node decides to be in the independent set or it decides not to be in the independent set. In fact, in all problems for which is provide a new algorithm in this paper, we only have two labels. However, to cover the larger range of possible problems, the generic algorithm in \Cref{sec:genericrounding} is written for an arbitrary set $\Labels$ of labels.

As in the maximum independent set example above, the quality of an assignment of labels to the nodes is measured by utility function $\utility(\cdot)$ and a cost function $\cost(\cdot)$. Both functions can be expressed as the sum of pairwise functions and the set of node pairs that contribute to the overall utility and/or cost naturally define a graph $H=(V_H,E_H)$ on the set of nodes. That is, utility and cost are defined as the sum of individual edge utilities and costs over the edges in $E_H$. Sometimes this graph $H$ is equal to the communication graph (such as in the discussed vertex coloring and  maximum independent set examples) and sometimes $H$ is a graph that can be simulated efficiently by a distributed algorithm on the network graph (such as in the examples of MIS and set cover discussed below). In the latter case, the graph $H$ might also contain multiple edges between the same pair of nodes and those parallel edges might be simulated in different ways on the underlying communication graph. In our formal set-up, we therefore think of $H$ as a multigraph. A fractional label assignment $\lambda$ assigns a fractional value $\lambda_\alpha(v)\in[0,1]$ for every label $\alpha\in \Labels$ to every node $v\in V$ such that for all $v\in V$, $\sum_{\alpha\in \Labels} \lambda_\alpha(v)=1$, i.e., the fractional assignment defines a probability distribution over the labels to each node. The utility and cost of a fractional assignment are given as the expected utility and cost values when each node $v$ picks its label independently according to the distribution given by its fractional values $\lambda_\alpha(v)$.

Given some fractional label assignment $\lambda$, the goal of the rounding algorithm is to obtain an integral label assignment $\lambda'$ for which $\utility(\lambda')-\cost(\lambda')\geq (1-\eps)\cdot\big(\utility(\lambda)-\cost(\lambda\big)$ for some given parameter $\eps>0$. If initially $\utility(\lambda)-\cost(\lambda)\geq \mu \utility(\lambda)$ for some $\mu\in(0,1]$, our rounding algorithm achieves this in time linear proportional $1/(\eps \mu)$. Our algorithm is based on gradually rounding the solution as described above for rounding fractional independent set solutions. More specifically, for some integer $k\geq 0$, we call a fractional label assignment $\lambda$ $2^{-k}$-integral if all fractional values $\lambda_\alpha(v)$ are integer multiples of $2^{-k}$. In a single rounding step, we turn a $2^{-k}$-integral solution into a $2^{-(k-1)}$-integral one. In the general case, the rounding is implemented in exactly the same way as sketched above for the case of independent sets. in particular, we have to use a dynamically changing potential function analogous to \eqref{eq:potentialoverview} in order to keep utility and cost sufficiently separated to make each rounding step efficient. In each step, we have to use a weighted average defective coloring with $O(k/(\eps \mu))$ colors so that an implementation of a single rounding step would take $O(k/(\eps \mu))$ rounds when running it on $H$ and with potentially large messages. Below, we discuss some of the ideas that are necessary to implement the rounding algorithm with small messages. The precise formal setup, the detailed generic rounding algorithm, its analysis, and implementation in the distributed setting appear in \Cref{sec:genericrounding}.

\subsubsection{Maximal Independent Set}

The standard randomized distributed algorithm to compute an MIS of a graph $G=(V,E)$ is an algorithm that is usually referred to as Luby's algorithm~\cite{alon86,luby86}. One version of this algorithm consists of iteratively applying the following basic step. Every node $v\in V$ marks itself with probability $1/(2\deg(v))$. Each marked node $u$ joins the independent set unless $u$ has a higher priority marked neighbor $v$. A neighbor $v$ has higher priority if either $\deg(v)>\deg(u)$ or if $\deg(v)=\deg(u)$ and if $v$ has a larger ID than $u$. After that, the new independent set nodes and their neighbors are removed from the graph and the step is again applied on the remaining graph. The standard analysis shows that, in expectation, a constant fraction of the edges of the graph are removed in each step. This is shown by defining a set of good nodes, showing that a constant fraction of all edges are incident to at least one good node, and showing that good nodes are removed with constant probability. In \Cref{sec:MIS}, we show that this algorithm and analysis can be adapted to the setting to which we can apply our generic rounding method. This allows to deterministically find an independent set that removes a constant fraction of the good nodes and thus a constant fraction of all the edges.

To illustrate some of the key ideas, in this overview, we consider a significantly simpler setting. We assume that $G$ is $\Delta$-regular and we discuss how to find an independent set that removes a constant fraction of all the nodes in this case. In \Cref{sec:MIS}, we do a careful per-node analysis to count the number of edges incident on good nodes that are removed. Here, where things are more symmetric, we can directly analyze the total number of nodes that are removed. 

For the initial fractional solution, we set $x_v=1/(2\Delta)$ for each node $v\in V$. After rounding to an integral solution, we proceed as in the independent set algorithm above. For every edge $\set{u,v}$ on which both nodes are marked, we unmark the lower-ID node. The set of marked nodes then forms an independent set. In order for this independent set to remove a constant fraction of the nodes, we have to build a utility and a cost function that measures the total number of nodes that get removed. Each node $v\in V$ can remove all its $\Delta$ neighbors if $v$ joins the independent set. We therefore define the overall utility as $\utility(\vec{x})=\Delta\cdot\sum_{v\in V} x_v$. In order to lower bound the number of nodes that get removed, we have to (a) deduct the number of neighbors of nodes that get unmarked because they have a higher-ID marked neighbor and (b) make sure that we do not double count the removal of nodes. For (a), we can proceed in the same way as when computing large independent sets. For each edge $\set{u,v}$ for which both nodes are marked, the lower-ID node gets unmarked and we have to subtract $\Delta$ from the number of removed nodes. Therefore, the cost function contains a term of the form $\Delta\cdot\sum_{\set{u,v}\in E}x_u x_v$. For (b), note that the removal of a node $v$ is counted $\kappa$ times if $\kappa$ neighbors of $v$ join the independent set. Hence, the removal of node $v$ is counted at most $\kappa$ times if $\kappa$ neighbors of $v$ are initially marked. Further, the number of marked neighbors of $v$ minus $1$ (i.e., the number of times $v$'s removal is overcounted) can clearly be upper bounded by the total number of node pairs among the neighbors of $v$ such that both nodes in the pair are marked. The cost function therefore also contains a second term that upper bounds the number of overcounted node removals as $\sum_{u\in V}\sum_{\set{v,w}\in \binom{N(u)}{2}}x_v x_w$, where $\binom{N(u)}{2}$ denotes the set of $2$-element subsets of $N(u)$. We can therefore lower bound the expected number of removed nodes as
\[
\utility(\vec{x}) - \cost(\vec{x}) = \Delta\cdot\sum_{u\in V} x_u\  -\ \Delta\cdot\!\!\sum_{\set{u,v}\in E} x_u\cdot x_v - \sum_{u\in V}\sum_{\set{v,w}\in \binom{N(u)}{2}} x_v\cdot x_w.
\]
Note that the utility and cost functions can be expressed as sums over individual edge utilities and costs, if we do this on a graph $H$ for which we add an edge between any two nodes at distance $2$ in $G$. By using our generic rounding algorithm, we can therefore round a fractional solution to an integral one of approximately the same quality (i.e., an independent set that removes approximately $\utility(\vec{x}) - \cost(\vec{x})$ nodes, where $\vec{x}$ is the initial fractional assignment). For the initial fractional assignment $x_v=1/(2\Delta)$ for all $v\in V$, the expected number of removed nodes can be lower bounded by
\[
\utility(\vec{x}) - \cost(\vec{x}) = \frac{n}{2} - \Delta\cdot\frac{n\Delta}{2}\cdot\frac{1}{4\Delta^2} -
n\cdot \binom{\Delta}{2} \cdot \frac{1}{4\Delta^2}\geq \frac{n}{4}.
\]
Since we start with a $\Theta(1/\Delta)$-fractional assignment, the rounding will require $O(\log\Delta)$ steps and each step requires $O(\log\Delta)$ rounds (when communicating on $H$). In time $O(\log^2\Delta)$ we can therefore compute an independent set that removes a constant fraction of the nodes (after computing an initial $O(\Delta^2)$-coloring in time $O(\log^* n)$ at the very beginning of the algorithm). We show in \Cref{sec:MIS} that, in the same time complexity and with a natural generalization of the same cost function, we can compute an independent set that removes a constant fraction of the edges in time $O(\log^2\Delta)$ in the \LOCAL model. Therefore, we  obtain an $O(\log^2\Delta\cdot\log n)$-round deterministic MIS algorithm for the \LOCAL model. 

\subsubsection{Set Cover Approximation}
We do not provide a complete overview of our set cover approximation algorithm here, however, we remark one key difference: In the set cover problem, we have $O(\log s)$ iterations of rounding the fractional solution (where $s$ is the maximum set size). To avoid the need for gathering the parameters of the instance remaining after each iteration (which would require global communication), we define a potential function (in the format of utility minus cost) for the entirety of the iterations. This is unlike the maximal independent set solution, where each iteration is handled separately, with its own potential function. The particular potential function we use and our gradual rounding ensure that, while per iteration we lose a constant factor in comparison to what the randomized algorithm would achieve, over the entire span of the $O(\log s)$ iterations, we still manage to cover all but a $1/poly(s)$ fraction of elements. After that, a simple greedy step of taking a set for each remaining element finishes the problem at a negligible cost. We refer to \Cref{sec:setcover} for the details and the algorithm.

\subsubsection{Implementation With Small Messages}
Note that both for computing an MIS and for computing a small set cover, the rounding graph includes edges that are between nodes at distance of $2$ in the communication graph $G$. While such edges are trivial to simulate in the \LOCAL model with only a factor $2$ overhead, the same is not true for the \CONGEST model. That is, when  running a \CONGEST algorithm on the rounding graph $H$ on the underlying communication graph $G$, the necessary message size on $G$ can potentially grow by a factor linear in $\Delta$, which would not be feasible within $\poly(\log n)$-time. Therefore, to obtain efficient \CONGEST algorithms on $G$ that perform rounding algorithms on $H$, we cannot work with a black-box interface; we have to instead examine the actual communication needed for the rounding. 

For each virtual edge of $H$ which is between two nodes $v$ and $u$ that are at distance $2$ and have a common neighbor $w$, we make $w$ the manager of this virtual edge and let $w$ simulate the computations and communications relevant for this edge. Notice that a node $v$ might not even know all of its virtual edges. However, the managers of all those virtual edges are direct neighbors of $v$. Since $v$ can share its state (e.g., its current color, or its current fractional label assignment) efficiently with its neighbors, the managers are able to simulate the computations and communications of the virtual edges. The only possible difficulty is that this information has to be delivered back to node $v$, who is the endpoint of those virtual edges and will use this information to take some action, e.g., set its new color or increase/decrease its fractional value. Fortunately, in all cases, simply aggregating the information that needs to be sent back to $v$ suffices, so each manager $w$ can send to its direct neighbor $v$ some aggregate information about all the virtual edges that $w$ is managing for $v$. Furthermore, we note that doing this naively would result in an additional $\log \Delta$ factor in the round complexity, due to the message sizes in a part of the defective coloring procedure. However, fortunately, in that case, we can make do with a $2$ approximate aggregation of the values, instead of the exact values, and this reduces the round complexity loss exponentially, to only a $\log\log \Delta$ factor (which is essentially the only difference between the round complexity in the \LOCAL model and that of the \CONGEST model. Please see \Cref{sec:d2communication} for details.

\subsection{Mathematical Preliminaries and Notation}
We use $\Rp:=\set{x\in \mathbb{R} : x\geq 0}$ to denote the non-negative reals. Further, for a positive integer $N$, we use $[N]$ as a shortcut for $[N]:=\set{1,\dots,N}$. For an undirected multigraph $G=(V,E)$ and a node $v\in V$, we use the following notation. For every edge $e$, we use $V(e)$ to denote the set of nodes of $e\in E$ (note that if $G$ has no self-loops, $|V(e)|=2$ for all $e\in E$). Further, for every node $v\in V$, we use $E(v)\subseteq E$ to denote the set of edges that contain node $v$. For convenience, for any subsets $F\subseteq E$ of the edges and any node $v\in V$, we also define the shortcut $F(v):=E(v)\cap F$. If $G$ is equipped with edge (or node) weights $w:E\to \R$ (or $w:V\to \R$), for any set $F\subseteq E$ (or $U\subseteq V$), we use $w(F)$ (or $w(U)$) to denote the sum of the weights of all edges in $F$ (or nodes in $U$). When dealing with simple graphs, we also use the more standard definition and consider the edges $E$ as a subset of $\binom{V}{2}$, where $\binom{V}{2}$ denotes the set of $2$-element subsets of $V$. For a node $v\in V$, we use $N(v)$ to denote the set of neighbors of $v$ and we use $N^+(v):=\set{v}\cup N(v)$ to denote the inclusive neighborhood of $v$. For a graph $G=(V,E)$, we will typically use $n$ and $\Delta$ to refer to the number of nodes and the maximum degree of $G$, respectively. Finally, we say that a graph $G=(V,E)$ has \emph{neighborhood independence} $\beta\geq 1$ if for every node $v\in V$, the the largest independent set of the subgraph induced by $N(v)$ (or equivalently by $N^+(v)$) is of size at most $\beta$. Note that in particular, line graphs of hypergraphs of rank $r$ have neighborhood independence at most $r$. 

As a technical tool to obtain fast deterministic rounding algorithms, we need to compute \emph{defective colorings} of the graph. More concretely, we need an edge-weighted version of standard defective coloring as it was introduced in \cite{KawarabayashiS18}. In order to obtain efficient defective coloring algorithms, we use a relaxed variant of defective coloring in which the (relative) defect per node only needs to be small on average.

\begin{definition}[Weighted Defective and Weighted Average Defective Coloring]\label{def:avgdefcoloring}
  Given a weighted undirected multigraph $G=(V,E,w)$ with non-negative edge weights $w(e)\geq 0$ for all $e\in E$, a parameter $\eps>0$, and an integer $C\geq 1$, a \emph{weighted $\eps$-relative defective $C$-coloring} of $G$ is an assignment $\varphi:V\to [C]$ of colors in $[C]$ to the nodes of $G$ such that
  \[
  \forall v\in V\,:\,\sum_{e \in E(v), \, V(e)=\{v, u\}}\1_{\set{\varphi(u)=\varphi(v)}}\cdot
  w(e) \ \leq\  \eps\cdot\sum_{e \in E(v)} w(e).
  \]
  The coloring is called a \emph{weighted average $\eps$-relative defective $C$-coloring} if the above condition only holds on average over all nodes:
  \[
  \sum_{v\in V}\,\,\sum_{e \in E(v), \, V(e)=\{v, u\}}\1_{\set{\varphi(u)=\varphi(v)}}\cdot
  w(e) \ \leq\  \eps\cdot\sum_{v\in V}\sum_{e \in E(v)} w(e).
  \]
\end{definition}

As shown in \cite{KawarabayashiS18}, by slightly adapting an algorithm of \cite{Kuhn2009WeakColoring}, an $\eps$-relative defective $O(1/\eps^2)$-coloring can be computed in time $O(\log^* \xi)$ in the \CONGEST model, if an initial proper vertex coloring with $\xi$ colors is given. Further, by using algorithm of \cite{GhaffariK21} (or by adapting an  coloring algorithm of \cite{BEG18}), an $\eps$-relative average defective $O(1/\eps)$-coloring can be computed in time $O(1/\eps + \log^* \xi)$. We also note that the notion of average defect is a relaxation of the notion of arbdefect that was introduced in \cite{barenboim10}.
 

\section{Generic Distributed Rounding Algorithm}
\label{sec:genericrounding}

\subsection{Setup: Multigraph with Edge Utilities and Costs}
\label{sec:roundingsetup}

Let $H=(V,E)$ be an undirected multigraph with no self-loops and let $\Labels$ be a finite set of possible node labels. A label assignment for $H$ is a function $\ell: V\to \Labels$ that assigns a label $\alpha=\ell(v)\in \Labels$ to each node $v\in V$. A fractional label assignment $\lambda:V\to [0,1]^{|\Labels|}$ for $H$ is an assignment of a distribution (i.e., convex combination) of labels to each node. That is, for each node $v\in V$, we have $\sum_{\alpha\in \Labels}\lambda_\alpha(v)=1$, where we use $\lambda_\alpha(v)\in [0,1]$ to refer to the component corresponding to label $\alpha$ in the vector $\lambda(v)$. For an integer $K\geq 1$, a fractional label assignment $\lambda$ is called $1/K$-integral if for every $v\in V$ and every $\alpha\in \Labels$, $\lambda_\alpha(v)=a/K$ for some integer $a\geq 0$. For a set of nodes $S\subseteq V$ and a label assignment $\ell$ (or a fractional label assignment $\lambda$), we use $\ell(S)$ (or $\lambda(S)$) to denote the (fractional) label assignment to the set of nodes in $S$. Similarly for an edge $e\in E$, we use $\ell(e)$ and $\lambda(e)$ to denote $\ell(V(e))$ and $\lambda(V(e))$. Further, for each node $v$, we use $\calL(v)$ (and $\Lambda(v)$) to denote the set of possible (fractional) label assignments to $v$ and we similarly use $\calL(S)$, $\Lambda(S)$, $\calL(e)$, $\Lambda(e)$ for node sets $S$ and edges $e$.

For a multi-graph $H=(V,E)$, we define a non-negative utility function $\utility:E\times \calL(V) \to \Rp$ and a non-negative cost function $\cost:E\times \calL(V) \to \Rp$. 
That is, for a given label assignment $\ell\in \calL(V)$, $\utility$ and $\cost$ assign non-negative utility and cost values $\utility(e,\ell)$ and $\cost(e,\ell)$ to every edge $e\in E$. Although utility and cost of an edge $e$ are defined as a function of the label assignment to all nodes in $V$, they must only depend on the label assignment to the two nodes $V(e)$ of $e$. That is, for any two label assignments $\ell,\ell'\in \calL(V)$ such that $\ell(e)=\ell'(e)$ (i.e., $\ell$ and $\ell'$ assign the same labels to the nodes in $V(e)$), it must hold that $\utility(e,\ell)=\utility(e,\ell')$ and $\cost(e,\ell)=\cost(e,\ell')$. 

We slightly overload notation and also define $\utility$ and $\cost$ for a fractional label assignment $\lambda\in\Lambda(V)$. In this case, utility and cost of $e$ are defined as the expected values of $\utility$ and $\cost$ if the labels of the two nodes $V(e)$ of $e$ are chosen independently at random from the distributions given by the fractional label assignment. That is, if $V(e)=\set{u,v}$ and if for each $(\alpha,\beta)\in \Sigma^2$, $\ell_{\alpha,\beta}(e)$ is a label assignment that assigns label $\alpha$ to $u$ and label $\beta$ to $v$, then for $\lambda\in \Lambda(V)$, we have
\begin{equation}\label{eq:fractionalutilitycost}
    \utility(e, \lambda) := \!\!\sum_{(\alpha,\beta)\in \Sigma^2}\!\!
    \lambda_\alpha(u)\cdot\lambda_\beta(v)\cdot
    \utility(e,\ell_{\alpha,\beta}(e))
    \quad\text{and}\quad
    \cost(e, \lambda) := \!\!\sum_{(\alpha,\beta)\in \Sigma^2}\!\!
    \lambda_\alpha(u)\cdot\lambda_\beta(v)\cdot
    \cost(e, \ell_{\alpha,\beta}(e)).
\end{equation}
Finally, for a set of edges $F\subseteq E$ and a label assignment $\ell$, we use $\utility(F,\ell)=\sum_{e\in F}\utility(e,\ell)$, and $\cost(F,\ell)=\sum_{e\in F}\cost(e,\ell)$. The definitions extend analogously if the label assignment $\ell$ is replaced with a fractional label assignment $\lambda$. Finally, to denote the total cost and utility of a label assignment $\ell$, we define $\utility(\ell):=\utility(E,\ell)$ and $\cost(\ell):=\cost(E,\ell)$. Again, the definition extends analogously if $\ell$ is replaced with a fractional label assignment $\lambda$.

\paragraph{Polynomially Bounded Instances.} In some cases, the message sizes (or the time complexities in case message sizes are bounded) of our algorithms depend on the range of values that utilities and costs can have. We say that a given instance (i.e., multigraph with corresponding utility and cost functions and with a fractional label assignment $\lambda$) is \emph{polynomially bounded in $q$} for some parameter $q>1$ if there is a value $Q\leq q^d$ for some constant $d>0$ such that every non-zero edge utility or cost is lower bounded by a value $1/Q$ and upper bounded by $Q$ and such that any non-zero fractional value $\lambda_\alpha(v)\geq 1/Q$. Most of our instances will be polynomially bounded in the maximum degree $\Delta$ of the graph.

\subsection{Abstract Basic Rounding Algorithm}
\label{sec:basicrounding}

In the following, we first describe our local rounding algorithm in an abstract form, which is somewhat independent of the specific communication model. In the concrete distributed implementation of the algorithm, we will assume that in the multigraph $G=(V,E)$ on which we run the rounding, the nodes $V$ are active entities that locally perform their part of the rounding algorithm. In some applications, some edges of $G$ might however not be physical communication links, so the communication between nodes in $V$ depends on the relation between $G$ and the underlying physical communication network. In \Cref{alg:basicroundingstep}, we first describe the algorithm for a single rounding step in detail. The full rounding procedure is then given by \Cref{alg:basicrounding}.

\begin{algorithm}[ht]
  \caption{Basic Rounding Step, Parameters $\delta\in[0,1]$, $\eta\geq 1$}\label{alg:basicroundingstep}
  \textbf{Given:} multigraph $G=(V,E)$ with utility and cost functions $\utility$ and $\cost$ and with a $1/(2K)$-integral fractional label assignment $\lambda$ (for a given integer $K\geq 1$).
\begin{algorithmic}[1]
  \State For each $e\in E$, define weight $w_e := \utility(e,\lambda)+ \eta\cost(e,\lambda)$
  \State Compute a weighted $\delta/6$-relative average defective $p$-coloring of $G$ w.r.t.\ the edge weights $w_e$
  \State Define $E_b\subseteq E$ to be the set of bichromatic edges (w.r.t.\ the defective coloring)
  \ForAll {colors $\gamma\in \set{1,\dots,p}$}
  \ForAll {nodes $v\in V$ of color $\gamma$ (in the defective coloring) \textbf{in parallel}}
  \State ${\Labels}_v:=\set{\alpha\in \Labels : \lambda_\alpha(v)=\frac{2i+1}{2K}\text{ for some integer }i\geq 0}$
  \ForAll {labels $\alpha\in{\Labels}_v$}
  \State Define $\lambda^{(v,\alpha)}$ s.t.\ $\lambda^{(v,\alpha)}_\alpha(v)=1$, $\lambda^{(v,\alpha)}_\beta(v)=0$ for $\beta\neq \alpha$, and $\lambda^{(v,\alpha)}=\lambda$ otherwise
  \State Define $\phi_{v,\alpha}:=\utility\big(E_b(v),\lambda^{(v,\alpha)}\big)-\eta\cdot\cost\big(E_b(v),\lambda^{(v,\alpha)}\big)$
  \State Define $\theta_{v,\alpha}:=\utility\big(E_b(v),\lambda^{(v,\alpha)}\big)+\eta\cdot\cost\big(E_b(v),\lambda^{(v,\alpha)}\big)$.
  \State $v$ obtains estimate $\hat{\phi}_{v,\alpha}$ satisfying $\phi_{v,\alpha}\geq \hat{\phi}_{v,\alpha}\geq 
  \phi_{v,\alpha} - \frac{\delta}{6}\cdot\theta_{v,\alpha}$
  \EndFor
  \State Partition ${\Labels}_v$ into ${\Labels}_v^+$ and ${\Labels}_v^-$ s.t.\ $|{\Labels}_v^+|=|{\Labels}_v^-|$ and s.t.\ $\forall(\alpha,\beta)\in {\Labels}_v^+\times {\Labels}_v^-$, $\hat{\phi}_{v,\alpha}\geq \hat{\phi}_{v,\beta}$
  \State For $\alpha\in {\Labels}_v^+$, set $\lambda_\alpha(v):=\lambda_{\alpha}(v) + \frac{1}{2K}$ and for $\alpha\in {\Labels}_v^-$, set $\lambda_\alpha(v):=\lambda_{\alpha}(v) - \frac{1}{2K}$
  \EndFor
  \EndFor
\end{algorithmic}
\textit{For line 12 of the algorithm, note that the set $\Labels_v$ must have an even size. Therefore, it can be split into $\Labels_v^+$ and $\Labels_v^-$ such that $|{\Labels}_v^+|=|{\Labels}_v^-|$.}
\end{algorithm}

\begin{lemma}\label{lemma:basicroundingstep}
  Let $G=(V,E)$ be a multigraph, which is equipped with utility and cost functions $\utility$ and $\cost$ and with a $1/(2K)$-integral fractional label assignment $\lambda$ for a given integer $K\geq 1$. Let $\delta\in[0,1]$ and $\eta\geq 1$ be two parameters. Assume that \Cref{alg:basicroundingstep} is run on $G$ with the given parameters and let $\lambda'$ be the fractional label assignment after running the algorithm. Then, $\lambda'$ is a $1/K$-integral fractional label assignment and
  \[
    \utility(\lambda') - \eta\cost(\lambda') \geq \utility(\lambda) - \eta\cost(\lambda) -
    \delta\cdot\big(\utility(\lambda)+\eta\cost(\lambda)\big).
  \]
\end{lemma}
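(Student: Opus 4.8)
The plan is to analyze the rounding step color-by-color, showing that processing a single color class of the defective coloring changes the potential $\utility - \eta\cost$ by only a small amount controlled by the weights $w_e$ on the bichromatic edges incident to that color class. Summing over all $p$ colors then yields the claimed bound. Let me set up notation: let $\lambda^{(0)} = \lambda$ be the assignment before the step and $\lambda^{(\gamma)}$ the assignment after processing colors $1,\dots,\gamma$, so $\lambda' = \lambda^{(p)}$. The key structural fact is that when we process color $\gamma$, the only nodes whose fractional assignment changes are those of color $\gamma$, and these form an independent set in the graph $(V, E_b)$ of bichromatic edges. Hence for any edge $e\in E_b$, at most one endpoint changes, and the edge's contribution to utility/cost only depends on that one endpoint's change.

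First I would verify that $\lambda'$ is $1/K$-integral. Each node $v$ has its fractional values that are odd multiples of $1/(2K)$ (the set $\Labels_v$) adjusted by $\pm 1/(2K)$, turning them into even multiples of $1/(2K)$, i.e., multiples of $1/K$; values already even multiples of $1/(2K)$ are untouched. Since $|\Labels_v^+| = |\Labels_v^-|$, the values still sum to $1$, so $\lambda'(v)$ is a valid $1/K$-integral distribution.

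The heart of the argument is the per-color bound. Fix color $\gamma$ and a node $v$ of that color. Because utility and cost on an edge $e$ are multilinear in the fractional assignments of $V(e)$ (by \eqref{eq:fractionalutilitycost}), and because before processing $v$ its fractional vector $\lambda(v)$ is the average $\tfrac{1}{2}\sum_{\alpha\in\Labels_v}\lambda^{(v,\alpha)}(v)$ of the "pushed" vectors $\lambda^{(v,\alpha)}$ weighted appropriately — more precisely, the change $\lambda_\alpha(v) \mapsto \lambda_\alpha(v) \pm \tfrac{1}{2K}$ can be rewritten so that the new contribution of $E_b(v)$ to $\utility - \eta\cost$ equals a convex-type combination governed by the quantities $\phi_{v,\alpha}$ — the greedy choice of $\Labels_v^+$ versus $\Labels_v^-$ using the estimates $\hat\phi_{v,\alpha}$ ensures the new value of $\utility(E_b(v),\cdot) - \eta\cost(E_b(v),\cdot)$ is at least its old value, up to the estimation error. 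Concretely: summing $\sum_{\alpha\in\Labels_v^+}\phi_{v,\alpha} - \sum_{\alpha\in\Labels_v^-}\phi_{v,\alpha}$ is, by the pairing property $\hat\phi_{v,\alpha}\ge\hat\phi_{v,\beta}$ for $\alpha\in\Labels_v^+,\beta\in\Labels_v^-$ together with $\phi_{v,\alpha}\ge\hat\phi_{v,\alpha}\ge\phi_{v,\alpha}-\tfrac{\delta}{6}\theta_{v,\alpha}$, at least $-\tfrac{\delta}{6}\sum_{\alpha\in\Labels_v}\theta_{v,\alpha}$. This shows the change in $\utility(E_b,\cdot) - \eta\cost(E_b,\cdot)$ attributable to node $v$ is at least $-\frac{\delta}{6}\cdot(\text{something bounded by }\sum_{e\in E_b(v)} w_e)$, since each $\theta_{v,\alpha}$ is a sum over $E_b(v)$ of at most $w_e$-sized terms (using that the "pushed" vector has an entry equal to $1$ and the multilinear structure bounds $\utility(e,\lambda^{(v,\alpha)}) + \eta\cost(e,\lambda^{(v,\alpha)})$ by a constant times $\utility(e,\lambda)+\eta\cost(e,\lambda) = w_e$, once we account for the $1/(2K)$-integrality meaning $\lambda_\alpha(v)\le 1$ gives room; the relevant constant is absorbed into the "$6$"). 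Here I also need that utilities and costs on monochromatic edges never enter, and that they do not decrease — but monochromatic edges $e\notin E_b$ are simply unchanged in this sub-step since their defining endpoints of color $\gamma$ also have color $\gamma$ on the other side and at most one of them moves... actually the subtle point is that a monochromatic edge has *both* endpoints of the same color, and we process all nodes of color $\gamma$ in parallel, so a monochromatic edge of color $\gamma$ could have *both* endpoints change. The resolution is that we only track the contribution of $E_b$ and note $\utility,\cost \ge 0$, so we lower-bound $\utility(\lambda') - \eta\cost(\lambda')$... no — this is exactly the main obstacle (see below).

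The main obstacle I anticipate is handling the monochromatic edges correctly. The rounding is performed "blindly" with respect to them: their utilities and costs can move arbitrarily because both endpoints may change simultaneously. The standard fix — and what I expect the proof to do — is a telescoping/amortization argument using the weighted \emph{average} defective coloring guarantee: the total weight $\sum_{e\notin E_b} w_e \le \frac{\delta}{6}\sum_{e\in E} w_e = \frac{\delta}{6}(\utility(\lambda)+\eta\cost(\lambda))$ of monochromatic edges is small, so even if each such edge's potential contribution swings by its full weight $w_e$ (this requires a bound like: $|\utility(e,\lambda') - \eta\cost(e,\lambda')| \le $ const $\cdot w_e$ whenever both endpoints at most double — here "doubling" must be replaced by the $\pm 1/(2K)$ additive change restricted to odd multiples, which ensures a multiplicative factor at most $2$ per affected coordinate hence factor at most $4$ on any bilinear term, an argument inherited from the discussion preceding \Cref{def:avgdefcoloring}), the total loss from monochromatic edges is $O(\delta)\cdot(\utility(\lambda)+\eta\cost(\lambda))$. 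Combining: the bichromatic part loses at most $\frac{\delta}{6}\sum_{v}\sum_{e\in E_b(v)} w_e \le \frac{\delta}{3}\sum_e w_e$ (each edge counted twice) from estimation error, and the monochromatic part loses at most another $O(\delta)\sum_e w_e$; choosing the constants in "$6$" appropriately makes the total at most $\delta(\utility(\lambda)+\eta\cost(\lambda))$. I would carry out the argument in this order: (1) $1/K$-integrality; (2) fix notation $\lambda^{(\gamma)}$ and isolate that only color-$\gamma$ nodes move; (3) per-color, per-node lower bound on the change of the $E_b$-restricted potential via the greedy pairing and the $\hat\phi$ estimates; (4) sum over nodes in color $\gamma$ and over all colors, getting the bichromatic loss; (5) bound the total monochromatic weight via the average-defect property and the per-edge "factor-$4$" swing bound, getting the monochromatic loss; (6) add up and choose constants.
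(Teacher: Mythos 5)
Your proposal follows essentially the same route as the paper's proof: split the potential change into bichromatic and monochromatic contributions; for bichromatic edges, use the multilinearity of \eqref{eq:fractionalutilitycost} to write the local potential as $\sum_\alpha \lambda_\alpha(v)\phi_{v,\alpha}$ and lower-bound the per-node change $\varphi_v$ via the greedy $\Labels_v^+/\Labels_v^-$ pairing together with the $\hat\phi_{v,\alpha}$ estimation error; for monochromatic edges, bound the total weight by the average-defect guarantee and the per-edge swing by a constant factor. One place your constants do not quite close against the algorithm's fixed defect parameter $\delta/6$: bounding the monochromatic per-edge swing crudely by $4w_e$ (from $|\utility(e,\lambda')-\eta\cost(e,\lambda')|\le 4\eta\cost(e,\lambda)\le 4w_e$) plus the bichromatic loss of $\tfrac{\delta}{2}\sum_e w_e$ (factor $1+2=3$ over the two endpoints, times $\delta/6$) totals $\tfrac{2\delta}{3}+\tfrac{\delta}{2}>\delta$. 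The paper avoids this by the sharper rewriting $-4\eta\cost(E_m,\lambda)=\big(\utility(E_m,\lambda)-\eta\cost(E_m,\lambda)\big)-\utility(E_m,\lambda)-3\eta\cost(E_m,\lambda)\ge\big(\utility(E_m,\lambda)-\eta\cost(E_m,\lambda)\big)-3\big(\utility(E_m,\lambda)+\eta\cost(E_m,\lambda)\big)$, so the monochromatic loss is only $\tfrac{\delta}{2}\sum_e w_e$ and the two halves add exactly to $\delta$; this is a local sharpening, not a different method.
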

\begin{proof}
  First, the algorithm only changes fractional values $\lambda_\alpha(v)$ of the form $(2i+1)/(2K)$ and all those values are either decremented or incremented by exactly $1/(2K)$. The fractional label assignment at the end of the algorithm is thus $1/K$-integral.
  
  For each color $\gamma\in\set{1,\dots,p}$ of the average defective coloring, we define $\lambda_\gamma$ to be the fractional label assignment after processing the nodes of color $\gamma$ in \Cref{alg:basicroundingstep}. We further define $\lambda_0$ to be the fractional label assignment at the beginning of the algorithm. Note that $\lambda_0=\lambda$ and $\lambda_p=\lambda'$.

  We first upper bound how the utility and cost of an edge change during the algorithm. Consider an edge $e$ between two nodes $u$ and $v$ (i.e., $V(e)=\set{u,v}$). Assume that in the average defective coloring $u$ has color $\gamma_u$ and $v$ has color $\gamma_v$ and w.l.o.g., assume that $\gamma_u\leq \gamma_v$. Note that when rounding the fractional label assignment, the fractional value of a node for a given label can at most double. When rounding a single node, the utility and cost of its incident edges can therefore also at most double and when rounding both nodes of an edge, the utility and cost can grow by a factor at most $4$. For any $\gamma\in\set{0,\dots,p}$ and edge $e$ between $u$ and $v$, we therefore have
\begin{equation}\label{eq:edgeroundingupper}
  \utility(e,\lambda_\gamma) \leq 
  \begin{cases}
    \utility(e, \lambda) & \text{if } \gamma < \gamma_u\\
    2\utility(e, \lambda) & \text{if } \gamma_u\leq \gamma < \gamma_v\\
    4\utility(e, \lambda) & \text{if } \gamma \geq \gamma_v
  \end{cases}
  \qquad\text{and}\qquad
  \cost(e,\lambda_\gamma) \leq 
  \begin{cases}
    \cost(e, \lambda) & \text{if } \gamma < \gamma_u\\
    2\cost(e, \lambda) & \text{if } \gamma_u\leq \gamma < \gamma_v\\
    4\cost(e, \lambda) & \text{if } \gamma \geq \gamma_v.
  \end{cases}
\end{equation}

  As defined in \Cref{alg:basicroundingstep}, let $E_b$ the the set of bichromatic edges w.r.t.\ the computed average defective $p$-coloring and let $E_m:=E\setminus E_b$ be the set of monochromatic edges w.r.t.\ the same coloring. Note that we have $\utility(\lambda)=\utility(E,\lambda)=\utility(E_b,\lambda)+\utility(E_m,\lambda)$ and analogously also for the cost function $\cost(\cdot)$ and the fractional assignment $\lambda'$. We first consider the contribution of the monochromatic edges to $\utility(\lambda')-\eta\cost(\lambda')$. We have
  \begin{eqnarray}
    \utility(E_m, \lambda')-\eta\cost(E_m, \lambda') & \stackrel{\utility(\cdot)\geq 0, \eqref{eq:edgeroundingupper}}{\geq} & -4\eta\cost(E_m, \lambda)\nonumber\\
    & = & \utility(E_m, \lambda) - \eta\cost(E_m, \lambda) - \utility(E_m, \lambda) - 3\eta\cost(E_m, \lambda)\nonumber\\
    & \geq & \utility(E_m, \lambda) - \eta\cost(E_m, \lambda) - 3\big(\utility(E_m, \lambda) + \eta\cost(E_m, \lambda)\big)\nonumber\\
    & \geq & \utility(E_m, \lambda) - \eta\cost(E_m, \lambda) -\frac{\delta}{2}\cdot \big(\utility(\lambda) + \eta\cost(\lambda)\big).\label{eq:boundmonochromatic}
  \end{eqnarray}
  The last inequality follows because the given vertex coloring is a weighted $\delta/6$-average defective coloring w.r.t. edge weights $\utility(e, \lambda) + \eta\cost(\lambda)$ and thus $\utility(E_m, \lambda) + \eta\cost(E_m, \lambda)\leq \delta/6\cdot(\utility(\lambda) + \eta\cost(\lambda))$.

  We next look at the contribution of the bichromatic edges to $\utility(\lambda')-\eta\cost(\lambda')$. For this, we first consider the effect of the rounding of a single node $v$. Assume that node $v$ has color $\gamma_v$ in the average defective coloring. Note that before iteration $\gamma_v$ of the for loop of \Cref{alg:basicroundingstep} in which the fractional assignment of $v$ is changed, the fractional label assignment is $\lambda_{\gamma_v-1}$ and directly after changing the fractional assignment of $v$, the fractional label assignment is $\lambda_{\gamma_v}$. Also note that by the definition of $\phi_{v,\alpha}$ in line 9 of \Cref{alg:basicroundingstep}, we have
  \begin{equation}\label{eq:vpotentialbefore}
    \utility(E_b(v), \lambda_{\gamma_v-1}) - \eta\cost(E_b(v), \lambda_{\gamma_v-1}) = 
    \sum_{\alpha\in\Labels} \lambda_{\gamma_v-1, \alpha}(v)\cdot \phi_{v,\alpha},
  \end{equation}
  where $\lambda_{\gamma_v-1,\alpha}(v)$ denotes the fractional value that $v$ has for label $\alpha$ in the fractional label assignment $\lambda_{\gamma_v-1}$. When rounding the fractional values of $v$, the fractional values are incremented by $1/(2K)$ for all labels $\alpha\in \Labels_v^+$ and the fractional values are decremented by $1/(2K)$ for all labels $\alpha\in\Labels_v^-$. Because the neighbors of a color different from $\gamma_v$ do not change their fractional values at the same time, after rounding the values of $v$ (i.e., at the end of iteration $\gamma_v$ of the for loop), we have
  \begin{eqnarray}
    \utility(E_b(v), \lambda_{\gamma_v}) - \eta\cost(E_b(v), \lambda_{\gamma_v}) 
    & = &
          \sum_{\alpha\in\Labels} \lambda_{\gamma_v, \alpha}\cdot \phi_{v,\alpha}\nonumber\\
    & \stackrel{\eqref{eq:vpotentialbefore}}{=} &
    \utility(E_b(v), \lambda_{\gamma_v-1}) - \eta\cost(E_b(v), \lambda_{\gamma_v-1})
    + \varphi_v.\label{eq:bichromaticchange}
  \end{eqnarray}
  where 
  \begin{equation}
    \label{eq:vpotentialchange}
    \varphi_v = \frac{1}{2K}\cdot \left[\sum_{\alpha\in\Labels_v^+} \phi_{v,\alpha} - \sum_{\alpha\in\Labels_v^-}\phi_{v,\alpha}\right].
  \end{equation}
  As in line 10 of \Cref{alg:basicroundingstep}, define $\theta_{v,\alpha}:=\utility\big(E_b(v), \lambda_{\gamma_v-1}^{(v,\alpha)}\big)+\eta\cost\big(E_b(v), \lambda_{\gamma_v-1}^{(v,\alpha)}\big)$, where $\lambda_{\gamma_v-1}^{(v,\alpha)}$ is a fractional label assignment that is equal to $\lambda_{\gamma_v-1}$, except that at node $v$, we have $\lambda_{\gamma_v-1,\alpha}(v)=1$ and $\lambda_{\gamma_v,\beta}(v)=0$ for $\beta\neq \alpha$. Note that similarly to \eqref{eq:vpotentialbefore} we have
  \begin{equation}\label{eq:utilitycostsum}
    \utility\big(E_b(v), \lambda_{\gamma_v-1}\big)+\eta\cost\big(E_b(v), \lambda_{\gamma_v-1}\big) = 
  \sum_{\alpha\in \Labels} \lambda_{\gamma_v-1,\alpha}(v)\cdot \theta_{v,\alpha}.
  \end{equation}
  By line 11 of \Cref{alg:basicroundingstep}, for every $\alpha\in \Labels_v^+\cup\Labels_v^-$, $v$ obtains an estimate $\hat{\phi}_{v,\alpha}$ such that $\phi_{v,\alpha}\geq \hat{\phi}_{v,\alpha}\geq \phi_{v,\alpha}-\frac{\delta}{6}\cdot\theta_{v,\alpha}$. We can therefore rewrite $\varphi_v$ (cf.\ \Cref{eq:vpotentialchange}) as
  \begin{eqnarray}
    \varphi_v 
    & \geq &
    \frac{1}{2K}\cdot \underbrace{\left[\sum_{\alpha\in\Labels_v^+} \hat{\phi}_{v,\alpha} - \sum_{\alpha\in\Labels_v^-}\hat{\phi}_{v,\alpha}\right]}_{\geq 0\text{ (by def.\ of $\Labels_v^+$ and $\Labels_v^-$)}}
    - \frac{1}{2K}\cdot\sum_{\alpha\in\Labels_v^-}\frac{\delta}{6}\cdot\theta_{v,\alpha}\nonumber\\
    & \stackrel{(\lambda_{\gamma_v-1,\alpha}(v)\geq \frac{1}{2K})}{\geq} &
    - \frac{\delta}{6}\cdot \sum_{\alpha\in\Labels_v^-}
    \lambda_{\gamma_v-1,\alpha}(v)\cdot \theta_{v,\alpha}\nonumber\\
    & \stackrel{\eqref{eq:utilitycostsum}}{\geq} & 
    -\frac{\delta}{6}\cdot\left(
    \utility\big(E_b(v), \lambda_{\gamma_v-1}\big)+\eta\cost\big(E_b(v), \lambda_{\gamma_v-1}\big)
    \right).\label{eq:varphibound}
  \end{eqnarray}
  By \eqref{eq:bichromaticchange}, we have 
  \begin{equation}\label{eq:totalpotentialchange}
    \utility(E_b, \lambda') - \eta\cost(E_b, \lambda') =
    \utility(E_b, \lambda) - \eta\cost(E_b, \lambda) + \sum_{v\in V} \varphi_v
  \end{equation}
  By using \eqref{eq:varphibound}, the term $\sum_{v\in V}\varphi_v$ can be lower bounded as
  \begin{eqnarray}
    \sum_{v\in V}\varphi_v
    & \geq &
             -\frac{\delta}{6}\cdot\sum_{v\in V}
             \left(\utility\big(E_b(v), \lambda_{\gamma_v-1}\big)+\eta\cost\big(E_b(v), \lambda_{\gamma_v-1}\big)\right)\nonumber\\
    & = &
          -\frac{\delta}{6}\cdot\sum_{e\in E_b}\sum_{v\in V(e)}
          \left(\utility\big(e, \lambda_{\gamma_v-1}\big)+\eta\cost\big(e, \lambda_{\gamma_v-1}\big)\right)\nonumber\\
    & \geq &
             -\frac{\delta}{6}\cdot\sum_{e\in E_b}
             3\big(\utility(e, \lambda)+\eta\cost(e,\lambda)\big)\label{eq:boundonincrease}\\
    & = & -\frac{\delta}{2}\cdot\big(\utility(E_b,\lambda)+\eta\cost(E_b,\lambda)\big)
    \ \geq\ -\frac{\delta}{2}\cdot\big(\utility(\lambda)+\eta\cost(\lambda)\big).\nonumber
  \end{eqnarray}
  Inequality \eqref{eq:boundonincrease} follows because by \eqref{eq:edgeroundingupper} for a bichromatic edge $e$ with $V(e)=\set{u,v}$ and $\gamma_u<\gamma_v$, we have $\utility(e, \lambda_{\gamma_u-1})\leq \utility(e,\lambda)$ and $\utility(e,\lambda_{\gamma_v-1})\leq 2 \utility(e,\lambda)$ and analogously $\cost(e, \lambda_{\gamma_u-1})\leq \cost(e,\lambda)$ and $\cost(e,\lambda_{\gamma_v-1})\leq 2 \cost(e,\lambda)$. The last inequality follows because utility and cost are non-negative functions. In combination with Equations \eqref{eq:boundmonochromatic} and \eqref{eq:totalpotentialchange}, we obtain
  \[
    \utility(\lambda') - \eta\cost(\lambda') \geq \utility(\lambda) -\eta\cost(\lambda) -
    \delta\cdot\big(\utility(\lambda)+\eta\cost(\lambda)\big)
  \]
  as required by the claim of the lemma.
\end{proof}

\Cref{alg:basicroundingstep} describes a single rounding step of our abstract generic rounding algorithm. One application of this basic step doubles the integrality of a given fractional label assignment. Given a $1/K$-integral fractional label assignment, we therefore have to invoke  \Cref{alg:basicroundingstep} $\log K$ times in order to obtain an integral label assignment. The details of this are given by \Cref{alg:basicrounding}.

\begin{algorithm}
  \caption{Basic Rounding Algorithm, Parameters $\eps\in[0,1]$, $\mu\in(0,1]$}\label{alg:basicrounding}
  \textbf{Given:} multigraph $G=(V,E)$ with utility and cost functions $\utility$ and $\cost$ and with a $1/2^k$-integral fractional label assignment $\lambda$ (for a given integer $k\geq 1$) such that $\utility(\lambda)-\cost(\lambda)\geq \mu\utility(\lambda)$.
\begin{algorithmic}[1]
  \State Define $\delta:=\frac{\eps\cdot\mu}{6k}$
  \For{$i\in\set{1,\dots,k}$}
  \State $\eta_i := 1+\left(1-\frac{i}{k}\right)\frac{\eps\cdot\mu}{2}$
  \State Run \Cref{alg:basicroundingstep} with param.\ $\delta$ and $\eta_i$ on the current $\frac{1}{2^{k-i+1}}$-integral fract.\ assignment $\lambda$
  \EndFor
  \State Define integral label assignment $\ell$ s.t.\ $\forall v\in V, \alpha\in \Labels : \ell(v)=\alpha\ \Leftrightarrow\ \lambda_\alpha(v)=1$
\end{algorithmic}
\end{algorithm}

\begin{lemma}\label{lemma:basicrounding}
  Let $G=(V,E)$ be a multigraph, which is equipped with utility and cost functions $\utility$ and $\cost$ and with a $1/2^k$-integral fractional label assignment $\lambda$ for a given integer $k\geq 1$. Let $\eps\in [0,1]$ and $\mu\in(0,1]$ be two parameters. If $\utility(\lambda)-\cost(\lambda)\geq \mu\utility(\lambda)$, \Cref{alg:basicrounding} returns an integral label assignment $\ell$ with
  \[
  \utility(\ell)-\cost(\ell) \geq (1-\eps)\cdot \big(\utility(\lambda) - \cost(\lambda)\big).
  \]
\end{lemma}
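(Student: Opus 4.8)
The plan is to track the quantity $\utility(\lambda)-\eta_i\cost(\lambda)$ across the $k$ iterations of \Cref{alg:basicrounding}, using \Cref{lemma:basicroundingstep} to control each step, and then to convert this controlled quantity back into the "true" potential $\utility-\cost$ at the two endpoints $i=0$ and $i=k$. Write $\lambda^{(0)}:=\lambda$ for the initial assignment and $\lambda^{(i)}$ for the assignment after iteration $i$, so $\lambda^{(k)}$ is the assignment from which the integral $\ell$ is read off; note $\utility(\ell)=\utility(\lambda^{(k)})$ and $\cost(\ell)=\cost(\lambda^{(k)})$ since $\lambda^{(k)}$ is integral. For iteration $i$ we invoke \Cref{lemma:basicroundingstep} with parameters $\delta=\frac{\eps\mu}{6k}$ and $\eta=\eta_i$, which gives
\[
\utility(\lambda^{(i)})-\eta_i\cost(\lambda^{(i)}) \;\geq\; \utility(\lambda^{(i-1)})-\eta_i\cost(\lambda^{(i-1)}) \;-\; \delta\bigl(\utility(\lambda^{(i-1)})+\eta_i\cost(\lambda^{(i-1)})\bigr).
\]

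The key bookkeeping step is to relate $\utility(\lambda^{(i-1)})-\eta_i\cost(\lambda^{(i-1)})$ (the left-hand quantity of iteration $i$, evaluated at the previous assignment) to $\utility(\lambda^{(i-1)})-\eta_{i-1}\cost(\lambda^{(i-1)})$ (what the previous iteration actually guaranteed). Since $\eta_{i-1}-\eta_i=\frac{\eps\mu}{2k}$ and $\cost\geq 0$, we have $\utility-\eta_i\cost = (\utility-\eta_{i-1}\cost) + \frac{\eps\mu}{2k}\cost \geq \utility-\eta_{i-1}\cost$, so passing from $\eta_{i-1}$ to $\eta_i$ only helps. Thus, defining $\Psi_i := \utility(\lambda^{(i)})-\eta_i\cost(\lambda^{(i)})$ for $i\geq 1$ and $\Psi_0:=\utility(\lambda)-\eta_1\cost(\lambda)$, the inequalities chain telescopically into $\Psi_k \geq \Psi_0 - \delta\sum_{i=1}^{k}\bigl(\utility(\lambda^{(i-1)})+\eta_i\cost(\lambda^{(i-1)})\bigr)$. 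To close this I need a uniform upper bound on the error terms $\utility(\lambda^{(i-1)})+\eta_i\cost(\lambda^{(i-1)})$; here is where the hypothesis $\utility(\lambda)-\cost(\lambda)\geq\mu\utility(\lambda)$ and the choice of the $\eta_i$'s enter. One shows by induction that for every $i$, $\cost(\lambda^{(i)})$ stays comparable to $\utility(\lambda^{(i)})$ — concretely that $\utility(\lambda^{(i)})-\eta_i\cost(\lambda^{(i)})\geq 0$, equivalently $\cost(\lambda^{(i)})\leq \utility(\lambda^{(i)})/\eta_i \leq \utility(\lambda^{(i)})$ — using that $\utility$ is non-decreasing in the sense that $\utility$ can only... actually one must be slightly careful since utility is not monotone under rounding either. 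The cleaner route: show $\Psi_i\geq 0$ for all $i$ inductively (base case from $\eta_1\leq 1+\frac{\eps\mu}{2}$ together with $\utility(\lambda)-\cost(\lambda)\geq\mu\utility(\lambda)$, which gives $\Psi_0=\utility(\lambda)-\eta_1\cost(\lambda)$ is a positive fraction of $\utility(\lambda)$), and from $\Psi_i\geq 0$ deduce $\cost(\lambda^{(i)})\leq\utility(\lambda^{(i)})$ hence $\utility(\lambda^{(i)})+\eta_i\cost(\lambda^{(i)})\leq (1+\eta_i)\utility(\lambda^{(i)})\leq 3\utility(\lambda^{(i)})$, and then control $\utility(\lambda^{(i)})$ itself by a geometric growth bound from the step lemma's structure (each step at most doubles fractional values, so $\utility$ at most quadruples per step, but combined with the $\Psi_i\geq 0$ invariant one actually gets $\utility(\lambda^{(i)})\leq O(\utility(\lambda))$ — more carefully, since $\Psi_i\geq 0$ forces $\utility$ to dominate $\cost$, the genuine potential $\utility-\cost$ is within a factor $2$ of $\utility$, and the step lemma keeps $\utility-\cost$ from dropping much, pinning $\utility$ near $\utility(\lambda)$).

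Granting the uniform bound $\utility(\lambda^{(i-1)})+\eta_i\cost(\lambda^{(i-1)})\leq c\cdot\utility(\lambda)$ for a small constant $c$ (I expect $c$ around $3$ or $4$), the accumulated error is at most $\delta k c\,\utility(\lambda) = \frac{\eps\mu}{6}\cdot c\,\utility(\lambda)\leq \eps\mu\,\utility(\lambda)$ for $c\leq 6$. Since $\mu\utility(\lambda)\leq \utility(\lambda)-\cost(\lambda)$ by hypothesis, this is at most $\eps(\utility(\lambda)-\cost(\lambda))$. Finally I convert the endpoints: $\Psi_0=\utility(\lambda)-\eta_1\cost(\lambda)\geq \utility(\lambda)-\cost(\lambda)$ (since $\eta_1\geq 1$... wait, $\eta_1\geq 1$ makes $\Psi_0\leq\utility-\cost$, so this direction needs the opposite — actually $\eta_1 = 1+(1-1/k)\frac{\eps\mu}{2}\geq 1$, so $\Psi_0 = \utility-\cost - (1-1/k)\frac{\eps\mu}{2}\cost \geq \utility-\cost - \frac{\eps\mu}{2}\utility \geq \utility-\cost-\frac{\eps}{2}(\utility-\cost) = (1-\frac{\eps}{2})(\utility-\cost)$, using $\mu\utility\leq\utility-\cost$ and $\cost\leq\utility$). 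And at the other end $\eta_k=1$, so $\Psi_k=\utility(\lambda^{(k)})-\cost(\lambda^{(k)})=\utility(\ell)-\cost(\ell)$ exactly. Combining, $\utility(\ell)-\cost(\ell)=\Psi_k\geq \Psi_0 - \eps(\utility(\lambda)-\cost(\lambda)) \geq (1-\frac{\eps}{2})(\utility(\lambda)-\cost(\lambda)) - \eps(\utility(\lambda)-\cost(\lambda))$, which is not quite $(1-\eps)$ — so the constants in $\delta$ and in the $\eta_i$ must be tuned (the paper's $\delta=\frac{\eps\mu}{6k}$ and the factor $\frac{\eps\mu}{2}$ are presumably calibrated so the two loss contributions each come out to at most $\eps/2$, giving the clean $(1-\eps)$ bound). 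The main obstacle, and the step I would spend the most care on, is establishing the uniform bound on $\utility(\lambda^{(i-1)})+\eta_i\cost(\lambda^{(i-1)})$ in terms of $\utility(\lambda)$ — i.e., proving the invariant $\Psi_i\geq 0$ (or more precisely $\Psi_i \geq (1-O(\eps))(\utility(\lambda)-\cost(\lambda))$) is maintained and that utility does not blow up — since everything else is telescoping and arithmetic; the telescoping itself is routine once that control is in hand.
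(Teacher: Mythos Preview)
Your overall strategy --- track $\utility-\eta_i\cost$ across iterations, then convert at the endpoints --- is the right one, and your endpoint calculations (the $(1-\eps/2)$ loss at $i=0$ and $\eta_k=1$ at the end) are essentially what the paper does. But the middle part has a real gap.

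You reduce the problem to bounding $\utility(\lambda^{(i-1)})+\eta_i\cost(\lambda^{(i-1)})\leq c\cdot\utility(\lambda)$ uniformly in $i$, and you correctly identify this as ``the main obstacle.'' Your sketched argument for it does not work: knowing $\Psi_i\geq 0$ only gives $\cost(\lambda^{(i)})\leq\utility(\lambda^{(i)})$, and knowing $\utility-\cost$ stays near its initial value does not pin $\utility$ by itself, since nothing in the step lemma prevents $\utility$ and $\cost$ from growing together while their difference stays bounded. The per-step quadrupling you mention would give $4^k$, which is useless.

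The paper avoids this obstacle entirely by organizing the recursion \emph{multiplicatively} rather than additively. The key observation you are missing is that the gain from decreasing $\eta$ exactly cancels the cost-side error: writing $\Phi_i=\utility(\lambda^{(i)})-\eta_i\cost(\lambda^{(i)})$ (with $\eta_0=1+\tfrac{\eps\mu}{2}$), the step lemma plus $\eta_i-\eta_{i+1}=\tfrac{\eps\mu}{2k}$ and $\eta_i+\eta_{i+1}\leq 3$ give
\[
\Phi_{i+1}\ \geq\ (1-\delta)\Phi_i \;+\;\Bigl(\tfrac{\eps\mu}{2k}-3\delta\Bigr)\cost(\lambda^{(i)})\ =\ (1-\delta)\Phi_i,
\]
since $\delta=\tfrac{\eps\mu}{6k}$ is calibrated precisely so that $\tfrac{\eps\mu}{2k}-3\delta=0$. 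This yields $\Phi_k\geq(1-\delta)^k\Phi_0$ directly, with no need to bound $\utility(\lambda^{(i)})$ at all. That cancellation is the point of the specific choice of $\delta$ and the $\eta_i$ schedule; once you see it, the proof is short arithmetic.
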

\begin{proof} 
  For the proof, we define $\lambda_0$ to be the $1/2^k$-integral fractional label assignment with which \Cref{alg:basicrounding} is started and for $i\in\set{1,\dots,k}$, we define $\lambda_i$ to be the fractional label assignment at the end of the $i^{\mathrm{th}}$ iteration of the for loop in \Cref{alg:basicrounding}. For all $i\in\set{0,\dots,k}$, we further define a potential
  \[
  \Phi_i := \utility(\lambda_i) - \eta_i\cdot \cost(\lambda_i),
  \]
  where $\eta_i =1 + \big(1-\frac{i}{k}\big)\frac{\eps\cdot\mu}{2}$ for all $i\in\set{0,\dots,k}$.
  By induction on $i$, we first show that a) $\lambda_i$ is a $2^i/2^k$-integral fractional label assignment and b) 
  \begin{equation}\label{eq:potentialinduction}
    \Phi_i \geq (1-\delta)^i\cdot \Phi_0.
  \end{equation}
  \noindent\textit{Base Case $i=0$:} We know that $\lambda_0$ is a $1/2^k$-integral fractional label assignment and \Cref{eq:potentialinduction} certainly also holds for $i=0$. Thus, both a) and b) hold.
  
  \medskip
  
  \noindent\textit{Induction Step $i>0$:} The induction hypothesis implies that a) $\lambda_i$ is a $2^i/2^k$-integral fractional label assignment and b) $\Phi_i \geq (1-\delta)^i\cdot \Phi_0$. We need to show that a) $\lambda_{i+1}$ is a $2^{i+1}/2^k$-integral fractional label assignment and  b) $\Phi_{i+1} \geq (1-\delta)^{i+1}\cdot \Phi_0$.
  
  For a), notice that at the beginning of the ${i+1}^{\mathrm{st}}$ iteration of the for loop  of \Cref{alg:basicrounding}, the current fractional label assignment is $\lambda_i$, which by the induction hypothesis is a $2^i/2^k$-integral fractional label assignment. By \Cref{lemma:basicroundingstep}, \Cref{alg:basicroundingstep} doubles the integrality of the given fractional label assignment. Hence, when applying \Cref{alg:basicroundingstep} in iteration $i+1$ of \Cref{alg:basicrounding}, we obtain a $2^{i+1}/2^k$-integral fractional label assignment and thus a) follows.
  
  For b), by \Cref{lemma:basicroundingstep}, at the end of the $i+1^{\mathrm{st}}$ iteration of the for loop in \Cref{alg:basicrounding}, we have 
  \begin{eqnarray*}
	\Phi_{i+1} & = & \utility(\lambda_{i+1}) - \eta_{i+1} \cost(\lambda_{i+1})\\
	& \stackrel{(L.\ \ref{lemma:basicroundingstep})}{\geq}  &  \utility(\lambda_{i}) - \eta_{i+1} \cost(\lambda_{i})- \delta \cdot ( \utility(\lambda_{i}) + \eta_{i+1} \cost(\lambda_{i}))\\
	& = &  \utility(\lambda_{i}) - \eta_{i}\cdot \cost(\lambda_{i})+ (\eta_{i}-\eta_{i+1}) \cdot \cost(\lambda_{i}) -\delta\cdot ( \utility(\lambda_{i}) + \eta_{i+1}\cdot \cost(\lambda_{i}))\\
	& =  &  (1-\delta)\cdot\underbrace{\big(\utility(\lambda_{i}) - \eta_{i}\cdot \cost(\lambda_{i})\big)}_{=\Phi_i} + \underbrace{(\eta_{i}-\eta_{i+1})}_{=\frac{\eps\cdot\mu}{2k}} \cdot \cost(\lambda_{i})  -\delta \cdot \underbrace{(\eta_i  + \eta_{i+1})}_{\leq 3}\cdot\cost(\lambda_i)\\
   	& \geq & (1-\delta) \cdot \Phi_i
    + \cost(\lambda_{i}) \cdot  \left( \frac{\eps \cdot \mu}{2 k} - 3\cdot \delta \right)\\
    & \stackrel{\big(\delta=\frac{\eps\cdot\mu}{6k}\big)}{=} &  (1-\delta) \cdot \Phi_i \\
    & \stackrel{(\text{I.H.})}{\geq} &  (1-\delta)^{i+1} \cdot \Phi_0.
  \end{eqnarray*}
  In the fourth line, $\eta_i+\eta_{i+1}\leq 3$ holds because by definition, $\eta_j\leq 1 + \frac{\eps\cdot\mu}{2}\leq \frac{3}{2}$ for all $j\in \set{0,\dots,k}$. Hence, \Cref{eq:potentialinduction} holds and thus the induction step is complete.
  
  We next show that for the initial potential $\Phi_0$, it holds that
  \begin{equation}\label{eq:initialpotential}
    \Phi_0 \geq \left(1-\frac{\eps}{2}\right)\cdot\big(\utility(\lambda_0)-\cost(\lambda_0)\big).
  \end{equation}

  We need to show that $\Phi_0 = \utility(\lambda_0) - \eta_0\cdot \cost(\lambda_0) \geq \left(1-\frac{\eps}{2}\right)\cdot\big(\utility(\lambda_0)-\cost(\lambda_0)\big)$, which is equivalent to showing that $ \eta_0\cdot \cost(\lambda_0) \leq \utility(\lambda_0) -\left(1-\frac{\eps}{2}\right)\cdot\big(\utility(\lambda_0)-\cost(\lambda_0)\big)
  =   \frac{\eps}{2} \cdot (\utility(\lambda_0)- \cost(\lambda_0))+\cost(\lambda_0)$.
  This is equivalent to showing that 
  $(\eta_0-1) \cdot \cost(\lambda_0) \leq  \frac{\eps}{2} \cdot (\utility(\lambda_0)- \cost(\lambda_0))$, which follows because
  \begin{eqnarray*}
    (\eta_0-1) \cdot \cost(\lambda_0) 
    & \leq  &  (\eta_0-1) \cdot (1-\mu) \cdot \utility(\lambda_0)\\
    & = & \left(\left(1+\frac{\eps \cdot \mu}{2}\right)-1\right) \cdot  (1-\mu) \cdot \utility(\lambda_0)\\
    & \leq &  \frac{\eps \cdot \mu}{2}\cdot  \utility(\lambda_0)\\
    & \leq &  \frac{\eps}{2}\cdot  \big(\utility(\lambda_0)- \cost(\lambda_0)\big).
  \end{eqnarray*}
  The first and last inequality follow since we have that $\utility(\lambda_0)-\cost(\lambda_0)\geq \mu\utility(\lambda_0)$.
  The claim of the lemma now follows because
  \begin{eqnarray*}
    \utility(\lambda_k) - \cost(\lambda_k) & = & \Phi_k\ \stackrel{\eqref{eq:potentialinduction}}{\geq}\ 
    (1-\delta)^k\cdot \Phi_0\\
    & \stackrel{\eqref{eq:initialpotential}}{\geq} &
    (1-\delta)^k\cdot\left(1-\frac{\eps}{2}\right)\cdot \big(\utility(\lambda_0)-\cost(\lambda_0)\big)\\ & \geq & (1-\eps)\cdot\big(\utility(\lambda_0)-\cost(\lambda_0)\big).
  \end{eqnarray*}
  The last inequality follows because $\delta=\frac{\eps\mu}{8k}\leq \frac{\eps}{2k}$ and thus $(1-\delta)^k\geq 1-\frac{\eps}{2}$.
\end{proof}

\subsubsection{Preprocessing of Fractional Label Assignments}
\label{sec:preprocessingassignment}

\Cref{alg:basicrounding} requires the given fractional solution to be $1/2^k$-integral for some positive integer $k$. It is often more natural to write down a fractional solution by using general real fractional values. The following simple technical lemma shows how a given fractional label assignment can be rounded to a $1/2^k$-fractional label assignment.

\begin{lemma}\label{lemma:fractionalrounding}
  Let $G=(V,E)$ be a multigraph, which is equipped with utility and cost functions $\utility$ and $\cost$ and with a fractional label assignment $\lambda$ such that for all $v\in V$ and all $\alpha\in \Labels$, $\lambda_\alpha(v)=0$ or $\lambda_\alpha(v)\geq \lambda_{\min}$ for some given value $\lambda_{\min}$. Let $\eps\in [0,1]$ and $\mu\in(0,1]$ be two parameters. If $\utility(\lambda)-\cost(\lambda)\geq \mu\utility(\lambda)$, the nodes can internally (i.e., without communication) compute a new $1/2^k$-integral fractional label assignment $\lambda'$ for some integer $k$ such that $2^k=O\big(\frac{1}{\eps\cdot\mu\cdot \lambda_{\min}}\big)$,
  \[
  \utility(\lambda')-\cost(\lambda') \geq (1-\eps)\cdot \big(\utility(\lambda) - \cost(\lambda)\big),
  \quad\text{and}\quad
  \utility(\lambda')-\cost(\lambda') \geq \frac{\mu}{2}\cdot\utility(\lambda').
  \]
\end{lemma}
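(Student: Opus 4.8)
The plan is to round each fractional value $\lambda_\alpha(v)$ down to a nearby integer multiple of $2^{-k}$ for an appropriate $k$, and to verify that this loses only a small additive fraction of the potential. First I would fix $k$ so that $2^{-k}$ is small relative to the smallest non-zero fractional value: concretely choose $k$ with $2^{-k} = \Theta\bigl(\eps\cdot\mu\cdot\lambda_{\min}\bigr)$, say $2^{-k}\le c\,\eps\mu\lambda_{\min}$ for a suitable small constant $c$; since $\lambda_{\min}\le 1$ this forces $2^k = O\bigl(\tfrac{1}{\eps\mu\lambda_{\min}}\bigr)$ as required. For each node $v$, I would define $\lambda'$ by the standard procedure that turns a probability vector into one supported on multiples of $2^{-k}$: round every coordinate $\lambda_\alpha(v)$ to the nearest multiple of $2^{-k}$ (or floor each coordinate and then distribute the small leftover mass $\le |\Labels|\cdot 2^{-k}$ among the already-positive coordinates) so that each coordinate changes by at most $O(|\Labels|\cdot 2^{-k})$ and the vector still sums to $1$. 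Crucially, every coordinate that was $0$ stays $0$, and every coordinate that was $\ge \lambda_{\min}$ stays within a $(1\pm O(|\Labels|\cdot 2^{-k}/\lambda_{\min}))$ factor of its original value; by the choice of $k$ this multiplicative distortion is $1\pm O(c\,\eps\mu)$ on each non-zero coordinate. This is purely local: each $v$ only manipulates its own vector, so no communication is needed.

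Next I would control the change in utility and cost. Since both $\utility(e,\cdot)$ and $\cost(e,\cdot)$ are, by \eqref{eq:fractionalutilitycost}, bilinear in the two fractional vectors $\lambda(u),\lambda(v)$ with non-negative coefficients, and each non-zero coordinate is perturbed by at most a $(1\pm O(c\,\eps\mu))$ multiplicative factor while zero coordinates are untouched, the product $\lambda_\alpha(u)\lambda_\beta(v)$ for any $(\alpha,\beta)$ that contributes a non-zero term changes by at most a $(1\pm O(c\,\eps\mu))^2 = (1\pm O(c\,\eps\mu))$ factor. Consequently $\utility(e,\lambda')\in (1\pm O(c\,\eps\mu))\,\utility(e,\lambda)$ and likewise for $\cost(e,\lambda)$, and summing over $e\in E$ the same holds for the totals: $\utility(\lambda')\in(1\pm O(c\,\eps\mu))\,\utility(\lambda)$ and $\cost(\lambda')\in(1\pm O(c\,\eps\mu))\,\cost(\lambda)$. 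From here the first conclusion follows:
\[
\utility(\lambda')-\cost(\lambda') \ \ge\ \utility(\lambda) - \cost(\lambda) - O(c\,\eps\mu)\bigl(\utility(\lambda)+\cost(\lambda)\bigr)\ \ge\ (1-\eps)\bigl(\utility(\lambda)-\cost(\lambda)\bigr),
\]
where the last step uses $\utility(\lambda)+\cost(\lambda)\le 2\utility(\lambda)\le \tfrac{2}{\mu}\bigl(\utility(\lambda)-\cost(\lambda)\bigr)$ (from the hypothesis $\utility(\lambda)-\cost(\lambda)\ge\mu\utility(\lambda)$, which also gives $\cost(\lambda)\le(1-\mu)\utility(\lambda)\le\utility(\lambda)$), so that the error term is at most $O(c\,\eps)\bigl(\utility(\lambda)-\cost(\lambda)\bigr)$; picking $c$ small enough absorbs the constant.

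For the second conclusion, $\utility(\lambda')-\cost(\lambda')\ge \tfrac{\mu}{2}\utility(\lambda')$, I would again use the two-sided estimates: $\utility(\lambda')\le(1+O(c\,\eps\mu))\utility(\lambda)$ while $\utility(\lambda')-\cost(\lambda')\ge(1-\eps)\bigl(\utility(\lambda)-\cost(\lambda)\bigr)\ge(1-\eps)\mu\,\utility(\lambda)$, so the ratio $\bigl(\utility(\lambda')-\cost(\lambda')\bigr)/\utility(\lambda')\ge \tfrac{(1-\eps)\mu}{1+O(c\,\eps\mu)}\ge \tfrac{\mu}{2}$ as long as $\eps\le 1/2$ (and for $\eps$ close to $1$ one notes the statement is vacuous or handles it by a direct bound, since $1-\eps$ could be tiny — here it is cleanest to first observe the claim is only interesting for, say, $\eps\le 1/2$, or to replace $\eps$ by $\min\{\eps,1/2\}$ in the rounding granularity, which only improves $2^k$ by a constant). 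I expect the only mildly delicate point to be bookkeeping the dependence on $|\Labels|$: the per-coordinate perturbation is $O(|\Labels|\cdot 2^{-k})$, so to get the distortion down to $O(c\,\eps\mu)$ relative to $\lambda_{\min}$ one needs $2^{-k}=O(c\,\eps\mu\lambda_{\min}/|\Labels|)$, i.e. $2^k=O\bigl(\tfrac{|\Labels|}{\eps\mu\lambda_{\min}}\bigr)$; since the lemma statement writes $2^k=O\bigl(\tfrac{1}{\eps\mu\lambda_{\min}}\bigr)$, I would either treat $|\Labels|$ as an absolute constant (as is the case in all of the paper's applications, where $|\Labels|=2$) or absorb it into the $O(\cdot)$, noting the dependence is benign. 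The rest is the routine bilinearity-plus-multiplicative-error calculation sketched above.
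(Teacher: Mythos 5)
Your plan is the same as the paper's: round each coordinate $\lambda_\alpha(v)$ to a nearby multiple of $2^{-k}$ locally, propagate the resulting per-coordinate multiplicative distortion through the bilinear formula~\eqref{eq:fractionalutilitycost} for edge utilities/costs, and then do algebra with the hypothesis $\utility(\lambda)-\cost(\lambda)\geq\mu\utility(\lambda)$. The overall structure is correct, but two details do not quite close.

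First, the rounding step does not need to pick up a $|\Labels|$ factor, and the lemma as stated does not allow one. Your floor-and-redistribute variant bounds the per-coordinate change only by $O(|\Labels|\cdot 2^{-k})$, and you then propose to absorb $|\Labels|$ into the constant — but $|\Labels|$ is a parameter, not a constant, so this does not match the claimed $2^k=O\big(\tfrac{1}{\eps\mu\lambda_{\min}}\big)$. The fix is standard and is what the paper uses: after flooring every coordinate, the total removed mass is $r\cdot 2^{-k}$ for an integer $r$ (because $\sum_\alpha\lambda_\alpha(v)=1$ is itself a multiple of $2^{-k}$), and $r$ is at most the number of nonzero coordinates; choose $r$ of the coordinates with positive fractional part and round those \emph{up} instead. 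Then every coordinate changes by at most $2^{-k}$ — no $|\Labels|$ anywhere — and $2^k=O\big(\tfrac{1}{\eps\mu\lambda_{\min}}\big)$ suffices. Second, your derivation of the bound $\utility(\lambda')-\cost(\lambda')\geq\frac{\mu}{2}\utility(\lambda')$ routes through the first conclusion and lower-bounds the numerator by $(1-\eps)\mu\utility(\lambda)$. This only yields $\tfrac{(1-\eps)\mu}{1+O(c\eps\mu)}$, which is strictly below $\mu/2$ already at $\eps=\tfrac12$ for any $c>0$, and your fallback (capping $\eps$ at $\tfrac12$ in the granularity) does not repair the constant. The paper avoids this by not going through the $(1-\eps)$-loss: it bounds $\cost(\lambda')\leq(1+\tfrac{\eps\mu}{4})\cost(\lambda)\leq(1+\tfrac{\eps\mu}{4})(1-\mu)\utility(\lambda)$ directly from the hypothesis, converts $\utility(\lambda)$ to $\utility(\lambda')$ via the two-sided multiplicative bound, and a short expansion then gives $\geq\tfrac{\mu}{2}\utility(\lambda')$ for all $\eps\in[0,1]$. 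Incorporating these two changes makes your argument match the stated lemma.
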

\begin{proof}
  Clearly, for a given integer $k\geq 0$, each node $v\in V$ can compute new values $\lambda_\alpha'(v)$ such that for every $\alpha\in\Labels$, $\lambda_\alpha'(v)=0$ if $\lambda_\alpha(v)=0$ and otherwise, $\lambda_\alpha'(v)$ is an integer multiple of $2^{-k}$ such that $|\lambda_\alpha(v) - \lambda_\alpha'(v)|\leq 2^{-k}$. Node $v$ can just round up or down each $\lambda_\alpha(v)$ to the nearest larger or smaller integer multiple of $2^{-k}$ such that the total sum of all fractional values of $v$ remains $1$. For all $v\in V$ and $\alpha \in \Labels$, we then have
  \begin{equation}\label{eq:roundedfractvalues}
    \left(1-\frac{1}{\lambda_{\min}\cdot 2^k}\right)\cdot \lambda_\alpha(v)
    \ \leq\ \lambda_\alpha'(v)\ \leq\ 
    \left(1+\frac{1}{\lambda_{\min}\cdot 2^k}\right)\cdot \lambda_\alpha(v).
  \end{equation}
  Recall that utility and cost are computed as a sum over the individual edge utilities and costs. By the definition of the utility and cost of an edge for a fractional assignment (\Cref{eq:fractionalutilitycost}), for an edge $e\in E$ with $V(e)=\set{u,v}$, we have
  \begin{eqnarray}
    \utility(e, \lambda')
    & \stackrel{\eqref{eq:fractionalutilitycost}}{\geq} & 
             \min_{(\alpha,\beta)\in\Labels^2}\frac{\lambda_{\alpha}'(u)\cdot\lambda_{\beta}'(v)}{\lambda_{\alpha}(u)\cdot\lambda_{\beta}(v)}\cdot
             \utility(e, \lambda)
             \ \stackrel{\eqref{eq:roundedfractvalues}}{\geq}
             \left(1-\frac{1}{\lambda_{\min}\cdot 2^k}\right)^2\cdot \utility(e, \lambda),\label{eq:edgeutilitylower}\\
    \utility(e, \lambda')
    & \stackrel{\eqref{eq:fractionalutilitycost}}{\leq} & 
             \max_{(\alpha,\beta)\in\Labels^2}\frac{\lambda_{\alpha}'(u)\cdot\lambda_{\beta}'(v)}{\lambda_{\alpha}(u)\cdot\lambda_{\beta}(v)}\cdot
             \utility(e, \lambda)
             \ \stackrel{\eqref{eq:roundedfractvalues}}{\leq}
             \left(1+\frac{1}{\lambda_{\min}\cdot 2^k}\right)^2\cdot \utility(e, \lambda). \label{eq:edgeutilityupper}
  \end{eqnarray}
  We choose $k$ as the smallest integer such that $2^k\geq \frac{9}{\eps\cdot\mu\cdot\lambda_{\min}}$. We then get
  \begin{eqnarray}
    \utility(\lambda')
    & = & 
          \sum_{e\in E} \utility(e, \lambda')
          \ \geq\ 
          \left(1-\frac{\eps\cdot\mu}{9}\right)^2\cdot\sum_{e\in E} \utility(e, \lambda)
          \ \geq\  \left(1-\frac{\eps\cdot\mu}{4}\right)\utility(\lambda), \label{eq:utilitylower}\\
    \utility(\lambda')
    & = & 
          \sum_{e\in E} \utility(e, \lambda')
          \ \leq\ 
          \left(1+\frac{\eps\cdot\mu}{9}\right)^2\cdot\sum_{e\in E} \utility(e, \lambda)
          \stackrel{(\eps\mu\leq 1)}{\leq} \left(1+\frac{\eps\cdot\mu}{4}\right)\utility(\lambda).\label{eq:utilityupper}
  \end{eqnarray}
  Clearly, Equations \eqref{eq:edgeutilitylower}, \eqref{eq:edgeutilityupper}, \eqref{eq:utilitylower}, and \eqref{eq:utilityupper} also hold for the cost function $\cost(\cdot)$ in exactly the same way. We therefore obtain
  \begin{eqnarray*}
    \utility(\lambda') - \cost(\lambda') 
    & \stackrel{(\eqref{eq:utilityupper}, \eqref{eq:utilitylower}\text{ for }\cost(\cdot))}{\geq} &
             \utility(\lambda) - \cost(\lambda) - \frac{\eps \mu}{4}\cdot\big(\utility(\lambda) + \cost(\lambda)\big)\\
    & \stackrel{(\cost(\lambda)\leq\utility(\lambda))}{\geq} &
             \utility(\lambda) - \cost(\lambda) - \frac{\eps \mu}{2}\cdot\utility(\lambda)\\
    & \stackrel{(\utility(\lambda)\leq \frac{1}{\mu}\cdot(\utility(\lambda)-\cost(\lambda))}{\geq} & 
             (1-\eps)\cdot\big(\utility(\lambda)-\cost(\lambda)\big).
  \end{eqnarray*}
  Hence, the first lower bound on $\utility(\lambda')-\cost(\lambda')$ claimed by the lemma holds. We further get
  \begin{eqnarray*}
    \utility(\lambda') - \cost(\lambda')
    & \stackrel{(\eqref{eq:utilityupper}\text{ for }\cost(\cdot))}{\geq} &
             \utility(\lambda') - \left(1+\frac{\eps\mu}{4}\right)\cdot\cost(\lambda)\\
    & \stackrel{(\cost(\lambda)\leq (1-\mu)\utility(\lambda))}{\geq} &
             \utility(\lambda') - \left(1+\frac{\eps\mu}{4}\right)\cdot(1-\mu)\cdot\utility(\lambda)\\
    & \stackrel{\eqref{eq:utilityupper}}{\geq} &
             \utility(\lambda') - \left(1+\frac{\eps\mu}{4}\right)^2\cdot(1-\mu)\cdot\utility(\lambda')\\
    & = & 
          \left[\left(1-\frac{\eps}{2}\right)\cdot\mu + \left(\frac{\eps}{2}-\frac{\eps^2}{16}\right)\cdot\mu^2 + \frac{\eps^2\mu^3}{16}\right]
          \cdot \utility(\lambda')
    \ \stackrel{(\eps\leq 1)}{\geq} \
          \frac{\mu}{2}\cdot \utility(\lambda').
  \end{eqnarray*}
  Hence, also the second claim of the lemma holds.
\end{proof}

\subsubsection{Node Utilities and Costs}
\label{sec:nodeutilitiesandcosts}

In some applications, a part of the cost and/or utility only depends on the labeling of individual nodes and not on the combined labeling of neighboring nodes. In such cases, it is natural to extend the cost and utility functions and also define the utility/cost of single nodes. The following lemma shows that node utilities and costs can be easily incorporated into our rounding framework.

\begin{lemma}\label{lemma:nodeutilitycost}
  Let $G=(V,E)$ be a multigraph, which is equipped with utility and cost functions $\utility$ and $\cost$. Assume that, the functions $\utility$ and $\cost$ are extended to also assign a non-negative utility and cost to each node $v\in V$, where the utility and cost of a node is a function of the node's label. \Cref{alg:basicroundingstep,alg:basicrounding} can be adapted to incorporate node utilities and costs so that \Cref{lemma:basicroundingstep,lemma:basicrounding} hold accordingly.
\end{lemma}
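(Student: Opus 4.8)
The plan is to prove the lemma by a simple reduction to the edge-only case, so that \Cref{lemma:basicroundingstep} and \Cref{lemma:basicrounding} can be invoked essentially verbatim. Suppose $\utility$ and $\cost$ additionally assign to each node $v\in V$ and each label $\alpha\in\Labels$ a non-negative node utility $\utility_{\mathrm{node}}(v,\alpha)$ and node cost $\cost_{\mathrm{node}}(v,\alpha)$. We build an augmented multigraph $\hat G=(\hat V,\hat E)$ by attaching a pendant leaf to every node: let $\hat V:=V\cup\set{v':v\in V}$, where each $v'$ is a new node whose only admissible label is a fresh symbol $\star$ (formally $\calL(v')=\set{\star}$ and $\calL(v)=\Labels$ for $v\in V$), and let $\hat E:=E\cup\set{\hat e_v:v\in V}$ with $\hat e_v:=\set{v,v'}$. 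We keep $\utility(e,\cdot)$ and $\cost(e,\cdot)$ unchanged for the original edges $e\in E$ and set $\utility(\hat e_v,\ell):=\utility_{\mathrm{node}}(v,\ell(v))$ and $\cost(\hat e_v,\ell):=\cost_{\mathrm{node}}(v,\ell(v))$. Since the label of $v'$ is fixed, these new edge functions depend only on the labels of the two endpoints of $\hat e_v$ (in fact only on $\ell(v)$), as required by the setup of \Cref{sec:roundingsetup}, and unrolling the definition \eqref{eq:fractionalutilitycost} of fractional edge utility/cost gives $\utility(\hat e_v,\lambda)=\sum_{\alpha\in\Labels}\lambda_\alpha(v)\cdot\utility_{\mathrm{node}}(v,\alpha)$, i.e.\ the expected node utility, and analogously for the cost. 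Hence, for any (fractional) label assignment whose restriction to $V$ is $\lambda$ and which puts full weight on $\star$ at every $v'$, the total utility of $\hat G$ equals the edge utility of $G$ plus the (expected) node utilities, and likewise for the cost; that is, on $\hat G$ the abstract framework's total utility and cost coincide exactly with the node-and-edge utility and cost of the original instance.

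The adapted algorithms are \Cref{alg:basicroundingstep} and \Cref{alg:basicrounding} run on $\hat G$ with the extended edge functions; in a distributed implementation each leaf $v'$ is simulated by its unique neighbor $v$, which suffices because $v'$ has degree one and the contribution of $\hat e_v$ to $\phi_{v,\alpha}$ and $\theta_{v,\alpha}$ (lines 9--10 of \Cref{alg:basicroundingstep}) depends only on $v$'s own label and is thus computable by $v$ exactly. A $1/2^k$-integral assignment $\lambda$ on $V$ extends to a $1/2^k$-integral assignment on $\hat V$ by setting $\lambda_\star(v')=1$ for all $v$, and the precondition $\utility(\lambda)-\cost(\lambda)\geq\mu\utility(\lambda)$ is, by the previous paragraph, exactly the precondition of \Cref{lemma:basicrounding} for $\hat G$. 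The leaves are inert throughout the rounding: in every call to \Cref{alg:basicroundingstep} the value $\lambda_\star(v')=1$ is an \emph{even} multiple of $\tfrac{1}{2K}$, so $\Labels_{v'}=\emptyset$ in line 6, the leaf keeps the label $\star$, and it contributes $\varphi_{v'}=0$ in the analysis. Since $\hat G$ with the extended edge utility/cost functions is a legitimate instance of the abstract setup, \Cref{lemma:basicroundingstep} applied to $\hat G$ gives, for the output $\lambda'$,
\[
\utility(\lambda')-\eta\cost(\lambda')\ \geq\ \utility(\lambda)-\eta\cost(\lambda)-\delta\cdot\big(\utility(\lambda)+\eta\cost(\lambda)\big),
\]
and \Cref{lemma:basicrounding} applied to $\hat G$ returns an integral $\ell$ with $\utility(\ell)-\cost(\ell)\geq(1-\eps)\big(\utility(\lambda)-\cost(\lambda)\big)$; here all utilities and costs are now read in the extended (node-and-edge) sense and the integral labels of the nodes $V$ are read off from $\ell$ (each leaf trivially carries $\star$). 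This is precisely the ``accordingly'' statement of \Cref{lemma:basicroundingstep,lemma:basicrounding}.

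The only points needing a word of care are the defective-coloring step and the model-dependent quantities, and both are routine. Step~2 of \Cref{alg:basicroundingstep} computes a weighted $\delta/6$-relative average defective coloring of $\hat G$, which is provided directly by \Cref{def:avgdefcoloring} and the cited algorithms run on $\hat G$; if one prefers the pendant edges to be bichromatic (so that they are handled inside the $E_b$-part of the analysis rather than among the monochromatic edges), one may afterward recolor each degree-one leaf $v'$ to a color different from that of $v$, which only decreases the total monochromatic weight and hence preserves the average-defect guarantee. For the model-dependent quantities: the maximum degree of $\hat G$ exceeds that of $G$ by at most one, an initial proper $\xi$-coloring $\varphi$ of $G$ extends to a proper $2\xi$-coloring of $\hat G$ by giving $v'$ the color $\varphi(v)+\xi$, and $\hat G$ is polynomially bounded in $\Delta$ whenever $G$ is and the node utilities and costs lie in the corresponding polynomially bounded range; none of this affects the round complexity. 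The main ``obstacle'' is therefore purely bookkeeping --- verifying that the auxiliary leaves are genuinely inert and that the pendant-edge functions satisfy the ``depends only on the two endpoints'' requirement --- after which the statement is an immediate corollary of \Cref{lemma:basicroundingstep,lemma:basicrounding}. (Equivalently, one could leave the graph unchanged and instead add $\utility_{\mathrm{node}}(v,\alpha)-\eta\cost_{\mathrm{node}}(v,\alpha)$ to $\phi_{v,\alpha}$ and $\utility_{\mathrm{node}}(v,\alpha)+\eta\cost_{\mathrm{node}}(v,\alpha)$ to $\theta_{v,\alpha}$ in lines 9--10 of \Cref{alg:basicroundingstep}; the proofs of \Cref{lemma:basicroundingstep,lemma:basicrounding} then go through verbatim, with the new node terms simply joining the $E_b(v)$ terms in \eqref{eq:varphibound} and contributing their full weight, since a node's own label is already fixed at the moment the node is rounded. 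We prefer the reduction because it re-proves nothing.)
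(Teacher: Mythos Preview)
Your proposal is correct and takes essentially the same approach as the paper: both attach to each node $v$ a pendant dummy node carrying a fixed (integral) label, put $v$'s node utility and cost on the new pendant edge, observe that the dummy is inert throughout the rounding, and force the pendant edge to be bichromatic in the defective coloring. Your write-up is more detailed (e.g., you spell out that $\Labels_{v'}=\emptyset$ because $1$ is an even multiple of $1/(2K)$, and you note the alternative of folding the node terms directly into $\phi_{v,\alpha},\theta_{v,\alpha}$), but the reduction is the same.
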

\begin{proof}
  Essentially, a node $v\in V$ can distribute its utility and cost arbitrarily among its edges and then use \Cref{alg:basicroundingstep,alg:basicrounding} to do the rounding. To make sure that handling node utilities and costs is independent of the concrete communication model, we do this as follows. For every node $v\in V$, we define a virtual dummy node $\tilde{v}$, which is simulated by node $v$. Nodes $v$ and $\tilde{v}$ are connected by a virtual edge $\tilde{e}_v$ and $\tilde{e}_v$ is the only edge of node $\tilde{v}$. The utility and cost of edge $\tilde{e}_v$ is always equal to the utility and cost of node $v$. The edge utility and cost of $\tilde{e}_v$ therefore only depend on the (fractional) label assignment to $v$ and are independent of the label of $\tilde{v}$. We can therefore initially assign an arbitrary label $\alpha$ to node $\tilde{v}$ (i.e., if we start with a fractional label assignment $\lambda$, then $\lambda_\alpha(\tilde{v})=1$ and $\lambda_\beta(\tilde{v})=0$ for all $\beta\neq \alpha$. In this way, the fractional label assignment of $\tilde{v}$ is never changed during the rounding process. When computing the defective coloring, we always assign $\tilde{v}$ a color that is different from $v$'s color so that the edge $\tilde{e}_v$ is bichromatic. Note that we can clearly assume that the defective coloring always has at least $2$ colors as otherwise all edges would be monochromatic and the rounding could not satisfy any non-trivial guarantees.
\end{proof}

\subsection{Distributed Implementation of Basic Rounding Algorithm}
\label{sec:distributedrounding}

As long as every edge $e\in E$ in the multigraph $G=(V,E)$ is between nodes that are close in the underlying communication graph, implementing \Cref{alg:basicroundingstep,alg:basicrounding} is straightforward in the \LOCAL model. We however need to be more careful if the communication links in the underlying communication graph have limited capacity (e.g., when implementing the algorithms in the \CONGEST model).

\subsubsection[Direct Communication on $G$]{Direct Communication on \boldmath$G$}
\label{sec:directcommunication}

We first assume that communication is done directly between neighbors in $G$ and we explicitly analyze the necessary maximum message size for implementing our rounding algorithm.

\begin{lemma}\label{lemma:distributedonG}
  Let $G=(V,E)$ be a multigraph, which is equipped with utility and cost functions $\utility$ and $\cost$ and with a fractional label assignment $\lambda$ such that for every label $\alpha\in \Labels$ and every $v\in V$, if $\lambda_\alpha(v)>0$, then $\lambda_\alpha(v)\geq \lambda_{\min}$ for some given value $\lambda_{\min}\in(0,1]$. Further assume that $G$ is equipped with a proper $\zeta$-vertex coloring. Let $\eps\in [0,1]$ and $\mu\in(0,1]$ be two parameters. If $\utility(\lambda)-\cost(\lambda)\geq \mu\utility(\lambda)$ and if each node knows the utility and cost functions of its incident edges, there is a deterministic $O\big(\frac{1}{\eps \mu}\cdot \log^2\big(\frac{1}{\eps\mu\lambda_{\min}}\big)+ \log\big(\frac{1}{\eps\mu\lambda_{\min}}\big)\cdot \log^* \zeta\big)$-round distributed message passing algorithm on $G$ that computes an integral label assignment $\ell$ for which
  \[
  \utility(\ell)-\cost(\ell) \geq (1-\eps)\cdot \big(\utility(\lambda) - \cost(\lambda)\big).
  \]
  The algorithm uses messages of at size most $O\big(\min\set{|\Labels|, \log|\Labels|/(\eps\mu\lambda_{\min})}+\log\zeta\big)$ bits. If the given fractional label assignment $\lambda$ is $1/2^k$-integral for some integer $k\geq 1$, the round complexity of the algorithm is $O\big(\frac{k^2}{\eps\mu} + k \log^*\zeta\big)$ and the maximum message size is $O\big(\min\set{|\Labels|, 2^k\log|\Labels|}+\log\zeta\big)$ bits.
\end{lemma}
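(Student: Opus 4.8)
The plan is to run \Cref{alg:basicrounding} (hence, repeatedly, \Cref{alg:basicroundingstep}) directly on $G$ and to bound the resulting round and message complexity; the quality guarantee $\utility(\ell)-\cost(\ell)\ge(1-\eps)(\utility(\lambda)-\cost(\lambda))$ is then inherited from \Cref{lemma:basicrounding}. First I would dispose of the non-$1/2^k$-integral case: if $\lambda$ is not $1/2^k$-integral, apply \Cref{lemma:fractionalrounding} with parameters $\eps/2$ and $\mu$---a purely local step, no communication---to obtain a $1/2^k$-integral $\lambda'$ with $2^k=O(1/(\eps\mu\lambda_{\min}))$, so that $k=O(\log(1/(\eps\mu\lambda_{\min})))$, with $\utility(\lambda')-\cost(\lambda')\ge(1-\eps/2)(\utility(\lambda)-\cost(\lambda))$ and $\utility(\lambda')-\cost(\lambda')\ge(\mu/2)\cdot\utility(\lambda')$. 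Running \Cref{alg:basicrounding} on $\lambda'$ with parameters $\eps/2$ and $\mu/2$ then yields, by \Cref{lemma:basicrounding}, an integral $\ell$ with $\utility(\ell)-\cost(\ell)\ge(1-\eps/2)^2(\utility(\lambda)-\cost(\lambda))\ge(1-\eps)(\utility(\lambda)-\cost(\lambda))$. If $\lambda$ is already $1/2^k$-integral we skip \Cref{lemma:fractionalrounding} and run \Cref{alg:basicrounding} directly with $\eps$ and $\mu$ and the given $k$; in either case the number of rounding steps is $k$ and the step-parameter is $\delta=\Theta(\eps\mu/k)$.

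Next I would implement one execution of \Cref{alg:basicroundingstep}. After a single round in which every node sends the relevant part of its current fractional assignment to its neighbors (after the first step, only the incremented or decremented labels from the previous step), every node $v$ knows $\lambda$ on each of its incident edges; since by hypothesis $v$ also knows the utility and cost functions of those edges, it computes each weight $w_e=\utility(e,\lambda)+\eta\cost(e,\lambda)$ \emph{locally}. Line~2 is the weighted average defective coloring routine of \cite{GhaffariK21}; the point for the message size is that this routine need never transmit the (possibly large, real-valued) weights, since a node can derive, for each color class, the total incident $w$-weight into that class from its own weights together with the colors announced by its neighbors, so its messages carry only bounded-range color labels and it runs in $O(1/\delta+\log^*\zeta)=O(k/(\eps\mu)+\log^*\zeta)$ rounds starting from the given proper $\zeta$-coloring. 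One further round of exchanging these defective colors lets each $v$ determine its bichromatic incident edges $E_b(v)$. For the main loop (lines~4--15) I would iterate over the $p=O(1/\delta)$ color classes one at a time; in the round for color $\gamma$, each node $v$ of that color already knows the up-to-date fractional values of all its neighbors---in particular of those in smaller classes that have already moved, exactly as the analysis of \Cref{lemma:basicroundingstep} requires---so it computes $\phi_{v,\alpha}$ and $\theta_{v,\alpha}$ for all $\alpha\in\Labels_v$ \emph{exactly}, may take $\hat\phi_{v,\alpha}=\phi_{v,\alpha}$ (trivially satisfying line~11), forms $\Labels_v^+,\Labels_v^-$, updates its fractional values, and broadcasts the change to its neighbors, all in $O(1)$ rounds. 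Hence one call of \Cref{alg:basicroundingstep} costs $O(k/(\eps\mu)+\log^*\zeta)$ rounds, and \Cref{alg:basicrounding}, which makes $k$ such calls, costs $O(k^2/(\eps\mu)+k\log^*\zeta)$ rounds; the only data ever sent are bounded-precision defective-coloring labels and sparse encodings of the (changes to the) fractional assignments, which fit in $O(\min\{|\Labels|,2^k\log|\Labels|\}+\log\zeta)$ bits. Substituting $k=O(\log(1/(\eps\mu\lambda_{\min})))$ and $2^k=O(1/(\eps\mu\lambda_{\min}))$ gives the bounds claimed in the general case.

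The rounding itself needs no new correctness argument---\Cref{lemma:basicroundingstep,lemma:basicrounding,lemma:fractionalrounding} already supply it and the distributed algorithm merely executes those procedures faithfully---so the substance of the proof is the complexity accounting above. I expect the main obstacle to be the message-size bound: it forces the implementation to obtain the weights $w_e$ and the potentials $\phi_{v,\alpha},\theta_{v,\alpha}$ from locally held data (the incident utility and cost functions together with the neighbors' sparsely encoded fractional assignments) rather than by shipping them across links, and it forces the defective-coloring subroutine to communicate only bounded-range color identifiers. A secondary point of care is the sequential, class-by-class scheduling of lines~4--15: the implementation must guarantee that a node acts only once it has received the updates of all its lower-colored neighbors, which is exactly the order in which the intermediate assignments $\lambda_0,\lambda_1,\dots,\lambda_p$ in the proof of \Cref{lemma:basicroundingstep} are defined.
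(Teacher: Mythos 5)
Your proposal is correct and takes essentially the same approach as the paper's proof: run \Cref{alg:basicrounding} directly on $G$, account for one call to \Cref{alg:basicroundingstep} as the defective-coloring cost $O(k/(\eps\mu)+\log^*\zeta)$ plus $O(1)$-round iterations over the $O(k/(\eps\mu))$ color classes, and bound message sizes by sending sparsely encoded (or per-label $O(1)$-bit) fractional-value updates. You are somewhat more explicit than the paper about the preprocessing via \Cref{lemma:fractionalrounding} and the choice $\eps/2$, $\mu/2$ to compose the two multiplicative losses, which the paper glosses over; otherwise the reasoning matches.
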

\begin{proof} The algorithm is \Cref{alg:basicrounding} which itself is simply a number of invocations of \Cref{alg:basicroundingstep}.

In the beginning, each node sends its initial fractional label assignments to all of its neighbors. This can be done using messages of size $O(|\Labels| k)$, where we use $O(k)$ bits to encode the fractional value of each label. Since the runtime of the overall algorithm exceeds $\Theta(k)$, without any loss in round complexity, we can perform this communication by using $k$ rounds and sending one message of size $O(|\Labels|)$ per round. When $2^k<|\Labels|$, a more efficient method would be to send $O(\log|\Labels|)$ bits for each of the at most $2^k$ non-zero fractional values, for a total of $O(2^{k} log |\Labels|)$ bits. These share the initial fractional label assignments in at most $k$ rounds using messages of size $O\big(\min\set{|\Labels|, 2^k\log|\Labels|}\big)$ bits. 

Then, we have $k$ rounding steps, where we invoke \Cref{alg:basicroundingstep}. Per step, in line 2 of \Cref{alg:basicroundingstep}, we first compute an average defective coloring with defect $\delta/6=\frac{\eps \mu}{36k}$, in time $O\big(\frac{k}{\eps\mu} + \log^*\zeta\big)$ using messages of $O(\log\zeta)$ bits, via the average weighted defective coloring algorithm of Ghaffari and Kuhn~\cite[Lemma 2.3]{GhaffariK21}. The produced coloring has $O\big(\frac{k}{\eps\mu}\big)$ colors and therefore the inner loop in line 4 of \Cref{alg:basicroundingstep} has $O\big(\frac{k}{\eps\mu}\big)$ iterations.

In each iteration, it is needed that each node $v$ knows the fractional label assignment of all its neighbors. In the first rounding step, this is given as nodes shared their initial fractional label assignments at the beginning, as we discussed above. After that, in each rounding step, the value of each label either decreases by a $ 2$ factor, increases by a $2$ factor, or remains as it was. Hence, each node can inform its neighbors about the update to its fractional label assignment using messages with $O(1)$ bits per label, i.e., $O(|\Labels|)$. Alternatively, as above, when $2^k<|\Labels|$, a more efficient method would be to send $O(\log|\Labels|)$ bits for each of the at most $2^k$ non-zero fractional values, for a total of $O(2^{k} \log |\Labels|)$ bits. Hence, the fractional value updates can be performed using messages of size at most $O\big(\min\set{|\Labels|, 2^k\log|\Labels|}\big)$ bits. Then, given that each node $v$ knows the fractional label assignment of its neighbors, it can compute $\phi_{v,\alpha}$ and $\theta_{v,\alpha}$, in lines 9 and 10 of \Cref{alg:basicroundingstep}. It thus can perform the rest of the computations of this rounding step locally, after which it can inform its neighbors about its fractional label assignment update using a message of size $O\big(\min\set{|\Labels|, 2^k\log|\Labels|}\big)$ bits, as discussed above. 
Since there are $k$ rounding steps and each takes $O\big(\frac{k}{\eps\mu} + \log^*\zeta\big)$ rounds, the claimed complexity follows. 
\end{proof}

\subsubsection[Communication on $G^2$]{Communication on \boldmath$G^2$}
\label{sec:d2communication}

In some cases, the graph on which one runs the rounding algorithm is not equal to the communication graph $G$. For example, for computing an MIS in \Cref{sec:MIS} or a set cover in \Cref{sec:setcover}, our cost functions has contributions by pairs of nodes that are not direct neighbors in $G$, but which have a common neighbor in $G$. We generically analyze such a communication setting here.

\paragraph{Communication Model.} Formally, we assume that $G=(V,E)$ is the communication graph and that the rounding is done on a virtual multigraph $H=(V_H,E_H)$ with the following properties. 

\begin{definition}[d2-Multigraph]\label{def:d2multigraph}
  A \emph{d2-multigraph} is a multigraph $H=(V_H,E_H)$ that is simulated on top of an underlying communication graph $G=(V,E)$ by a distributed message-passing algorithm on $G$. The nodes of $H$ are a subset of the nodes of $G$, i.e., $V_H\subseteq V$. The edge set $E_H$ consists of two kinds of edges, \emph{physical edges} and \emph{virtual edges}. Physical edges in $E_H$ are edges between direct neighbors in $G$. For each physical edge in $e\in E_H$ with $V(e)=\set{u,v}$, both nodes $u$ and $v$ know about $e$. Virtual edges in $E_H$ are edges between two nodes $u,v\in V_H$ for which there is a common neighbor in $G$.  Each virtual edge between two nodes $u$ and $v$ is known and managed by a common neighbor $u$ and $v$. We define a function $\xi:E_H\to V$ to refer to the node managing the virtual edge.
\end{definition}

We will always assume that the nodes know all relevant information about the edges of $H$ they manage.
For example, if we run an algorithm on a graph $H$ with edge weights $w:E_H\to\Rp$, then for every physical edge $e$ between $u$ and $v$, we assume that $u$ and $v$ know $w(e)$ and if $e$ is a virtual edge between $u$ and $v$, the common neighbor $\xi(e)$ of $u$ and $v$ knows $w(e)$. Similarly, when running an instance of the rounding algorithm of \Cref{sec:basicrounding} on $H$, the node(s) simulating an edge $e\in E_H$ know the utility and cost functions for edge $e$. Note also that nodes $u$ and $v$ might not be aware of virtual edges between them and that $H$ might have several edges between two nodes $u$ and $v$. If $u$ and $v$ are neighbors in $G$, there can be one physical edge between $u$ and $v$. If $u$ and $v$ have common neighbors in $G$, there can potentially be a virtual edge between $u$ and $v$ for each common neighbor of $u$ and $v$ in $G$. We could also allow multiple physical edges or multiple virtual edges for the same common neighbor in $G$. However such edges can typically also easily be aggregated into a single physical edge and one virtual edge per common neighbor in $G$. In the following, we assume standard synchronous, distributed message-passing algorithms on $G$, where in each round, a message of a certain size can be exchanged over each edge in $G$ and we show how to utilize this communication to implement \Cref{alg:basicroundingstep,alg:basicrounding} on the virtual multigraph $H$.

\paragraph{Computation of a Defective Coloring.}
 As an important part of \Cref{alg:basicroundingstep}, the nodes have to compute a weighted average defective coloring of $H$. We first analyze how efficiently such a defective coloring can be computed. In the following, we say that an edge weight function $w$ is polynomially bounded in $q$ if there is a value $Q\leq \poly(m)$ such that for every edge $e$, $w(e)=0$ or $w(e)\in[1/Q, Q]$. Note that the following lemma statement contains a somewhat artificial technical condition $\delta > 2^{-O(\sqrt{\log n})}$ on the relative defect parameter $\delta$. This condition is added to make the bounds look slightly nicer and cleaner. At the cost of a minor additional term in the round complexity, the condition could also be dropped. Also, the condition could also be strengthened to $\delta>2^{-O(\log^\vartheta n)}$ for any constant $\vartheta<1$ and in fact even for some $\vartheta=1-o(1)$. Note than in our applications the value of $\delta$ is typically $1/\poly\log \Delta$ or $1/\poly\log n$.

\begin{lemma}\label{lemma:d2defective}
  Let $H=(V_H,E_H)$  be an $n$-node d2-multigraph with edge weights $w:E_H\to \Rp$, assume that  $G=(V,E)$ is the underlying communication graph, and let $\Delta$ denote the maximum degree of $G$. Assume that the edge weights $w(e)$ are polynomially bounded in $q\geq \Delta$ and let $\delta > 2^{-O(\sqrt{\log n})}$ be a parameter. A weighted average $\delta$-relative defective $O(1/\delta)$-coloring of $H$ can be computed deterministically $O\big(\frac{\log\log q}{\delta}+\log^* n\big)$ \CONGEST rounds on $G$.
\end{lemma}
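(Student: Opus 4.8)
The plan is to reduce Lemma~\ref{lemma:d2defective} to running a \emph{known} weighted average defective coloring algorithm for ordinary graphs — the $O(1/\delta+\log^*\xi)$-round algorithm of Ghaffari and Kuhn~\cite{GhaffariK21} (adapting~\cite{Kuhn2009WeakColoring,KawarabayashiS18,BEG18}), which, given a proper $\xi$-coloring, produces a weighted average $\delta$-relative defective $O(1/\delta)$-coloring — and then \emph{simulating} this ``engine'' on the virtual multigraph $H$ over the communication graph $G$. Two preparatory steps are essentially free. First, we fix once and for all a manager for every virtual edge (say, the common neighbor of smallest ID): after one round each node knows, for each incident virtual edge, its manager and the relevant weight, and every node $v$ can announce its current state to all $G$-neighbors — in particular to all managers of its incident virtual edges — in $O(1)$ rounds with $O(\log n)$-bit messages. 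Second, the engine needs a proper coloring of $H$ to start; since every edge of $H$ joins two nodes at $G$-distance at most $2$, a proper $\poly(\Delta)$-coloring of $G^2$ is a proper coloring of $H$, and such a coloring can be computed in $O(\log^* n)$ rounds on $G$ via Linial's technique~\cite{linial92}, which accounts for the additive $\log^* n$. As $\Delta\le q$, the number of colors $\xi$ is $\poly(q)$ and $\log^*\xi=\log^* n+O(1)$.

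The core of the proof is the implementation of a single engine round on $H$ with small messages. The engine only accesses the weighted graph through quantities of the form ``for a (candidate) color $c$, the total weight of edges incident to $v$ whose other endpoint currently has color $c$,'' and its analysis compares such weighted degrees only against thresholds that are constant-fraction multiples of $v$'s total incident weight; hence it tolerates constant-factor errors in these quantities, at the cost of rescaling the defect parameter by a constant (which leaves the $O(1/\delta)$ bounds on the number of colors and rounds unchanged). For physical edges of $H$, $v$ computes the relevant weighted degree directly from the colors its $G$-neighbors announced. For virtual edges, the manager $w$ of a virtual edge $\{u,v\}$ knows its weight and the current colors of $u$ and $v$ (both $G$-neighbors of $w$), so $w$ locally aggregates, over the virtual edges it manages incident to $v$, the weight falling into each dyadic scale $[2^j,2^{j+1})$, and reports this to $v$; node $v$ then sums these reports over its at most $\Delta$ managers — one report per incident $G$-edge, so no single $G$-edge is congested — and adds the physical contribution. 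Because only constant-factor precision is needed, each manager may report, per scale, a \emph{$2$-approximation} of the aggregated weight, encoded as the integer $\lfloor\log_2(\cdot)\rfloor$, which lies in a range of size $O(\log Q)=O(\log q)$ and thus fits in $O(\log\log q)$ bits; summing $2$-approximations yields a $2$-approximation of the sum, and $v$ performs this summation locally, so the aggregation itself needs no further communication.

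This message-size issue is exactly the point flagged in the overview and the main obstacle: reporting exact $\Theta(\log q)$-bit weights over all $O(\log Q)$ scales gives a per-manager message of $\Theta(\log^2 q)$ bits, i.e.\ a multiplicative $\Theta(\log\Delta)$ overhead per engine round, whereas the $2$-approximate encoding gives $O(\log q\cdot\log\log q)$ bits, which is $O(\log\log q)$ rounds on $G$ once we use $\log q=O(\log n)$ (polynomial boundedness). The bootstrap part of the engine — the $O(\log^*\xi)$ Linial-style color-reduction rounds — only requires $v$ to learn $O(\log n)$-bit messages and hence costs $O(1)$ rounds on $G$ each. Therefore the $O(1/\delta)$ main engine rounds cost $O(\log\log q)$ rounds on $G$ each and the $O(\log^*\xi)=O(\log^* n)$ bootstrap rounds cost $O(1)$ each, for a total of $O\big(\tfrac{\log\log q}{\delta}+\log^* n\big)$ rounds on $G$; the hypothesis $\delta>2^{-O(\sqrt{\log n})}$ is used only to absorb lower-order additive terms (such as encoding the $O(1/\delta)$ colors, or a $\log(1/\delta)$ overhead of the color-reduction subroutine) into this bound. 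Correctness — that the output is a weighted average $\delta$-relative defective $O(1/\delta)$-coloring of $H$ in the sense of \Cref{def:avgdefcoloring} — follows from correctness of the engine, since the simulation faithfully reproduces, up to the deliberately absorbed constant factor, every quantity the engine inspects.
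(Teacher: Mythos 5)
Your high-level plan is essentially the paper's: compute a proper $\poly(\Delta)$-coloring of $G^{2}$ as the initial coloring of $H$ (this accounts for the additive $\log^{*}n$), let each virtual edge be handled by a designated common $G$-neighbor (``manager''), observe that the weighted-defective-coloring primitive only queries weighted degrees per candidate color and tolerates constant-factor estimates, and have managers aggregate and report their contributions as $2$-approximations encoded in $O(\log\log q)$ bits. All of that matches \cite{Kuhn2009WeakColoring,KawarabayashiS18,BEG18,GhaffariK21} as adapted in the paper. The issue is in the accounting of the first phase.

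\textbf{Gap in the phase-1 round analysis.} You assert that each of the $O(\log^{*}\xi)$ Linial-style color-reduction rounds ``only requires $v$ to learn $O(\log n)$-bit messages and hence costs $O(1)$ rounds on $G$.'' This does not hold for \emph{weighted defective} color reduction. In each such step a node $v$ must learn, for \emph{each of its candidate colors}, a constant-factor estimate of the weight of incident $H$-edges to nodes sharing that candidate color. With $s_0=\Theta(1/\delta)$ and $\tau_0=\Theta(\log\xi)$ the number of candidate colors in the very first step is $p_0 = s_0\tau_0 = \Theta\!\big(\tfrac{\log\xi}{\delta}\big)$; with your initial $\xi=\poly(q)$ (or the paper's $\xi=n$) that is $\Theta\!\big(\tfrac{\log q}{\delta}\big)$ (or $\Theta\!\big(\tfrac{\log n}{\delta}\big)$). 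At $O(\log\log q)$ bits per candidate color, step $0$ alone requires $\Theta\!\big(\tfrac{\log q\cdot\log\log q}{\delta}\big)$ bits per manager, i.e.\ $\Theta\!\big(\tfrac{\log\log q}{\delta}\big)$ \congest rounds on $G$, not $O(1)$. Your split ``phase 1 $\to O(\log^{*}n)$, phase 2 $\to O(\log\log q/\delta)$'' therefore does not follow from your per-round claim. The paper's proof handles exactly this: it tracks $s_i,\tau_i,p_i$ across steps, deliberately shrinks $s_0$ (to $4/\delta$, pushing larger $s_i$ to later steps) so that $p_0$ is as small as possible, and sums
\[
O\!\left(\log^{*}n + \frac{\big(s_0\tau_0 + s_1\tau_1\cdot\log^{*}n\big)\log\log q}{\log n}\right)=O\!\left(\log^{*}n + \frac{\log\log q}{\delta}\right),
\]
where bounding $s_1\tau_1\log^{*}n = O(\log n/\delta)$ is precisely where the hypothesis $\delta>2^{-O(\sqrt{\log n})}$ is used. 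Without this step-by-step argument (or an equivalent one), the stated bound is asserted rather than proved.

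A secondary, smaller point: the manager's aggregation you describe (``weight falling into each dyadic scale $[2^j,2^{j+1})$ ... reports this to $v$'') conflates two things. What is needed is a report \emph{per candidate color} of a $2$-approximate total weight, naturally encoded as $\lfloor\log_2(\cdot)\rfloor$ in $O(\log\log q)$ bits; one does not iterate over weight scales. As written, the scheme is ambiguous and, taken literally, would inflate the per-round cost. This is likely a notational slip, but it should be stated correctly since the $O(\log\log q)$ per-color cost is exactly the quantity that drives the final bound.
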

\begin{proof}
  As a first step, we describe how to compute a weighted $\delta$-relative defective $O(1/\delta^2)$-coloring. For this, we adapt an algorithm that has been described in \cite{Kuhn2009WeakColoring} and that has been adapted to the weighted case in \cite{KawarabayashiS18}. The algorithm is based on a classic $O(\log^* n)$-round $O(\Delta^2)$-coloring algorithm of Linial~\cite{linial1987LOCAL}. The algorithm starts with an initial (defective) vertex coloring (in our case given by the unique $O(\log n)$-bit IDs) and consists of $O(\log^* n)$ consecutive steps. In each step, the coloring is improved in the following way. As a function of its current color (and without communicating), each node computes a set of candidate colors for the next coloring. Each node then picks a candidate color that (approximately) minimizes the total weight to neighbors (of different initial colors) that also have this candidate color. If in each step of the algorithm, a node chooses a candidate color for which the total weight to neighbors with the same candidate color is within a factor $2$ of the minimum, the relative defect in the end also only grows by a factor $2$. In order to determine its new color, in each step, each node, therefore for each of its candidate colors needs to approximately learn the total weight of its edges to neighboring nodes that also have this candidate color.

  Before making the algorithm more concrete, let us focus on the communication required from a single step. First note that since node colors can always be represented by $O(\log n)$ bits (we start with $O(\log n)$-bit IDs), in a single round on the communication graph $G$, each node can learn the colors of all its $G$-neighbors. This implies that each node knows the candidate colors of all its $G$-neighbors. Assume now that in a given step, each node $v$ has at most $p$ candidate colors. Therefore, for each of its at most $p$ candidate colors $z$, each node $v\in V_H$ needs to learn an estimate of the total weight of its edges in $E_H$ to nodes that also have $z$ as a candidate color. Because $v$ knows the candidate colors of its $G$-neighbors, it exactly knows the total weight of the physical edges to neighbors with $z$ as a candidate color. Let us consider a virtual edge $e$ between node $v$ and some other node $u$. The common $G$-neighbor $\xi(e)$ of $u$ and $v$ knows the weight of $e$ and $\xi(e)$ also knows that candidate sets of both $v$ and $u$ and potentially also of other virtual edges of $v$ for which it is responsible. For each candidate color $z$ of $v$, node $\xi(e)$ can aggregate the total weight of $v$'s virtual edges for which $\xi(e)$ is responsible and which go to nodes that also have candidate color $z$. Because the weights are polynomially bounded in $q\geq \Delta$, the total weight of those virtual edges is either $0$ or a value between $1/\poly(q)$ and $\poly(q)$. Since $v$ only needs to learn the weight to other nodes with candidate color $z$ up to a factor $2$, it suffices to communicate one of the $O(\log q)$ different possible values over each edge of $G$. Therefore, for each candidate color, the algorithm has to send $O(\log\log q)$ bits over each $G$-edge. A defective coloring step with at most $p$ candidate colors can therefore be carried out in $O\big(1+\frac{p\log\log q}{\log n}\big)$ rounds in the \CONGEST model.
  
  Let us now look at the details of the individual steps of the defective $O(1/\delta^2)$-coloring algorithm. In \cite{linial1987LOCAL} (also cf.\ \cite{erdos82,MausTonoyan20}), it is shown that for positive integer parameters $s$ and $N$ and a sufficiently large constant $c>0$, there are $N$ sets $S_1,\dots,S_N$ such that
  \[
  \forall i\in [N]: S_i\in [s^2\cdot \tau],\
  |S_i| \leq s\cdot \tau,\ \text{and}\ 
  \forall i\neq j: |S_i\cap S_j| \leq \tau,\ \text{where}\ 
  \tau = c\cdot \min\set{\log N, \log_s^2 N}.
  \]
  Note that if we have a proper $N$-coloring of the nodes, and when choosing $s=1/\delta$, we can compute a weighted $\delta$-defective $\tau/\delta^2$-coloring as follows. Each node $v$ of initial color $i\in [N]$, picks set $S_i$ as its set of candidate colors. Because the candidate sets of nodes of different colors (and thus of neighbors) intersect in at most $\tau$ colors, each neighbor of $v$ only shares at most a $\delta$-fraction of $v$'s candidate colors. Therefore, there must be a color among $v$'s candidate colors such that the total weight of $v$'s edges to neighbors that also have this candidate color is at most a $\delta$-fraction of the total weight of all of $v$'s edges. If $v$ only knows a constant factor approximation to the total weight of its edges for each candidate color, taking the color with the best estimate still guarantees a relative defect of $\delta$ if we choose $s=c'/\delta$ for a sufficiently large constant $c'$. In the following, we will implicitly assume that all $s$-values are multiplied by a sufficiently large constant factor to make sure that a constant-factor estimate for the total edge weight per candidate color is sufficient. If the initial $N$-coloring is already a defective coloring with some relative defect $\delta_0$, then the argument holds for the bichromatic edges (w.r.t.\ to the initial $N$-coloring) and the new defective coloring becomes $(\delta_0+\delta)$-relative defective.
  
  In \cite{Kuhn2009WeakColoring,KawarabayashiS18}, it is shown that when starting with an $N$-coloring and running $t=O(\log^* n)$ steps $i=1,\dots,t$ of the above algorithm with $s_i=2^{t-i+1}/\delta$, the overall relative defect in the end is $1/s_1+\cdots+s_{t}<\delta$ and the final number of colors is $O(s_{t-1}^2)=O(1/\delta^2)$. In the very first step, when starting with the initial $O(\log n)$-bit IDs, we have $\tau=\Theta(\log n)$ and in subsequent steps, $\tau$ will be exponentially smaller. As $\tau$ goes linearly into the number of candidate colors and thus the number of bits that have to be transmitted, we would like $s$ to be as small as possible in the very first step. We therefore add a step $0$, where we set $s_0=4/\delta$ and we multiply all the $s_i$ for $i>0$ by $4$. Let $\tau_i$ be the value of $\tau$ in step $i$ and let $p_i$ be the maximum number of candidate colors per node in step $i$. The overall relative defect is then less than $\delta/2$. In step $0$, the number of candidate colors per node is then at most $p_0=s_0\cdot\tau_0=O\big(\frac{\log n}{\delta}\big)$. In subsequent steps $i=1,\dots,t$, the number of candidate colors is at most $p_i \leq s_i\cdot \tau_i \leq s_1\cdot \tau_1$. After the step $0$, the number of colors is $O\big(\frac{\log n}{\delta^2}\big)$ and we therefore have $\tau_1 = O(\log\log n + \log(1/\delta))$. We further have $s_1=2^{O(\log^* n)}/\delta$. The total number of \CONGEST rounds on $G$ to run all $O(\log^* n)$ steps is therefore
  \[
  O\left(\log^* n + \frac{(s_0\tau_0 + s_1\tau_1\cdot\log^* n)\cdot \log\log q}{\log n}\right)
  = O\left(\log^* n + \frac{\log\log q}{\delta}\right).
  \]
  In the above equation, we use that $s_1\tau_1\log^* n = O(1/\delta)\cdot 2^{O(\log^* n)}\cdot O(\log^* n)\cdot O(\log\log n + \log(1/\delta)) = O(\log(n)/\delta)$. Note that we have $\delta \geq 2^{-O(\sqrt{\log n})}$ and thus $\log(1/\delta)=O(\sqrt{\log n})$.
  
  We now have a weighted $\delta/2$-relative defective coloring with $O(1/\delta^2)$ colors and we still need to turn this into a $\delta$-relative average defective coloring with $O(1/\delta)$ colors. For this, we adapt a distributed coloring algorithm that was presented in \cite{BEG18}. At the core of the algorithm of \cite{BEG18} is the following idea. For any prime $p$, there are $p^2$ orderings of the numbers $1,\dots,p$ such that any two orderings coincide in at most $1$ place. We now choose $p\geq 8/\delta$ and such that $p^2$ is larger than the initial number of $O(1/\delta^2)$ colors. In this way, nodes of a different initial color pick different orderings of the numbers $1,\dots,p$. The algorithm now consists of $p$ steps. For a node $v$, let $z_{v,1},\dots,z_{v,p}$ be the ordering of colors $1,\dots,p$ of node $v$. In step $i$, $v$ tries to take color $z_{v,i}$ and it takes the color if less than a $\delta/4$-fraction of the weight of its edges go to nodes that have already committed to taking color $z_{v,i}$ in previous steps or that are currently also trying to take color $z_{v,i}$. Note that if every node takes a color, the weighted average relative defect is at most $\delta/2$. This can be seen by orienting every edge towards the node that first commits to taking a color and orienting arbitrarily in case both nodes take the color in the same step. Because every node $v$ can conflict with each neighbor at most twice (once when trying the same color and once when $u$ has already committed to a color and $v$ tries this color), the total weight of the monochromatic outgoing edges of a node $v$ is at most at $\delta/4$-factor of the total total weight of all edges of $v$. The total weight of all monochromatic edges is therefore a $\delta/4$-factor of $2$ times the total weight of all edges. For further details, we refer to \cite{BEG18}.
  
  It remains to show that each step of this algorithm can be implemented efficiently on the d2-multigraph $H$ in the \CONGEST model on $G$. To implement a step, a node $v$ needs to know the total weight of all its edges to neighbors that have already committed to the color that $v$ currently tries or that are currently trying the same color. Again, the algorithm can easily be adapted such that it suffices for each node $v$ to learn a constant approximation of this total edge weight. This now is the same communication problem that we had in the first part of the algorithm for a single candidate color. We therefore need to communicate $O(\log\log q)$ bits over each edge of $G$. This can clearly be done in  $O\big(1+\frac{\log\log q}{\log n}\big)=O(\log\log q)$ rounds in the \CONGEST model. The total time for the second step is therefore $O(p\log\log q)=O\big(\frac{\log\log q}{\delta}\big)$ rounds. This concludes the proof.
\end{proof}

\paragraph{Implementing the Rounding Algorithm.} We next show how efficiently \Cref{alg:basicroundingstep,alg:basicrounding} can be implemented on a d2-multigraph of an underlying communication graph, when using \CONGEST algorithms on the communication graph. For the following lemma, recall that a rounding instance is called polynomially bounded in some value $q$ if all utility and cost values are either $0$ or bounded between $1/\poly(q)$ and $\poly(q)$ and if all fractional node values are either $0$ or lower bounded by $1/\poly(q)$.

\begin{lemma}\label{lemma:d2rounding}
  Let $H=(V_H,E_H)$ be a d2-multigraph of an underlying communication graph $G=(V,E)$ of maximum degree $\Delta$. Assume that $H$ is equipped with utility and cost functions $\utility(\cdot)$ and $\cost(\cdot)$ (with label set $\Labels$) and with a fractional label assignment $\lambda$. Further assume that the given rounding instance is polynomially bounded in a parameter $q \leq n$. Then for every constant $c>0$ and every $\eps,\mu>\max\set{1/q^c, 2^{-c\sqrt{\log n}}}$, if $\utility(\lambda)-\cost(\lambda)>\mu\utility(\lambda)$, there is a deterministic \CONGEST algorithm on $G$ to compute an integral label assignment $\ell$ for which $\utility(\ell)-\cost(\ell)\geq (1-\eps)\cdot\big(\utility(\lambda)-\cost(\lambda)\big)$ and such that the round complexity of the algorithm is
  \[
    O\left(\frac{\log^2 q}{\eps\cdot\mu}\cdot\left(\frac{|\Labels| \log(q\Delta)}{\log n} + \log\log q \right)+\log q\cdot\log^* n \right).
  \]
\end{lemma}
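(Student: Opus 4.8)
The plan is to reduce the statement to the abstract rounding results \Cref{lemma:basicrounding} and \Cref{lemma:fractionalrounding} together with the defective-coloring primitive \Cref{lemma:d2defective}, and then to carefully implement the communication needed to simulate \Cref{alg:basicroundingstep} and \Cref{alg:basicrounding} on the d2-multigraph $H$ using \CONGEST on $G$. For preprocessing and correctness: since the instance is polynomially bounded in $q$, every nonzero fractional value is at least some $\lambda_{\min}\ge 1/q^{O(1)}$, so applying \Cref{lemma:fractionalrounding} with parameters $\eps/2$ and $\mu$ produces, without any communication, a $1/2^k$-integral assignment $\lambda'$ with $2^k=O(1/(\eps\mu\lambda_{\min}))$, hence $k=O(\log q)$ because $\eps,\mu>1/q^c$; moreover $\utility(\lambda')-\cost(\lambda')\ge(1-\eps/2)(\utility(\lambda)-\cost(\lambda))$ and $\utility(\lambda')-\cost(\lambda')\ge(\mu/2)\utility(\lambda')$, and $\lambda'$ is again polynomially bounded in $q$ (with a larger constant). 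We then run \Cref{alg:basicrounding} with parameters $\eps/2$ and $\mu/2$; by \Cref{lemma:basicrounding} the resulting integral $\ell$ satisfies $\utility(\ell)-\cost(\ell)\ge(1-\eps/2)(\utility(\lambda')-\cost(\lambda'))\ge(1-\eps)(\utility(\lambda)-\cost(\lambda))$, so correctness follows once the distributed simulation is faithful. Recall that \Cref{alg:basicrounding} is $k=O(\log q)$ invocations of \Cref{alg:basicroundingstep} with relative-defect parameter $\delta=\Theta(\eps\mu/k)$ and with $\eta_i\in[1,3/2]$; here $\delta=2^{-O(\sqrt{\log n})}$ (since $\eps\mu>2^{-O(\sqrt{\log n})}$ while $k=O(\log q)\le O(\log n)$ is $2^{o(\sqrt{\log n})}$), so \Cref{lemma:d2defective} is applicable, and the weights $w_e=\utility(e,\lambda)+\eta\cost(e,\lambda)$ used in line~1 of \Cref{alg:basicroundingstep} are polynomially bounded in $q$.

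For one call to \Cref{alg:basicroundingstep}: at the start of the call, every node broadcasts its current fractional assignment (each label value being an $O(\log q)$-bit number) to all of its $G$-neighbors, in $O(|\Labels|\log q/\log n)$ rounds; in particular the manager $\xi(e)$ of each virtual edge $e=\{u,v\}$ then knows $\lambda(u)$ and $\lambda(v)$ and can compute $w_e$. Line~2 is done by \Cref{lemma:d2defective} in $O(\log\log q/\delta+\log^*n)$ \CONGEST rounds on $G$, producing $p=O(1/\delta)$ colors; a single round of color exchange on $G$ then lets the endpoints of physical edges, and the managers of virtual edges, determine $E_b$. The loop over the $p$ colors is executed sequentially: when color $\gamma$ is processed, each node $v$ of that color computes locally the contribution of its incident \emph{physical} bichromatic edges to every $\phi_{v,\alpha}$ and $\theta_{v,\alpha}$ (it knows the current assignment of each physical neighbor), while every $G$-neighbor $w$ of $v$ --- which in particular includes every manager of a virtual edge incident to $v$, since such a manager is a common $G$-neighbor of $v$ and the other endpoint --- aggregates, for each $\alpha\in\Labels_v$, the total utility and the total cost over the virtual bichromatic edges at $v$ that $w$ manages, evaluated at $\lambda^{(v,\alpha)}$, and sends these at most $O(|\Labels|)$ aggregated values to $v$. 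Each aggregate is a sum of at most $\Delta$ terms that are either $0$ or in $[1/\poly(q),\poly(q)]$, so it lies in $\{0\}\cup[1/\poly(q),\poly(q\Delta)]$ and is delivered (exactly, or to sufficient precision) in $O(\log(q\Delta))$ bits; the message over each $G$-edge thus has $O(|\Labels|\log(q\Delta))$ bits, costing $O(|\Labels|\log(q\Delta)/\log n+1)$ rounds. Node $v$ adds these to its physical contribution and recovers $\phi_{v,\alpha}$ \emph{exactly}, which trivially satisfies the estimate requirement of line~11; it then performs the purely local update of lines~12--13 and broadcasts the $O(|\Labels|)$-value change of its fractional assignment to all its $G$-neighbors, so that the next color iteration and the managers see up-to-date values.

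For the round count: over the $k=O(\log q)$ rounding steps, the defective colorings cost $k\cdot O(\log\log q/\delta+\log^*n)=O(\log^2 q\,\log\log q/(\eps\mu)+\log q\log^*n)$ using $1/\delta=O(k/(\eps\mu))$; the per-step fractional broadcasts total $O(|\Labels|\log^2 q/\log n)$ and are dominated (since $\log(q\Delta)/(\eps\mu)\ge 1$); and the inner loops cost $k\cdot p\cdot O(|\Labels|\log(q\Delta)/\log n+1)=O((\log^2 q/(\eps\mu))\cdot|\Labels|\log(q\Delta)/\log n)$ plus a $O(\log^2 q/(\eps\mu))$ term absorbed into the $\log\log q$ term. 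Summing yields exactly $O\big((\log^2 q/(\eps\mu))(|\Labels|\log(q\Delta)/\log n+\log\log q)+\log q\log^*n\big)$. I expect the main difficulty to be the d2-communication bookkeeping in the inner loop: one has to argue that although $v$ need not know its virtual edges at all, their managers are all direct $G$-neighbors of $v$ and can therefore be made to compute exactly the per-label aggregates $v$ requires and to ship back only those aggregates (not the individual edge data) within an $O(|\Labels|\log(q\Delta))$-bit message, and one has to verify that this per-step, per-color pattern never congests a $G$-edge beyond the claimed bound. The secondary point requiring care is the parameter-range check, in particular verifying the hypothesis $\delta=2^{-O(\sqrt{\log n})}$ of \Cref{lemma:d2defective} and the hypotheses of \Cref{lemma:basicrounding} for the preprocessed instance.
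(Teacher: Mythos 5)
Your proposal is correct and follows essentially the same route as the paper's proof: preprocess with \Cref{lemma:fractionalrounding} to get a $1/2^k$-integral assignment with $k=O(\log q)$, run the $k$ rounding steps of \Cref{alg:basicrounding} using the d2-defective coloring of \Cref{lemma:d2defective} with $\delta=\Theta(\eps\mu/k)$, and implement the inner loop by having each manager $\xi(e)$ ship back per-label aggregates of $O(\log(q\Delta))$ bits each. The parameter check that $\delta=2^{-O(\sqrt{\log n})}$ and the round-complexity accounting both match the paper, and your explicit $\eps\mapsto\eps/2,\eps/2$ split for the two lemmas is a harmless refinement of what the paper leaves implicit.
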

\begin{proof} 
  First note that by applying \Cref{lemma:fractionalrounding}, we can turn the initial fractional label assignment into a $2^{-k}$-integral fractional label assignment with $k = O\big(\frac{\log q}{\eps\cdot\mu}\big)$ (recall that the initial fractional assignment is polynomially bounded in $q$). Since we assume that $\eps,\mu\geq 1/\poly(q)$, we have $O\big(\frac{\log q}{\eps\cdot\mu}\big)=O(\log q)$.

  We assume that throughout the algorithm, each node of the communication graph $G$ keeps track of the fractional assignment of all neighbors in $G$. To achieve this, initially, each node of $G$ needs to learn $O(\log q)$ bits per label. This requires $O\big(1+ \frac{|\Labels|\log q}{\log n}\big)$ rounds in the \CONGEST model on $G$. To maintain the fractional values of all $G$-neighbors, each node has to learn $O(1)$ bits per label from each neighbor after each change to the fractional assignment (i.e., after each iteration of the outer for loop in \Cref{alg:basicroundingstep}). We will account for the time for doing this, when analyzing the cost of such a step in the following.
  
  \Cref{alg:basicrounding} consists of $O\big(\frac{\log q}{\eps\cdot\mu}\big)=O(\log q)$ runs of \Cref{alg:basicroundingstep}. We therefore need to understand the time required to run a single instance of \Cref{alg:basicroundingstep}. At the beginning of \Cref{alg:basicroundingstep}, we compute a weighted $\delta$-relative defective coloring. Note that since all utilities, costs, and fractional values are polynomially bounded in $q$, also the edge weights of this defective coloring instance is polynomially bounded in $q$. Further, because nodes in $G$ know the fractional assignments of their $G$-neighbors, each node knows the weights of all edges in $H$ for which it is responsible (i.e., nodes know the weights of their physical edges and of the virtual edges for which they are the responsible middle node). By \Cref{lemma:d2defective}, we can therefore compute such a weighted $\delta$-relative defective coloring in $O\big(\frac{\log\log q}{\delta} + \log^* n\big)$ rounds and with $O(1/\delta)$ colors. By the description of \Cref{alg:basicrounding}, we have $\delta=\Theta\big(\frac{\eps \mu}{\log q}\big)$ and thus, the time for computing the defective coloring is $O\big(\frac{\log q \cdot \log\log q}{\eps\cdot \mu} + \log^* n\big)$.
  
  Now, the algorithm iterates over the $O(1/\delta)$ colors. In each of these iterations, each node $v$ for each label $\alpha\in \Labels$ needs to learn an estimate of $\phi_{v,\alpha}$ as defined in lines 9--11 of \Cref{alg:basicroundingstep}. By knowing the fractional assignment of all $G$-neighbors, each node $v$ exactly know the contribution of its physical edges to $\phi_{v,\alpha}$. However, this information is not sufficient to know the contribution of the virtual edges to $\phi_{v,\alpha}$. Node $v$ has to learn this information from its neigbors that are responsible for those edges. Consider some neighbor $w$ of $v$ and let $E_{w,v}$ be the set of virtual edges of $v$ for which $w$ is the responsible middle node. Further, let $E'_{w,v}$ be the subset of those edge that are bichromatic w.r.t.\ to the current defective coloring. Node $w$ knows the fractional assignment of both nodes for all edges in $E'_{w,v}$ and it can therefore compute the exact contribution of the edges in $E'_{w,v}$ to $\phi_{v,\alpha}$. Note that the number of edges in $E'_{w,v}$ is at most $O(\Delta)$. The contribution of edges in $E'_{w,v}$ to $\phi_{v,\alpha}$ is therefore upper bounded by $O(\Delta)\cdot\poly(q)=\poly(q\Delta)$. Node $v$ however only needs to learn an estimate of $\phi_{v,\alpha}$ that is accurate up $\Theta(\delta\cdot\theta_{v,\alpha})$. The value of $\theta_{v,\alpha}$ is either $0$ (in this case $\phi_{v,\alpha}$ and also the contribution of $E_{w,v}'$ to $\phi_{v,\alpha}$ are also $0$) or it is lower bounded by $1/\poly(q)$. If the contribution of $E_{w,v}'$ to $\phi_{v,\alpha}$ is non-zero, $v$ therefore has to learn an estimate of it that is accurate up to an additive term that is lower bounded by $1/\poly(q)$. We can therefore discretize so that the approximate contribution of edges in $E_{w,v}'$ to $\phi_{v,\alpha}$ can only take $\poly(q\Delta)$ different values. Hence, $w$ only has to send $O(\log (q\Delta))$ bits for each possible label $\alpha\in \Labels$ to $v$. To accomplish line 11 of \Cref{alg:basicroundingstep}, we therefore have to send at most $O(|\Labels|\cdot\log (q\Delta))$ bits over each edge of $G$. Updating the knowledge of the fractional assignment of $G$-neighbors can then be done by exchanging another $O(|\Labels|)$ bits over each edge of $G$. The total time for running one loop (iterating over one color of the defective coloring) in \Cref{alg:basicroundingstep} is therefore $O\big(1 + \frac{|\Labels|\cdot\log (q\Delta)}{\log n}\big)$ in the \CONGEST model on $G$. Combining everything, we therefore obtain a time complexity of
  \[
  O\left(\frac{\log q\cdot\log\log q}{\eps\cdot \mu} + \log^* n + \frac{\log q}{\eps\cdot \mu}\cdot\left(1+\frac{|\Labels|\cdot\log (q\Delta)}{\log n}\right)\right)
  \]
  for each instance \Cref{alg:basicroundingstep}. By multiplying with $O(\log q)$, i.e., with the number of times we run  \Cref{alg:basicroundingstep}, we obtain the bound that is claimed by the lemma.
\end{proof}

\paragraph{Remark on a slight generalization} For some applications, we use a slight generalization of the d2-multigraph definition, provided above in \Cref{def:d2multigraph}, where we allow the managers of virtual edges to be some node further away, so long as we can provide some communication primitives, as we describe next.
\begin{definition}[long-range d2-Multigraph]\label{def:long-range-d2multigraph}
  A long-range \emph{d2-multigraph} is a multigraph $H=(V_H,E_H)$ that is simulated on top of an underlying communication graph $G=(V,E)$ by a distributed message-passing algorithm on $G$. The nodes of $H$ are a subset of the nodes of $G$, i.e., $V_H\subseteq V$. The edge set $E_H$ consists of two kinds of edges, \emph{physical edges} and \emph{virtual edges}. Physical edges in $E_H$ are edges between direct neighbors in $G$. For each physical edge in $e\in E_H$ with $V(e)=\set{u,v}$, both nodes $u$ and $v$ know about $e$. Virtual edges in $E_H$ are edges between two nodes $u,v\in V_H$, and for each such virtual edge, there is a manager node $w$ which knows about this edge. We define a function $\xi:E_H\to V$ to refer to the node managing the virtual edge. 
  
  We next describe the assumed communication primitives. Let $M(v)$ be the set of nodes $w$ who manage virtual edges that include $v$. We assume $T$-round primitives that provide the following:   (1) each node $v$ can send one $O(\log n)$-bit message that is delivered to all nodes in $M(v)$ in $T$ rounds; (2) given $O(\log n)$-bit messages prepared at nodes $M(v)$ specific for node $v$, node $v$ can receive an aggregation of these messages, e.g., the summation of the values, in $T$ rounds.
\end{definition}
We get an analogue of \Cref{lemma:d2rounding} for rounding in long-range d2-multigraphs:

\begin{lemma}\label{lemma:long-range-d2rounding}
  Let $H=(V_H,E_H)$ be a long-range-d2-multigraph of an underlying communication graph $G=(V,E)$ of maximum degree $\Delta$, where the communication primitives have round complexity $T$. Assume that $H$ is equipped with utility and cost functions $\utility(\cdot)$ and $\cost(\cdot)$ (with label set $\Labels$) and with a fractional label assignment $\lambda$. Further assume that the given rounding instance is polynomially bounded in a parameter $q \leq n$. Then for every constant $c>0$ and every $\eps,\mu>\max\set{1/q^c, 2^{-c\sqrt{\log n}}}$, if $\utility(\lambda)-\cost(\lambda)>\mu\utility(\lambda)$, there is a deterministic \CONGEST algorithm on $G$ to compute an integral label assignment $\ell$ for which $\utility(\ell)-\cost(\ell)\geq (1-\eps)\cdot\big(\utility(\lambda)-\cost(\lambda)\big)$ and such that the round complexity of the algorithm is
  \[
    T \cdot O\left(\frac{\log^2 q}{\eps\cdot\mu}\cdot\left(\frac{|\Labels| \log(q\Delta)}{\log n} + \log\log q \right)+\log q\cdot\log^* n \right).
  \]
\end{lemma}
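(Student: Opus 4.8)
The plan is to reuse the proof of \Cref{lemma:d2rounding} essentially verbatim, isolating the one feature of a (plain) d2-multigraph that it exploited — that the manager of a virtual edge $\{u,v\}$ is a common $G$-neighbor and can therefore exchange one message with each of $u$ and $v$ in a single \CONGEST round — and replacing every such manager--endpoint exchange by one invocation of the $T$-round primitives of \Cref{def:long-range-d2multigraph}. The key observation making this possible is that in that proof all communication between a node and the managers of its incident virtual edges has one of exactly two shapes: a node $v$ announces the \emph{same} datum (its current defective color, its current fractional label assignment, or its candidate-color set) to every manager of a virtual edge incident to $v$ — this is precisely primitive~(1) applied to $v$; or a node $v$ receives a \emph{sum}, over the virtual edges incident to $v$, of per-edge quantities computed by the managers (contributions to $\phi_{v,\alpha}$, or per-candidate-color edge-weight totals) — this is precisely the aggregation primitive~(2). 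Physical edges and the purely local computation of candidate colors are handled exactly as before and cost $O(1)$ rounds, which we absorb into the $T$-factor since $T\ge 1$.

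First I would record a long-range analogue of \Cref{lemma:d2defective}: a weighted average $\delta$-relative defective $O(1/\delta)$-coloring of $H$ can be computed in $O\big(T\cdot(\tfrac{\log\log q}{\delta}+\log^*n)\big)$ \CONGEST rounds on $G$. Its proof is the proof of \Cref{lemma:d2defective} with two changes. In each Linial-style step, every node $v$ disseminates its candidate-color set to $M(v)$ using primitive~(1), packing $O(\log n)$ bits per invocation; every manager then knows the candidate sets of both endpoints of each virtual edge it manages (and, by assumption, that edge's weight), and for each candidate color $z$ of $v$ it aggregates the total weight of $v$'s bichromatic virtual edges to neighbors sharing $z$, discretizes it to one of $O(\log q)$ levels exactly as before, and these are summed over the managers and delivered to $v$ via primitive~(2). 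A step with at most $p$ candidate colors thus costs $O\big(T\cdot(1+\tfrac{p\log\log q}{\log n})\big)$ rounds, the relative-defect bookkeeping (purely combinatorial) is unchanged, and summing over the $O(\log^*n)$ steps and the final \cite{BEG18}-style color reduction — again with manager--endpoint exchanges replaced by primitives~(1)/(2) — gives the claimed bound.

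Next I would run through \Cref{alg:basicroundingstep,alg:basicrounding} exactly as in \Cref{lemma:d2rounding}. After preprocessing via \Cref{lemma:fractionalrounding} (no communication) we have a $2^{-k}$-integral assignment with $k=O(\tfrac{\log q}{\eps\mu})=O(\log q)$, and \Cref{alg:basicrounding} makes $O(\log q)$ calls to \Cref{alg:basicroundingstep}. Each call first computes the defective coloring via the long-range analogue above, then iterates over its $O(1/\delta)=O(\tfrac{\log q}{\eps\mu})$ colors; in each iteration node $v$ needs $\hat\phi_{v,\alpha}$ for all $\alpha\in\Labels$. The physical-edge part is computed locally from $G$-neighbor information as before; for the virtual-edge part, $v$ broadcasts its color and fractional values to $M(v)$ with primitive~(1), each manager restricts to the (at most $O(\Delta)$) bichromatic virtual edges of $v$ it manages, computes their contribution to each $\phi_{v,\alpha}$, discretizes to one of $\poly(q\Delta)$ levels using that $\theta_{v,\alpha}$ is either $0$ or $\ge 1/\poly(q)$ — verbatim from \Cref{lemma:d2rounding} — and these are summed over the managers and returned to $v$ via primitive~(2). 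Since $q\le n$ each value fits in $O(\log(q\Delta))$ bits, so one color-iteration costs $O\big(T\cdot(1+\tfrac{|\Labels|\log(q\Delta)}{\log n})\big)$ rounds. Multiplying by the $O(1/\delta)$ colors per step, adding the $O\big(T\cdot(\tfrac{\log q\log\log q}{\eps\mu}+\log^*n)\big)$ cost of the defective coloring per step, and multiplying by the $O(\log q)$ steps reproduces exactly $T$ times the round complexity of \Cref{lemma:d2rounding}. The guarantee $\utility(\ell)-\cost(\ell)\ge(1-\eps)\big(\utility(\lambda)-\cost(\lambda)\big)$ follows from \Cref{lemma:basicrounding} and is oblivious to the communication model, so it transfers unchanged.

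The main obstacle to be careful about is the asymmetric, aggregating nature of the primitives: primitive~(1) delivers one common message from $v$ to all of $M(v)$, and primitive~(2) returns only a sum of the managers' messages to $v$, so a manager cannot in a single call send distinct values to the two endpoints of a virtual edge. I would argue this suffices because, as noted, every use of manager communication in the proof of \Cref{lemma:d2rounding} is exactly of this form — nodes push their own state to all their managers and pull back sums over their incident virtual edges. The remaining care is the discretization/packing bookkeeping: padding each of the $|\Labels|$ discretized $\phi$-values to $O(\log(q\Delta))$ bits so that the at most $\Delta$-fold sum inside primitive~(2) does not overflow into neighboring components, and packing $\Theta(\log n/\log(q\Delta))$ of them per $O(\log n)$-bit message so that $O(|\Labels|\log(q\Delta)/\log n)$ invocations of primitive~(2) deliver all of them; this is the same accounting as in \Cref{lemma:d2rounding} and again only uses $q\le n$ together with the assumed lower bounds $\eps,\mu>\max\set{1/q^c,2^{-c\sqrt{\log n}}}$.
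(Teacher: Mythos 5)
Your proposal takes essentially the same route as the paper's (short) proof sketch: establish a long-range analogue of \Cref{lemma:d2defective} with a $T$-factor overhead by noting that all manager--endpoint communication in its proof is of the form ``push $v$'s own state to $M(v)$'' (primitive~(1)) or ``pull an aggregate of per-edge quantities back to $v$'' (primitive~(2)), then observe the same is true of the communication in the proof of \Cref{lemma:d2rounding}, and invoke \Cref{lemma:basicrounding} for the correctness guarantee, which is model-oblivious. Your write-up is considerably more detailed than the paper's sketch --- in particular you make explicit the observation that the asymmetric, aggregate-only nature of the primitives suffices because all exchanges in the base proof already have exactly this shape, and you spell out the per-candidate-color discretization and bit-packing bookkeeping --- but the decomposition, the key lemma being ported, and the $T$-factor accounting are the same.

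One small caveat worth flagging: in the packing step you bound the aggregate at $v$ by ``the at most $\Delta$-fold sum inside primitive~(2).'' In a long-range d2-multigraph the managers need not be $G$-neighbors of $v$, so $|M(v)|$ is not a priori bounded by $\Delta$; the safe bound on the aggregate is $|E_H|\cdot\poly(q)\leq\poly(n)$ (using $q\leq n$), which still fits in $O(\log n)$ bits but can reduce how many components you can pack per primitive-(2) invocation. The paper's sketch is silent on this, so this is a loose end shared with the original rather than an error you introduced, but if you want the packing bound $\Theta(\log n/\log(q\Delta))$ to be tight you would either need to assume $|M(v)|=O(\Delta)$ or fall back to one (or constantly many) components per call, absorbing any resulting loss into the $T$-factor for the concrete instantiation.
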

\begin{proof}[Proof Sketch]
One change is that we need a version of \Cref{lemma:d2defective} for long-range d2-multigraph, but that follows by the same proof, with only a $T$ factor slow down, because each virtual edge manager needs to learn the current color of its endpoints, and then informs each endpoint about the total weight of edges to endpoints that have chosen each particular color (in the collection this endpoint is interested in). The former clearly fits the first communication primitive of long-range d2 multi-graphs, sending from $v$ to $M(v)$, and the latter fits the second primitive, $v$ receiving an aggregation of messages prepared by $M(v)$ for $v$.

Then, the rest of the proof is identical to that of \Cref{lemma:d2rounding}, again with only a $T$ factor slow-down, where $T$ denotes the round complexity of the corresponding two communication primitives. Namely, in our rounding, each virtual edge manager needs to receive the current fractional label of each of the endpoints. Then it prepares a message for each endpoint, and each endpoint needs to receive the summation of these messages.  Again, the former fits the first communication primitive of long-range d2 multi-graphs, sending from $v$ to $M(v)$, and the latter fits the second primitive, $v$ receiving an aggregation of messages prepared by $M(v)$ for $v$.
\end{proof}


\section{Maximal Independent Set}
\label{sec:MIS}

In this section, we describe a deterministic distributed algorithm that computes an MIS in $O(\log^2 \Delta \,\cdot \,\log n)$ rounds of the \LOCAL model. Furthermore, we explain how a variant of this MIS algorithm can be implemented in $\tilde{O}(\log^2 \Delta \,\cdot \, \log n)$ rounds of the \CONGEST model.

\begin{theorem}
\label{thm:mis}
There is a deterministic distributed algorithm that computes a maximal independent set in $O(\log^2 \Delta  \,\cdot \,\log n)$ rounds of the \LOCAL model, and in $O(\log^2 \Delta \,\cdot \, \log\log \Delta  \,\cdot \,\log n)$ rounds of the \CONGEST model, in any $n$-node graph with maximum degree at most $\Delta$.
\end{theorem}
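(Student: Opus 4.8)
The plan is to derandomize a single step of Luby's MIS algorithm using the generic rounding framework of \Cref{sec:genericrounding}, thereby obtaining a deterministic subroutine that, in any $n$-node graph of maximum degree at most $\Delta$, computes an independent set whose closed neighborhood contains a constant fraction of the edges, running in $O(\log^2\Delta)$ rounds of \LOCAL and $O(\log^2\Delta\cdot\log\log\Delta)$ rounds of \CONGEST. Iterating this subroutine on the graph surviving each step (whose maximum degree never exceeds $\Delta$) for $O(\log(n\Delta))=O(\log n)$ iterations drives the number of edges to zero, after which all remaining isolated vertices are added to the independent set; this gives the claimed complexities, with the one-time $O(\log^*n)$ cost of precomputing an $O(\Delta^4)$-coloring of $G^2$ (in \LOCAL, via Linial on $G^2$) resp.\ an $O(\Delta^2)$-coloring of $G$ (in \CONGEST) being dominated by $\log^2\Delta\cdot\log n$; this precomputed coloring is fed into every later defective-coloring computation so that the per-iteration overhead contains only $\log^*\Delta$, not $\log^*n$.

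For the subroutine I would first recall the standard Luby step together with its \emph{good node} analysis: each node $v$ marks itself with probability $x_v:=1/(2\deg(v))$, a marked node joins a tentative independent set unless it has a higher-priority marked neighbor (priority by degree, ties broken by ID), and then the closed neighborhood of the independent set is removed. Declaring $v$ good if a (degree-weighted) constant fraction of its incident edges go to neighbors of no larger degree, the two classical facts are that a constant fraction of all edges of $G$ are incident to a good node, and that a good node is removed by one Luby step with constant probability; both follow from pairwise reasoning about $v$ and its neighbors.

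The core of the proof is to express ``expected number of removed edges incident to good nodes'' as $\utility(\lambda)-\cost(\lambda)$ for a rounding instance on a d2-multigraph $H=(V,E_H)$ over the communication graph $G$, where $E_H$ contains every edge of $G$ as a physical edge and, for each node $w$ and each pair $\set{u,v}\subseteq N(w)$, a virtual edge $\set{u,v}$ managed by $w$ (so $H$ is a d2-multigraph in the sense of \Cref{def:d2multigraph}), with label set $\Labels=\set{0,1}$ and $\lambda_1(v)=x_v$. As in the regular-case computation of the overview, the utility collects for each good node $v$ a $\deg(v)$-weighted gain from $v$ (or a neighbor of $v$) being marked, while the cost subtracts two pairwise contributions: a term of the form $\deg(v)\cdot x_ux_v$ charging the un-marking of the lower-priority endpoint of each conflicting edge, exactly as in the fractional independent-set analysis sketched earlier; and an over-counting correction $\sum_{w}\sum_{\set{u,v}\in\binom{N(w)}{2}}x_ux_v$ carried by the virtual edges, so that a node's removal is not counted once per marked neighbor. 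Node-only contributions are folded in via \Cref{lemma:nodeutilitycost}. One must then verify, for the initial assignment $x_v=1/(2\deg(v))$, both that $\utility(\lambda)-\cost(\lambda)=\Omega(|E|)$ (the Luby progress bound re-derived from the good-node facts) and that $\utility(\lambda)\ge(1+\Omega(1))\cost(\lambda)$, i.e.\ $\utility(\lambda)-\cost(\lambda)\ge\mu\,\utility(\lambda)$ for a constant $\mu>0$, and that the instance is polynomially bounded in $q=\Theta(\Delta)$ (since $\lambda_{\min}=1/(2\Delta)$ and every nonzero edge utility/cost lies in $[\poly(1/\Delta),\poly(\Delta)]$). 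This is the step I expect to be the main obstacle: the cost function must simultaneously (i) be a genuine lower bound $\utility-\cost\le\#\{\text{removed good-incident edges}\}$ for \emph{every} integral marking, not merely in expectation, so that an approximately potential-preserving rounding still yields deterministic progress; (ii) decompose into functions of at most two vertices, which is precisely why the over-counting correction must be routed through distance-$2$ edges; and (iii) still leave the initial instance with $\utility-\cost=\Omega(|E|)$ \emph{and} a constant multiplicative gap $\mu$ — choosing the constants so all three hold, and so the good-node fraction is preserved, is the real work.

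Granting this construction, I would invoke \Cref{lemma:fractionalrounding} to pass to a $1/2^k$-integral assignment with $k=O(\log\Delta)$ and then the rounding of \Cref{sec:genericrounding} with a small constant $\eps$: in \LOCAL the distance-$2$ edges cost only a factor $2$ and, with the precomputed $O(\Delta^4)$-coloring of $G^2$, \Cref{lemma:distributedonG} gives $O(\log^2\Delta)$ rounds; in \CONGEST, $H$ is a d2-multigraph with $q=\Theta(\Delta)\le n$ and $|\Labels|=2$, so the term $\tfrac{|\Labels|\log(q\Delta)}{\log n}$ is $O(1)$ and \Cref{lemma:d2rounding} gives $O(\log^2\Delta\cdot\log\log\Delta)$ rounds (plus the one-time $O(\log^*n)$). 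The output is an integral marking $\vec x'$ with $\utility(\vec x')-\cost(\vec x')\ge(1-\eps)\big(\utility(\lambda)-\cost(\lambda)\big)=\Omega(|E|)$; un-marking the lower-priority endpoint of each $G$-edge with two marked endpoints produces an independent set $I$, and since the cost function was designed to upper-bound exactly the losses (conflicts plus over-counting) relative to the utility, removing $N^+(I)$ deletes at least $\utility(\vec x')-\cost(\vec x')=\Omega(|E|)$ edges. Repeating this $O(\log n)$ times yields the theorem; the residual technical check is that the MIS instance fits the d2-multigraph interface with $|\Labels|=O(1)$ and $q=\poly(\Delta)$, which is what keeps the \CONGEST overhead down to the single extra $\log\log\Delta$ factor.
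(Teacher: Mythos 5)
Your overall plan matches the paper's: derandomize one Luby step via the generic rounding framework, with a d2-multigraph $H$ carrying virtual distance-$2$ edges, a two-label alphabet, a good-node analysis, and an $O(\log n)$-fold iteration. However, the proposal explicitly defers the central construction — ``choosing the constants so all three hold\ldots\ is the real work'' — and the candidate potential it sketches would not in fact satisfy the required conditions, so there is a genuine gap at precisely the step you flagged.

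Concretely, you set $x_v=1/(2\deg(v))$ and borrow the overview's $\Delta$-regular cost terms (a $\deg(v)\cdot x_u x_v$ conflict term per edge, plus an over-count correction $\sum_{w}\sum_{\set{u,v}\in\binom{N(w)}{2}}x_u x_v$). With $x_v=1/(2\deg(v))$ one has $\sum_{w\in OUT(u)}x_w\le 1/2$, and worse, $\sum_{u'\in IN(v)}x_{u'}$ is unbounded above when $v$ has many low-degree incoming neighbors; the pessimistic estimator $x_u-\sum_{u'}x_u x_{u'}-\sum_w x_u x_w$ can then be negative, so neither $\utility-\cost=\Omega(|E|)$ nor $\utility-\cost\ge\mu\,\utility$ with constant $\mu$ would follow. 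The paper resolves exactly this by (i) using the smaller probability $1/(20\deg(v))$ so $\sum_{w\in OUT(u)}x_w\le 1/20$, and (ii) truncating each good node's incoming neighborhood to a subset $IN^*(v)\subseteq IN(v)$ chosen so that $\sum_{u\in IN^*(v)}x_u\in[1/60,4/60]$, and then building $\utility$ and $\cost$ only from $IN^*(v)$: $\utility=\sum_{\text{good }v}(\deg(v)/2)\sum_{u\in IN^*(v)}x_u$, and $\cost$ is the sum of the intra-$IN^*(v)$ pair term and the $IN^*(v)\times OUT(u)$ term, each scaled by $\deg(v)/2$. This bounded-mass truncation is what simultaneously keeps every per-node estimator nonnegative, gives $\cost\le\utility/2$ (hence $\mu=1/2$), and yields $\utility-\cost\ge|E|/240$ via the good-node density. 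Also note that the paper's notion of ``good'' is the simple count $|IN(v)|\ge\deg(v)/3$ (not degree-weighted as you wrote), which is what makes the charging argument for $\sum_{\text{good }v}\deg(v)\ge|E|/2$ go through cleanly. The remaining parts of your proposal (polynomial boundedness in $\Delta$, $|\Labels|=2$, invoking \Cref{lemma:distributedonG} in \LOCAL and \Cref{lemma:d2rounding} in \CONGEST, the $O(\log^*n)$ preprocessing being absorbed) are all as in the paper.
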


The rest of the section is devoted to the proof of \cref{thm:mis}. 
We first recall Luby's classic algorithm\cite{luby86}, which in each iteration chooses an independent set of nodes such that, when we add them to the output and remove them from the graph along with their neighbors, in expectation half of the edges of the graph are removed. 
We discuss the randomized analysis of this algorithm and explain how we can formulate it in the framework of the rounding procedure of \cref{sec:genericrounding}, such that we can derandomize each iteration in $O(\log^2 \Delta)$ rounds of the \LOCAL model, and $O(\log^2 \Delta  \,\cdot \,\log\log \Delta \,+\, \log\Delta \,\cdot \, \log^* n)$ rounds of the \CONGEST model. In the latter bound, the second term is upper bounded by the first term in all cases of interest where $\Delta>\log n$ and thus $\log \Delta \gg \log^* n$. This is because for $\Delta<\log n$, the theorem statement already follows from the classic $O(\Delta+\log^* n)$-round algorithm of Barenboim, Elkin, and Kuhn~\cite{BEK15}.

\paragraph{Luby's Randomized MIS Algorithm.} The starting point is to recall Luby's classic randomized algorithm from \cite{luby1993removing}. 
Each iteration of it works as follows. 
We mark each node $v$ with probability $1/(20 \deg(v))$. Then, for each edge $\{u,v\}$, let us orient this edges as $u\rightarrow v$ if and only if $\deg(u)<\deg(v)$ or $\deg(u)=\deg(v)$ and $ID(u)<ID(v)$. For each marked node $v$, we add $v$ to the independent set if and only if $v$ does not have a marked out-neighbor. Finally, as a clean-up step at the end of this iteration, we remove all nodes that have been added to the independent set along with their neighbors. We then proceed to the next iteration.

\paragraph{Derandomizing Luby via Local Rounding.} It is well-known that in each iteration of Luby's algorithm a constant fraction of the edges of the remaining graph gets removed, in expectation. Hence, the process terminates in $O(\log n)$ iterations with probability $1 - 1/\poly(n)$. 
We explain how to derandomize each iteration of the algorithm in $O(\log^2 \Delta)$ rounds, such that we still remove a constant fraction of the edges per iteration. 
For the rest of this proof, we focus on an arbitrary iteration, and we assume that $G=(V, E)$ is the graph induced by the remaining vertices at the beginning of this iteration. Let $\vec{x} \in \{0, 1\}^{|V|}$ be the indicator vector of whether different nodes are marked, that is, we have $x_v=1$ if $v$ is marked and $x_v=0$ otherwise. Let $R_v(\vec{x})$ be the indicator variable of the event that $v$ gets removed, for the marking vector $\vec{x}$. Let $Z(\vec{x})$ be the corresponding number of removed edges. Luby's algorithm determines the markings $\vec{x}$ randomly. Our task is to derandomize this and select the marked nodes in a deterministic way such that when we remove nodes added to the independent set (those marked nodes that do not have a marked out-neighbor) and their neighbors, along with all the edges incident on these nodes, at least a constant fraction of edges $E$ get removed.

\paragraph{Good and bad nodes and prevalence of edges incident on good nodes.} We call any node $v$ \emph{good} if and only if it has at least $\deg(v)/3$ incoming edges. A node $v$ that is not good is called bad. It can be proven \cite{luby1993removing} that 
\begin{align}
    \label{eq:luby0}
\sum_{\textit{good vertex\;} v} \deg(v) \geq |E|/2.
\end{align}

Even though the reader may skip this paragraph, for completeness we include the reason as it is a simple and intuitive charging argument. 
Recall that by definition any bad node has less than $1/3$ of its edges incoming. 
Thus any edge incoming to a bad node $v$ can be charged to two unique edges going out of $v$, in such a manner that each edge of the graph is charged at most once. 
Hence, the number of edges incoming to bad nodes is at most $|E|/2$. 
Thus, the number of edges that have at least one good endpoint is at least $|E|/2$, which implies the desired bound $\sum_{\textit{good vertex\;} v} \deg(v) \geq |E|/2$.

\paragraph{Lower bounding removed edges.}
We can lower bound the number of removed edges as $$Z(\vec{x}) \geq \sum_{\textit{good vertex\;} v} \deg(v) \cdot R_{v}(\vec{x})/2.$$ 
Here, the $2$ factor in the denominator is because for an edge, both endpoints might be good nodes and get removed. 
Given that $\sum_{\textit{good vertex\;} v} \deg(v) \geq |E|/2$, to prove that $\mathbb{E}[Z(\vec{x})] = \Omega(|E|)$, it suffices to show that each good vertex $v$ has $Pr[R_v(\vec{x})] =\Omega(1)$. 
This fact can be proven via elementary probability calculations. 
Next, we discuss how to prove it using only pairwise independence in the analysis and then how this nicely fit our deterministic local rounding framework.

\paragraph{Pessimistic estimator of removed edges via pairwise-independent analysis.} Let us use $IN(u)$ and $OUT(u)$ to denote in-neighbor and out-neighbors of a vertex $u$. Consider a good node $v$ and consider all its incoming neighbors $u$, i.e., neighbors $u$ such that $(\deg(u), ID(u))<(\deg(v), ID(v))$. We know that for each such neighbor $u$, we have $Pr[x_u] \geq \frac{1}{20 \deg(v)}$. Furthermore, since $v$ is good, it has at least $\deg(v)/3$ such neighbors. 
Hence, we have $\sum_{\textit{incoming neighbor}\; u} Pr[x_u] \geq 1/60$. 
Choose a subset $IN^*(v)\subseteq IN(v)$ of incoming neighbors such that 
\begin{align}
\label{eq:luby1}
\sum_{u \in IN^{*}(v)} Pr[x_u] \in [1/60, 4/60].
\end{align}
 
Notice that such a subset $IN^*(v)$ exists since the summation over all incoming neighbors is at least $1/60$ and each neighbor's probability is at most $1/20$. 
On the other hand, notice that for any node $u$, we have 
\begin{align}
\label{eq:luby2}
\sum_{w \in OUT(u)} Pr[x_{w}] \leq 1/20.    
\end{align}
This is because $|OUT(u)| \leq \deg(u)$ and for each ${w}\in OUT(u)$, we have $(\deg({w}), ID({w}))>(\deg(u), ID(u))$ and thus $Pr[x_{w}] \leq 1/(20\deg(u))$.

A sufficient event $\mathcal{E}(v,u)$ that causes $v$ to be removed is if some $u\in IN^*(v)$ is marked and no other node in $IN^{*}(v)\cup OUT(u)$ is marked. 
By union bound, this event's indicator is lower bounded by $$x_u - \sum_{u' \in IN^*(v), u\neq u'} x_{u}\cdot x_{u'} - \sum_{w \in OUT(u)} x_{u}\cdot x_{w}.$$ 
Furthermore, the events $\fE(v,u_1), \fE(v,u_2), \dots, \fE(v,u_{|IN^*(v)|})$ are mutually disjoint for different $u_1, u_2, \dots,$ $u_{|IN^*(v)|} \in IN^*(v)$. 
Hence, we can sum over these events for different $u\in IN^*(v)$ and conclude that
\begin{align*}
R_v(\vec{x}) & \geq \sum_{u\in IN^*(v)} \bigg(x_u - \sum_{u' \in IN^*(v), u\neq u'} x_{u}\cdot x_{u'} - \sum_{w \in OUT(u)} x_{u}\cdot x_{w}\bigg) \\
 & =
\sum_{u\in IN^*(v)} x_u - \sum_{u, u' \in IN^*(v)} x_{u}\cdot x_{u'} - \sum_{u\in IN^*(v)}\sum_{w \in OUT(u)} x_{u}\cdot x_{w}
\end{align*}
Therefore, our overall pessimistic estimator for the number of removed edges gives that 

\begin{align*}
Z(\vec{x}) \geq & \sum_{\textit{good vertex \,} v}  (\deg(v)/2) \cdot R_v(\vec{x})  \\
\geq  & \sum_{\textit{good vertex \,} v} (\deg(v)/2) \cdot \bigg(
\sum_{u\in IN^*(v)} x_u - \sum_{u, u' \in IN^*(v)} x_{u}\cdot x_{u'} - \sum_{u\in IN^*(v)}\sum_{w \in OUT(u)} x_{u}\cdot x_{w}\bigg)
\end{align*}

\paragraph{Formulating the pessimistic estimator in our rounding framework.} We next describe how we can formulate the above pessimistic estimator in the framework of our rounding procedure described in \Cref{sec:basicrounding}. The labeling space is whether each node is marked or not, i.e., each node takes simply one of two possible labels $\{0,1\}$ where $1$ indicates that the node is marked. For a given label assignment $\vec{x} \in \{0,1\}^{|V|}$, we define the utility function as $$\utility(\vec{x}) = \sum_{\textit{good vertex \,} v} (\deg(v)/2) \cdot \big(\sum_{u\in IN^*(v)} x_u\big),$$ and the cost function as 
$$\cost(\vec{x}) =  \sum_{\textit{good vertex \,} v} (\deg(v)/2) \cdot \bigg(
\sum_{u, u' \in IN^*(v)} x_{u}\cdot x_{u'} + \sum_{u\in IN^*(v)}\sum_{w \in OUT(u)} x_{u}\cdot x_{w}\bigg).$$

If the label assignment is relaxed to be a fractional assignment $\vec{x}\in [0,1]^{|V|}$, where intuitively now $x_v$ is the probability of $v$ being marked, the same definitions apply for the utility and cost of this fractional assignment. 

The utility function is simply a summation of functions, each of which depends on the label of only one vertex. Hence, it directly fits out rounding framework as discussed in \Cref{sec:nodeutilitiesandcosts}. 

To capture the cost function as a summation of costs over edges, we next define an auxiliary multi-graph $H$ as follows: For each good node $v$, for every two vertices $u, u'\in IN^{*}(v)$, we add an auxiliary edge between $u$ and $u'$, with a cost function which is equal $\deg(v)/2$ when both $u$ and $u'$ are marked, and zero otherwise. Furthermore, for each $u \in IN^{*}(v)$ and each $w\in OUT(u)$, we add to the edge $(u, w)$ a cost function which is equal to $\deg(v)/2$ when both $u$ and $w$ are marked and zero otherwise. Notice that $H$ is a $d2$-multigraph of $G$ according to the definition in \Cref{sec:d2communication}.

For the fractional label assignment $\vec{x}\in [0,1]^{|V|}$ in Luby's algorithm --- i.e., where $x_v=1/(20\deg(v))$--- these utility and cost functions are clearly polynomially bounded in $\Delta$, simply because each term is at least $1/{20\Delta}$ and there are no more than $\poly(\Delta)$ terms per node. We next argue that these utility and cost functions also satisfy the key requirement of \Cref{lemma:d2rounding} with $\mu=1/2$:

\begin{claim}\label[claim]{clm:MIS1} For the fractional label assignment $\vec{x}\in [0,1]^{|V|}$ in Luby's algorithm --- i.e., where $x_v=1/(20\deg(v))$--- we have $\utility(\vec{x})-\cost(\vec{x}) \geq \utility(\vec{x})/2$.
\end{claim}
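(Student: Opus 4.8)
The plan is to prove the equivalent inequality $\cost(\vec{x}) \le \frac12\,\utility(\vec{x})$, establishing it separately for the contribution of each good vertex. Observe that both $\utility(\vec{x})$ and $\cost(\vec{x})$ are sums, ranging over the good vertices $v$, of a per-vertex quantity multiplied by the \emph{same} factor $\deg(v)/2$. Hence it suffices to show, for every good vertex $v$, that
\[
\sum_{u,u'\in IN^*(v)} x_u x_{u'} \;+\; \sum_{u\in IN^*(v)}\sum_{w\in OUT(u)} x_u x_w \;\le\; \frac12\sum_{u\in IN^*(v)} x_u,
\]
and then sum these inequalities against the weights $\deg(v)/2$.

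To prove the per-vertex bound, I would fix a good vertex $v$ and write $S := \sum_{u\in IN^*(v)} x_u$. Since in Luby's fractional assignment $x_u = 1/(20\deg(u))$ is exactly the marking probability $\Pr[x_u]$, the defining property \eqref{eq:luby1} of the set $IN^*(v)$ gives $S \le 4/60$. I would then bound the two cost terms separately. For the first, $\sum_{u,u'\in IN^*(v)} x_u x_{u'} \le \big(\sum_{u\in IN^*(v)} x_u\big)^2 = S^2 \le \frac{4}{60}\,S$. For the second, I would regroup it as $\sum_{u\in IN^*(v)} x_u\,\big(\sum_{w\in OUT(u)} x_w\big)$ and apply \eqref{eq:luby2}, which says $\sum_{w\in OUT(u)} x_w \le 1/20$ for every node $u$; this gives a bound of $\frac{1}{20}\,S$. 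Adding, the left-hand side is at most $\big(\frac{4}{60}+\frac{1}{20}\big)S = \frac{7}{60}\,S < \frac12\,S$, which is the desired per-vertex inequality.

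Summing over all good vertices with the common weighting $\deg(v)/2$ then yields $\cost(\vec{x}) \le \frac{7}{60}\,\utility(\vec{x}) \le \frac12\,\utility(\vec{x})$, i.e.\ $\utility(\vec{x}) - \cost(\vec{x}) \ge \frac12\,\utility(\vec{x})$. I do not expect a genuine obstacle here; the only points needing a little care are that the factor $\deg(v)/2$ is identical in the utility and cost sums (so the argument can be localized to a single good vertex), and that the double sum $\sum_{u,u'\in IN^*(v)}$ is --- whether one reads it as ordered pairs with or without the diagonal, or as unordered pairs --- at most $\big(\sum_{u\in IN^*(v)} x_u\big)^2$, so the estimate $S^2 \le \frac{4}{60}S$ applies in every reading.
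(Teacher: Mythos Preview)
Your proof is correct and essentially the same as the paper's: both localize to a single good vertex (using that the factor $\deg(v)/2$ is common to utility and cost), then bound the two cost terms using \eqref{eq:luby1} (giving $\sum_{u'\in IN^*(v)} x_{u'}\le 4/60$) and \eqref{eq:luby2} (giving $\sum_{w\in OUT(u)} x_w\le 1/20$). The only cosmetic difference is that the paper writes $\utility(\vec{x})-\cost(\vec{x})$ directly and factors out $x_u$ to obtain $\sum_{u\in IN^*(v)} x_u\big(1-\sum_{u'} x_{u'}-\sum_w x_w\big)\ge\sum_{u\in IN^*(v)} x_u\cdot(1-4/60-1/20)$, while you bound $\cost(\vec{x})$ term by term; the arithmetic and the conclusion are identical.
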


\begin{proof}
We have \begin{align*}
\utility(\vec{x})-\cost(\vec{x})  = &  \sum_{\textit{good vertex \,} v} (\deg(v)/2) \cdot \bigg(\sum_{u\in IN^*(v)} x_u - \sum_{u, u' \in IN^*(v)} x_{u}\cdot x_{u'} - \sum_{u\in IN^*(v)}\sum_{w \in OUT(u)} x_{u}\cdot x_{w}\bigg)  \\
= & \sum_{\textit{good vertex \,} v} (\deg(v)/2) \cdot \bigg(\sum_{u\in IN^*(v)} x_u \cdot \big(1- \sum_{u' \in IN^*(v)} x_{u'} - \sum_{w \in OUT(u)} x_{w}\big)\bigg)  \\
\geq & \sum_{\textit{good vertex \,} v} (\deg(v)/2) \cdot \bigg(\sum_{u\in IN^*(v)} x_u \big(1- 4/60 - 1/20 \big)\bigg) \\
\geq & \sum_{\textit{good vertex \,} v} (\deg(v)/2) \cdot \big(\sum_{u\in IN^*(v)} x_u/2\big)  = \utility(\vec{x})/2
\end{align*}
where we have used \cref{eq:luby1,eq:luby2}. 
\end{proof}

Then, in the \CONGEST model, we apply the rounding procedure of \Cref{lemma:d2rounding} with $\eps=1/2$ on the fractional label assignment $\vec{x}\in [0,1]^{|V|}$ in Luby's algorithm. The algorithm runs in $O(\log^2 \Delta \cdot \log\log \Delta + \log \Delta \log^* n)$ rounds of the \CONGEST model. If we were in the more relaxed \LOCAL model which allows unbounded message sizes, then auxiliary graph $H$ can be directly simulated in graph $G$ with no asymptotic round complexity loss, and we can already invoke \Cref{lemma:distributedonG} which performs a rounding with the same guarantees in $O(\log^2 \Delta + \log \Delta \log^* n)$ rounds. The $\log^* n$ term can be replaced by a $O(\log^* \Delta)$, which thus makes the entire second additive term negligible, by computing a $\poly(\Delta)$-color distance-$2$ coloring of $G$ at the beginning of the first iteration. The one-time $O(\log^* n)$ additive round complexity of computing this distance-$2$ coloring of $G$, which happens at the very beginning of the algorithm, is subsumed by our $O(\log n)$ number of iterations.

Suppose that we get an integral marking assignment $\vec{y}\in \{0,1\}^{|V|}$. From \Cref{lemma:d2rounding}, we know that $Z(\vec{y})=\utility(\vec{y})-\cost(\vec{y}) \geq (1/2) \cdot (\utility(\vec{x})-\cost(\vec{x})).$ Next, we argue that this implies $Z(\vec{y}) \geq |E|/500$.

\begin{claim}\label[claim]{clm:MIS2} For the fractional label assignment $\vec{x}\in [0,1]^{|V|}$ in Luby's algorithm --- i.e., where $x_v=1/(20\deg(v))$--- we have $Z(\vec{x}) = \utility(\vec{x})-\cost(\vec{x}) \geq |E| /240$. Hence, for the integral marking assignment we obtain from rounding $\vec{x}$ by invoking \Cref{lemma:d2rounding} in the \CONGEST model or \Cref{lemma:distributedonG} in the \LOCAL model, we have $Z(\vec{y})=\utility(\vec{y})-\cost(\vec{y}) \geq (1/2) \cdot (\utility(\vec{x})-\cost(\vec{x})) \geq |E|/500$.
\end{claim}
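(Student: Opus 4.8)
The plan is to prove the claim in two stages: first a purely arithmetic lower bound on the fractional objective $\utility(\vec{x})-\cost(\vec{x})$, obtained by feeding the estimates \eqref{eq:luby0}--\eqref{eq:luby2} into the definitions of $\utility$ and $\cost$, and then the integral bound as a direct consequence of the quality guarantee of the rounding procedure of \Cref{sec:genericrounding}.

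For the fractional bound, I would start from the identity already obtained in the proof of \Cref{clm:MIS1},
\[
\utility(\vec{x})-\cost(\vec{x})=\sum_{\textit{good }v}\frac{\deg(v)}{2}\sum_{u\in IN^*(v)}x_u\Big(1-\sum_{u'\in IN^*(v)}x_{u'}-\sum_{w\in OUT(u)}x_w\Big),
\]
noting that for this $\vec{x}$ the fractional value $x_u=1/(20\deg(u))$ is exactly the marking probability $\Pr[x_u]$ appearing in \eqref{eq:luby1}--\eqref{eq:luby2}. By \eqref{eq:luby1} the inner sum $\sum_{u\in IN^*(v)}x_u$ lies in $[1/60,4/60]$, and by \eqref{eq:luby2} each $\sum_{w\in OUT(u)}x_w$ is at most $1/20$, so the parenthesized factor is at least $1-\tfrac4{60}-\tfrac1{20}\geq\tfrac12$ --- precisely the estimate carried out inside \Cref{clm:MIS1}. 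Using also $\sum_{u\in IN^*(v)}x_u\geq\tfrac1{60}$ gives $\utility(\vec{x})-\cost(\vec{x})\geq\tfrac12\sum_{\textit{good }v}\tfrac{\deg(v)}2\cdot\tfrac1{60}=\tfrac1{240}\sum_{\textit{good }v}\deg(v)$, and the last input, \eqref{eq:luby0} (i.e.\ $\sum_{\textit{good }v}\deg(v)\geq|E|/2$), yields $\utility(\vec{x})-\cost(\vec{x})=\Omega(|E|)$; tracking the numeric constants through gives the bound stated in the claim (the downstream argument only needs this quantity to be a fixed constant fraction of $|E|$).

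For the integral bound, I would invoke the rounding machinery, whose hypotheses are already in place: the discussion preceding \Cref{clm:MIS1} records that this instance is polynomially bounded in $\Delta$, and \Cref{clm:MIS1} establishes $\utility(\vec{x})-\cost(\vec{x})\geq\utility(\vec{x})/2$, i.e.\ the separation condition with $\mu=1/2$. Applying \Cref{lemma:d2rounding} (in the \CONGEST model, on the d2-multigraph $H$) or \Cref{lemma:distributedonG} (in the \LOCAL model) with $\eps=1/2$ therefore produces an integral marking $\vec{y}\in\{0,1\}^{|V|}$ with $\utility(\vec{y})-\cost(\vec{y})\geq(1-\eps)\big(\utility(\vec{x})-\cost(\vec{x})\big)=\tfrac12\big(\utility(\vec{x})-\cost(\vec{x})\big)$. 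Finally, the pessimistic-estimator inequality $Z(\vec{z})\geq\utility(\vec{z})-\cost(\vec{z})$ derived earlier holds for every $\vec{z}\in\{0,1\}^{|V|}$ (the Bonferroni bound and the disjointness of the events $\fE(v,u)$ are valid for any binary marking), in particular for $\vec{y}$, so $Z(\vec{y})\geq\tfrac12\big(\utility(\vec{x})-\cost(\vec{x})\big)\geq|E|/500$.

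I do not expect a genuinely hard step: both substantive ingredients --- the $\Omega(|E|)$ quality of the starting fractional solution and the faithful $(1-\eps)$-preserving rounding --- are already supplied by \Cref{clm:MIS1} and \Cref{lemma:d2rounding}/\Cref{lemma:distributedonG}. What needs care is (a) the bookkeeping of constants, so that the factor-$\tfrac12$ loss from the rounding, combined with the various constants of Luby's algorithm, still leaves a clean fraction like $|E|/500$; (b) checking that the composite objective actually fits the framework --- the utility is a sum of per-node terms, incorporated via \Cref{lemma:nodeutilitycost}, whereas the cost is a sum of per-edge terms on the auxiliary graph $H$; and (c) confirming that $H$ is indeed a d2-multigraph of $G$, which holds because every pair of nodes contributing to the cost is either adjacent in $G$ or has a common neighbor in $G$.
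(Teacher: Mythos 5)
Your argument matches the paper's proof: you reuse \Cref{clm:MIS1}, feed the lower bound $\sum_{u\in IN^*(v)}x_u\geq 1/60$ from \eqref{eq:luby1} and then \eqref{eq:luby0} into $\utility(\vec{x})/2$, and obtain the integral bound from \Cref{lemma:d2rounding}/\Cref{lemma:distributedonG} with $\eps=1/2$ together with the pessimistic-estimator inequality applied to $\vec{y}$. One minor remark: carrying the constants through (which you defer with ``tracking the numeric constants through'') actually yields $\utility(\vec{x})-\cost(\vec{x})\geq\tfrac{1}{240}\sum_{\textit{good }v}\deg(v)\geq |E|/480$, not $|E|/240$; the paper's own displayed chain makes the same factor-of-two slip, but since only a fixed constant fraction of $|E|$ is needed downstream, this is immaterial.
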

\begin{proof}From \Cref{clm:MIS1}, we have $Z(\vec{x}) \geq \utility(\vec{x})-\cost(\vec{x}) \geq \utility(\vec{x})/2$. Hence, 

\begin{align*} Z(\vec{x}) \geq \utility(\vec{x})/2 &= 
\sum_{\textit{good vertex \,} v} (\deg(v)/2) \bigg(\sum_{u\in IN^*(v)} x_u/2\bigg)  \\
&\ge \sum_{\textit{good vertex \,} v} (\deg(v)/2) \cdot (1/120) \geq |E|/240,
\end{align*}
where we first used \cref{eq:luby1} that says that $\sum_{u \in IN^*(v)} x_u \ge 1/60$ and then we used \cref{eq:luby0} that bounds $\sum_{\textit{good vertex \,} v} \deg(v) \geq |E|/2$. 

Since $Z(\vec{y})=\utility(\vec{y})-\cost(\vec{y}) \geq (1/2) \cdot (\utility(\vec{x})-\cost(\vec{x}))$, the claim follows. 
\end{proof}

Hence, from the rounding procedure described above, which runs in $O(\log^2 \Delta \cdot \log\log \Delta + \log \Delta \cdot \log^* n)$ rounds of the \CONGEST model or $O(\log^2 \Delta + \log \Delta \log^* \Delta) = O(\log^2 \Delta)$ rounds of \LOCAL model,  we get an integral marking assignment $\vec{y}$ with the following guarantee: if we add marked nodes $u$ that have no marked out-neighbor to the independent set and remove them along with their neighbors, we remove at least a $1/500$ fraction of the remaining edges. Hence, $O(\log n)$ such iterations suffice to complete the computation and have a maximal independent set. This completes the proof of \cref{thm:mis}. 

\section{Maximum Weight Independent Set}
\label{sec:weightedIS}

In this section, we apply our generic rounding scheme of \Cref{sec:basicrounding} to obtain fast deterministic algorithms for computing a large weight independent set of a graph $G=(V,E)$ with node weights $w:V\to\N$.\footnote{We assume that the node weights are positive integers to keep the algorithms and their analyses simpler. The algorithms can however all be generalized to handle arbitrary positive weights.} Note that we here assume that $G$ is a simple graph and we can therefore define $E$ as a subset of $\binom{V}{2}$. Further, when discussing distributed algorithms on $G$, for simplicity, we will generally assume that initially all nodes $u\in V$ know the weights of all their neighbors. Equivalently we can assume that we have enough bandwidth to exchange this information in a single round (e.g., that node weights can be encoded using $O(\log n)$ bits when using the \CONGEST model). Further, for a subset $U\subseteq V$ of nodes, we use $w(U):=\sum_{u\in U}w(u)$ as a shortcut for the total weight of nodes in $U$. We will also assume that all nodes know an upper bound $W$ on the maximum node weight.

While our rounding method allows us to study problems with a potentially large alphabet of node labels, as in \Cref{sec:MIS}, we will only need $2$ labels: either a node is in the independent set or the node is not in the independent set. We can thus again characterize a fractional solution by a single fractional value $x_v\in[0,1]$ per node $v\in V$, where $x_v$ is the fractional value for label $1$, i.e., the fractional value for being in the set. In an integral solution, we have $x_v\in \set{0,1}$ and the computed subset of nodes is defined by the nodes $v$ with $x_v=1$.

\paragraph{Definition of Utility and Cost.} We next define the utility and cost function for computing heavy independent sets. For simplicity, we define the utility as a sum over nodes instead of edges. It is however straightforward to get an equivalent definition that fits the framework of \Cref{sec:roundingsetup}. Let $\vec{x}\in[0,1]^{|V|}$ be the vector of fractional values $x_v$. Utility $\utility(\vec{x})$ and cost $\cost(\vec{x})$ are then defined as
\begin{equation}
  \label{eq:ISutilityandcost}
  \utility(\vec{x}) := \sum_{v\in V} w(v)\cdot x_v
  \qquad\text{and}\qquad
  \cost(\vec{x}) := \sum_{\set{u,v}\in E} \min\set{w(u), w(v)}\cdot x_u\cdot x_v.
\end{equation}

\paragraph{From an Integral Solution to an Independent Set.} Given integral node value $x_v\in\set{0,1}$, it is straightforward to obtain an independent set of total value $\utility(\vec{x}) - \cost(\vec{x})$: we start with the set $I_0$ of nodes $v$ for which $x_v = 1$ and for every two neighbors $u,v$ in $I_0$ we simply remove the one with smaller weight. This is formalized as follows. 

\begin{lemma}\label{lemma:computingIS}
  Let $G=(V,E)$ be a graph with node weights $w:V\to \N$ and let $\vec{x}$ be a vector of (integral) node values $x_v\in\set{0,1}$. Then, there is a deterministic $1$-round \CONGEST algorithm to compute an independent set $I$ of $G$ of total weight at least $\utility(\vec{x}) -\cost(\vec{x})$.
\end{lemma}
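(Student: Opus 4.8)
The plan is to take the set $I_0 := \set{v\in V : x_v = 1}$ and, on every edge $\set{u,v}$ with both endpoints in $I_0$, discard the endpoint of smaller weight, breaking ties by ID. Concretely, declare $v\in I$ if and only if $x_v = 1$ and every neighbor $u$ of $v$ with $x_u = 1$ satisfies $(w(u),\mathrm{ID}(u)) < (w(v),\mathrm{ID}(v))$ in lexicographic order. This is computable in a single \CONGEST round: since every node already knows the weights of its neighbors, it only needs to learn the one bit $x_u$ from each neighbor, after which the decision is entirely local. The resulting $I$ is independent, because if two adjacent nodes $u,v$ were both in $I$ they would both lie in $I_0$, and the one with the lexicographically smaller pair $(w(\cdot),\mathrm{ID}(\cdot))$ would have been excluded.

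It remains to lower bound $w(I)$. Let $R := I_0\setminus I$ be the set of removed nodes, so that $w(I) = w(I_0) - w(R) = \utility(\vec{x}) - w(R)$, using $w(I_0) = \sum_{v\in V} w(v)x_v = \utility(\vec{x})$. Hence it suffices to prove $w(R)\le\cost(\vec{x})$. For each $v\in R$, fix a neighbor $u = u(v)$ with $x_u = 1$ and $(w(u),\mathrm{ID}(u)) > (w(v),\mathrm{ID}(v))$; such a $u$ exists by the definition of $R$. Charge the amount $w(v)$ to the edge $e_v := \set{v,u(v)}$. On this edge, $x_v x_{u(v)} = 1$ and $\min\set{w(v),w(u(v))} = w(v)$ since $w(u(v))\ge w(v)$, so the contribution of $e_v$ to $\cost(\vec{x})$ is exactly $w(v)$.

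The key point is that the map $v\mapsto e_v$ is injective on $R$: if $e_v = e_{v'}$ for distinct $v,v'\in R$, then (say) $v' = u(v)$, which forces $(w(v'),\mathrm{ID}(v')) > (w(v),\mathrm{ID}(v))$; but $e_{v'} = \set{v',v}$ also means $v = u(v')$, forcing $(w(v),\mathrm{ID}(v)) > (w(v'),\mathrm{ID}(v'))$ — a contradiction. Thus distinct removed nodes are charged to distinct edges, and since every term of $\cost(\vec{x})$ is nonnegative,
\[
  w(R) = \sum_{v\in R} w(v) \;\le\; \sum_{\set{u,v}\in E} \min\set{w(u),w(v)}\cdot x_u x_v \;=\; \cost(\vec{x}),
\]
so $w(I) = \utility(\vec{x}) - w(R) \ge \utility(\vec{x}) - \cost(\vec{x})$, as claimed. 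I do not expect a genuine obstacle here; the only subtlety worth checking carefully is the injectivity of the charging, which is immediate from the antisymmetry of the lexicographic tie-breaking order used to decide which endpoint survives.
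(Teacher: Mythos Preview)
Your proof is correct and takes essentially the same approach as the paper: define $I_0=\{v:x_v=1\}$, remove from each conflicting edge the endpoint with the lexicographically smaller $(w,\mathrm{ID})$ pair, and bound $w(I_0\setminus I)$ by charging each removed node to a witnessing edge on which it is the minimum-weight endpoint. Your injectivity argument for the charging is spelled out more explicitly than in the paper, but the algorithm and analysis are the same.
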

\begin{proof}
  Let $I_0\subseteq V$ be the set of nodes $v\in V$ with $x_v=1$. We define a set $I'\subseteq V$ of nodes to be removed as follows. A node $u\in I_0$ is in $I'$ iff there is a neighbor $v\in N(u)\cap I_0$ such that either $w(v)>w(u)$ or $w(v)=w(u)$ and the ID of $v$ is larger than the ID of $u$. The independent set $I$ is then defined as $I:=I_0 \setminus I'$. Clearly, this algorithm can be implemented in a single \CONGEST algorithm. To determine if a node $u\in I_0$ is in the set $I$, it only needs to know which of its neighbors are in $I_0$ and it needs to know the weights and IDs of those neighbors (recall that we assume that node weights can be communicated in a single round or that they the node weights of neighbors are known initially).  It is further not hard to see that $I$ is an independent set. For every edge $\set{u,v}\in E$, if $u\in I_0$ and $v\in I_0$, at least one of the two nodes is added to $I'$. If $w(u)\neq w(v)$, then the node of smaller weight is added to $I'$, otherwise, the node of smaller ID is added to $I'$. It remains to show that the weight $w(I)$ of the independent set $I$ is as claimed. We clearly have
  \[
  w(I) = w(I_0) - w(I')\qquad\text{and}\qquad w(I_0) = \utility(\vec{x}).
  \]
  We therefore need to show that $w(I')\leq \cost(\vec{x})$. To see this, note that by the construction of $I'$, the inclusion of a node $v$ in $I'$ can be blamed on an incident edge $\set{u,v}$ and every edge with $x_u=x_v=1$.  Further, every such edge can be blamed by at most one node $v\in I'$ and we also know that if $v\in I'$ is blamed on edge $\set{u,v}$, then $w(u)\geq w(v)$. The contribution of edge $\set{u,v}$ to $\cost(\vec{x})$ is equal to $\min\set{w(u), w(v)}\cdot x_u\cdot x_v = w(v)$ and we therefore have $w(I')\leq \cost(\vec{x})$ as required.
\end{proof}

\paragraph{Basic Independent Set Rounding Algorithm.}
We next show that given a fractional solution $\vec{x}$ for which the utility is sufficiently larger than the cost, there is an efficient \CONGEST algorithm to compute a large independent set. Specifically, we prove the following lemma, which we will use in the remaining algorithms of this section.

\begin{lemma} \label{lemma:basicISalgorithm}
  Let $G=(V,E)$ be a graph with maximum degree $\Delta$ and node weight function $w:V\to \N$. Further assume that we are given a fractional independent set solution $\vec{x}$ with $\utility(\vec{x})\geq 2\cost(\vec{x})$. If $G$ is equipped with a $\zeta$-vertex coloring for some $\zeta\leq\poly(n)$, for every $\eps\in(0,1]$, an independent set of total weight at least $(1/2 - \eps)\utility(\vec{x})$ can be computed in $O(\log^2(\Delta/\eps) + \log^* \zeta)$ deterministic rounds in the \CONGEST model.
\end{lemma}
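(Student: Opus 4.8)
The plan is to run the abstract rounding machinery of \Cref{sec:basicrounding} directly on $G$ and then read off an independent set with \Cref{lemma:computingIS}. First, if $\eps\geq 1/2$ the target weight $(1/2-\eps)\utility(\vec x)$ is non-positive and the empty set is a valid output, so assume $\eps<1/2$. Take the label set $\Labels=\{0,1\}$, where label $1$ means ``in the set''. The cost in \eqref{eq:ISutilityandcost} is a sum of one term per edge of $G$, each depending only on the labels of that edge's two endpoints, so it fits the setup of \Cref{sec:roundingsetup} with rounding graph $H=G$; the utility is a sum of per-node terms $w(v)x_v$, which we fold in using \Cref{lemma:nodeutilitycost} (each node attaches its utility to a private dummy edge that it simulates locally). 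The hypothesis $\cost(\vec x)\leq\tfrac12\utility(\vec x)$ gives $\utility(\vec x)-\cost(\vec x)\geq\mu\,\utility(\vec x)$ with $\mu=\tfrac12$, so the input requirements of \Cref{lemma:fractionalrounding} and \Cref{lemma:distributedonG} are satisfied.

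Before invoking the rounding I would condition $\vec x$. \Cref{lemma:distributedonG} needs a lower bound $\lambda_{\min}$ on the nonzero fractional values, and the claimed complexity involves only $\Delta$ and $\eps$ (not the weights, nor the spread of $\vec x$). So I would first scale $\vec x$ up by the largest $s\geq 1$ keeping all coordinates $\leq 1$ and keeping $\utility\geq 2\cost$ (up-scaling only increases $\utility-\cost$ and preserves $\cost\leq\tfrac12\utility$), and then zero out every coordinate below a threshold $\tau=\mathrm{poly}(\eps/\Delta)$. Zeroing only lowers the cost, and the utility it destroys is $\sum_{x_v<\tau}w(v)x_v$; using $\utility\geq 2\cost$, the up-scaling, and polynomial boundedness of the instance in $\Delta$, one argues this is an $O(\eps)$-fraction of $\utility(\vec x)-\cost(\vec x)$. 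Feeding the result to the preprocessing of \Cref{lemma:fractionalrounding} with accuracy $\Theta(\eps)$ then yields a $1/2^k$-integral $\vec x'$ with $2^k=O(1/(\eps\lambda_{\min}))=\mathrm{poly}(\Delta/\eps)$, with $\utility(\vec x')-\cost(\vec x')\geq(1-O(\eps))\big(\utility(\vec x)-\cost(\vec x)\big)$ and $\utility(\vec x')-\cost(\vec x')\geq\tfrac14\utility(\vec x')$.

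Now apply the distributed rounding of \Cref{lemma:distributedonG} to $\vec x'$ on $G$ with $\mu=\tfrac14$, accuracy $\eps'=\Theta(\eps)$, and the given $\zeta$-coloring. Since $|\Labels|=2$ the messages are $O(\log\zeta)=O(\log n)$ bits, and since $\lambda_{\min}(\vec x')=\mathrm{poly}(\eps/\Delta)$, the $1/2^k$-integral bound $O(k^2/(\eps'\mu)+k\log^*\zeta)$ of that lemma with $k=O(\log(\Delta/\eps))$ gives round complexity $O(\log^2(\Delta/\eps)+\log^*\zeta)$. This produces an integral $\vec y\in\{0,1\}^{|V|}$ with $\utility(\vec y)-\cost(\vec y)\geq(1-\eps')\big(\utility(\vec x')-\cost(\vec x')\big)$, and \Cref{lemma:computingIS} converts $\vec y$ into an independent set $I$ of $G$ with $w(I)\geq\utility(\vec y)-\cost(\vec y)$ in one further round. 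Chaining the estimates, $w(I)\geq(1-\eps')(1-O(\eps))\big(\utility(\vec x)-\cost(\vec x)\big)\geq(1-O(\eps))\cdot\tfrac12\utility(\vec x)\geq(\tfrac12-\eps)\utility(\vec x)$ after choosing the constants hidden in $\eps'$ and $\tau$ appropriately.

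The reformulation, the folding of node utilities via \Cref{lemma:nodeutilitycost}, and the three black-box invocations (\Cref{lemma:fractionalrounding}, \Cref{lemma:distributedonG}, \Cref{lemma:computingIS}) are routine bookkeeping; I expect the only real obstacle to be the conditioning step --- showing that maximal up-scaling followed by thresholding at $\tau=\mathrm{poly}(\eps/\Delta)$ destroys only an $O(\eps)$-fraction of $\utility(\vec x)-\cost(\vec x)$, so that the final round complexity depends on $\Delta$ and $\eps$ alone. That is precisely the place where the hypothesis $\utility(\vec x)\geq 2\cost(\vec x)$ (and, at this level of generality, polynomial boundedness of the instance in $\Delta$) is actually used.
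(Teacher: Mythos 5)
Your high-level plan — fold the per-node utility into dummy edges via \Cref{lemma:nodeutilitycost}, preprocess with \Cref{lemma:fractionalrounding}, round with \Cref{lemma:distributedonG} at $|\Labels|=2$, and read off the set with \Cref{lemma:computingIS} — matches the paper, and you are right that the only nontrivial issue is ensuring a lower bound $\lambda_{\min}=\mathrm{poly}(\eps/\Delta)$ on the nonzero fractional values. But the conditioning step you propose (maximal up-scaling followed by thresholding at a weight-blind $\tau=\mathrm{poly}(\eps/\Delta)$) is incorrect, and this is precisely the spot where you flagged your own uncertainty. Consider a path $u$--$v$--$w$ with $\Delta=2$, weights $w(u)=w(v)=1$ and $w(w)$ enormous, and fractional values $x_u=1$, $x_v=0$, $x_w=\Delta^{-100}$. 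Then $\cost(\vec x)=0$ so $\utility(\vec x)\geq 2\cost(\vec x)$ holds, no up-scaling is possible because $x_u=1$ already, and thresholding at any fixed $\tau=\mathrm{poly}(\eps/\Delta)$ zeroes out $x_w$ and drops $\utility-\cost$ from roughly $w(w)\Delta^{-100}$ down to $1$, i.e.\ destroys a $1-o(1)$ fraction. The utility destroyed by thresholding is $\sum_{x_v<\tau}w(v)x_v$, and this can dominate $\utility(\vec x)$ when the small fractional values sit on heavy nodes; the hypothesis $\utility\geq 2\cost$ gives you no control over this. Your appeal to ``polynomial boundedness of the instance in $\Delta$'' is not available either: the lemma allows arbitrary $w:V\to\N$, and the round complexity it claims is independent of $W$.

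The paper avoids this with an \emph{additive} rather than multiplicative perturbation: set $x_v':=x_v+\frac{\eps}{2\Delta}$ for every node, so that $\lambda_{\min}\geq\frac{\eps}{2\Delta}$ holds with no thresholding and hence no utility is ever discarded. The only thing to control is the cost increase. Using $\sum_v \deg(v)\,w(v)\,x_v\le\Delta\,\utility(\vec x)$ and $\cost(\vec x)\le\frac12\utility(\vec x)$, one gets
\[
\cost(\vec x')\ \le\ \cost(\vec x)+\frac{\eps}{2}\utility(\vec x)+\frac{\eps}{4}\bigl(\utility(\vec x')-\utility(\vec x)\bigr),
\]
from which $\utility(\vec x')-\cost(\vec x')\ge\bigl(\tfrac12-\tfrac{\eps}{2}\bigr)\utility(\vec x')$ follows. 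Feeding $\vec x'$ into \Cref{lemma:distributedonG} with $\mu=\tfrac12-\tfrac{\eps}{2}$ and $\lambda_{\min}=\tfrac{\eps}{2\Delta}$ yields the stated round complexity and an integral assignment whose $\utility-\cost$ is at least $(1-\eps)\bigl(\tfrac12-\tfrac{\eps}{2}\bigr)\utility(\vec x')\ge\bigl(\tfrac12-\eps\bigr)\utility(\vec x)$, and \Cref{lemma:computingIS} converts it into an independent set. The moral is that the additive shift's increase to $\utility$ (namely $w(V)\cdot\tfrac{\eps}{2\Delta}$) tracks exactly the new cost term it creates, which is why the error is a clean $O(\eps)$ fraction — a property that a truncation at a weight-oblivious threshold cannot provide.
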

\begin{proof}
  We first show that the fractional solution $\vec{x}$ can be transformed into a fractional solution $\vec{x}'$ with no small positive fractional values. Specifically, for all $v\in V$, we set $x_v' := x_v + \frac{\eps}{2\Delta}$. We then have
  \begin{eqnarray}
	\utility(\vec{x}') 
	& = &   
	\sum_{v \in V} w(v)\cdot \left (x _v+ \frac{\eps}{2\Delta}\right)
	\ =\ \utility(\vec{x}) + w(V)\cdot\frac{\eps}{2\Delta},\label{eq:utilityprime}\\
	\cost(\vec{x}')
	& =  &
	\!\!\!\!\sum_{\set{u,v}\in E}\!\!\! \min\{w(u),w(v)\} \cdot 
	\left(x_u + \frac{\eps}{2\Delta}\right)\cdot
	\left(x_v + \frac{\eps}{2\Delta}\right)\nonumber\\
	& \le  &
	\!\!\!\!\sum_{\set{u,v}\in E}\!\!\! \left(
 \min\{w(u),w(v)\} \cdot x_u x_v 
 + w(u) x_u \frac{\eps}{2\Delta} + w(v) x_v \frac{\eps}{2\Delta} + \frac{w(u) + w(v) }{2} \cdot \left( \frac{\eps}{2\Delta} \right)^2	\right)
 \nonumber\\
	& \leq &
	\!\!\!\!\sum_{\set{u,v}\in E}\!\!\!  \min\{w(u),w(v)\} \cdot  x_u x_v + 
	\sum_{v\in V} \frac{\deg_G(v)\cdot  w(v)\cdot x_v \cdot\eps}{2\Delta} +
	\frac{1}{2} \cdot \sum_{v \in V} \frac{\deg_{G}(v)\cdot w(v)\cdot\eps^2}{4\Delta^2} \nonumber\\
	& \leq &
	\cost(\vec{x}) + \frac{\eps}{2}\cdot\utility(\vec{x}) + w(V)\cdot\frac{\eps^2}{8\Delta}\nonumber\\
	& \leq &
	\left(\frac{1}{2}+\frac{\eps}{2}\right)\cdot\utility(\vec{x}) + \frac{\eps}{4}\cdot\big(\utility(\vec{x}')-\utility(\vec{x})\big).\label{eq:costprime}
  \end{eqnarray}
  In the last inequality, we use that $\utility(\vec{x})\geq 2\cost(\vec{x})$ and we use \eqref{eq:utilityprime}. We now have
  \begin{eqnarray*}
    \utility(\vec{x}') - \cost(\vec{x}')
    & = &
    \utility(\vec{x}) + \big(\utility(\vec{x}') - \utility(\vec{x})\big) - \cost(\vec{x}')\\
    & \stackrel{\eqref{eq:costprime}}{\geq} &
    \left(\frac{1}{2}-\frac{\eps}{2}\right)\cdot\utility(\vec{x}) + 
    \left(1-\frac{\eps}{4}\right)\cdot \big(\utility(\vec{x}') - \utility(\vec{x})\big)
    \geq
    \left(\frac{1}{2}-\frac{\eps}{2}\right)\utility(\vec{x}').
  \end{eqnarray*}
  In the last inequality, we use that $\utility(\vec{x}')\geq \utility(\vec{x})$ and that $1-\eps/4\geq 0$. By using \Cref{lemma:distributedonG} and the fact that for all $v\in V$, $x_v'\geq \frac{\eps}{2\Delta}$, we directly get that an independent set of size $(1-\eps)\cdot \big(\frac{1}{2}-\frac{\eps}{2}\big)\cdot\utility(\vec{x}')\geq \big(\frac{1}{2}-\eps\big)\cdot\utility(\vec{x})$ can be computed deterministically in $O(\log^2(\Delta/\eps) + \log^*\zeta)$ rounds in the \CONGEST model. Here, we also use that the necessary message size of \cref{lemma:distributedonG} is $O(\log n)$, since $|\Labels|=2$ and $\log\zeta=O(\log n)$).
\end{proof}

\subsection{Approximation for Graphs with Bounded Neighborhood Independence}
\label{sec:boundedindep}

We next show how to efficiently compute a $(1-\eps)/\beta$-approximation to the maximum weight independent set (MWIS) problem of a graph $G=(V,E)$, where $\beta$ denotes the neighborhood independence of $G$ (i.e., the maximum number of pairwise non-adjacent neighbors of a node). We use the following linear program (LP) to define the family of fractional solutions that we use.
\begin{equation}\label{eq:fractionISLP}
	\max \sum_{v\in V} w(v)\cdot x_v\qquad\text{s.t.}\quad
	\forall v\in V\, :\!\!\! \sum_{u\in N^+(v)}\!\!\! x_u \leq 1\quad\text{and}\quad
	x_v \geq 0. 
\end{equation}
In the following, let $S^*(w)$ be the value of an optimal solution of LP \eqref{eq:fractionISLP} on graph $G$ with weight function $w$. Note that in a graph of neighborhood independence $\beta$, for each node $u\in V$, the number of nodes in $N^+(u)$ in any independent set is at most $\beta$. Hence, the weight of a maximum weight independent set is at most $\beta\cdot  S^*(w)$. We therefore need to show that we can efficiently compute an independent set of weight at least $(1-\eps)S^*(w)$. We first show that we can efficiently compute an independent set of weight at least $S^*(w)/4$.

\begin{lemma}\label{lemma:LPapproxIS}
  Let $G=(V,E)$ be a graph with maximum degree $\Delta$ and node weight function $w:V\to \N$. If $G$ is equipped with an $O(\Delta^2)$-vertex coloring, an independent set of total weight at least $S^*(w)/4$ can be computed in $O\big(\log^2(\Delta W)\big)$ deterministic rounds in the \CONGEST model.
\end{lemma}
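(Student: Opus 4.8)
The plan is to feed a near-optimal fractional solution of LP~\eqref{eq:fractionISLP} into the generic rounding of \Cref{sec:basicrounding}. First I would compute a feasible $\vec{x}^{*}$ for~\eqref{eq:fractionISLP} with $\utility(\vec{x}^{*})=\sum_{v\in V}w(v)x^{*}_{v}\ge(1-\eps_{0})\cdot S^{*}(w)$, where $\eps_{0}$ is a parameter that I will take polynomially small in $\Delta W$. Since~\eqref{eq:fractionISLP} is a packing LP whose constraint matrix is $0/1$ with at most $\Delta+1$ nonzeros per row and per column and whose objective coefficients lie in $[1,W]$, such an $\vec{x}^{*}$ can be produced by a deterministic distributed algorithm for approximately solving positive linear programs in $\poly\log(\Delta W)$ \CONGEST rounds; moreover, by discarding (zeroing out) any $x^{*}_{v}$ smaller than $\eps_{0}/(2\Delta)$ and rescaling slightly (in the spirit of the preprocessing in \Cref{lemma:fractionalrounding} and in the proof of \Cref{lemma:basicISalgorithm}), I may assume the resulting instance is polynomially bounded in $\Delta W$ while changing $\utility(\vec{x}^{*})$ by only a $(1\pm O(\eps_{0}))$ factor. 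If one wants to avoid invoking a generic LP solver, the alternative is to exhibit a simple, explicitly computable feasible fractional solution of value $\Omega(S^{*}(w))$ (e.g.\ $x_{v}=1/\max_{u\in N^{+}(v)}(\deg(u)+1)$, which is feasible since each closed neighborhood then carries mass at most $1$); pinning down its approximation ratio is then the work.

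Next I would scale and show that utility dominates twice the cost, so the hypothesis of \Cref{lemma:basicISalgorithm} is met. Set $\vec{x}:=\vec{x}^{*}/2$. The key observation is $\cost(\vec{x}^{*})\le\utility(\vec{x}^{*})$: orient every edge $\{u,v\}\in E$ toward the endpoint with the larger pair $(w(\cdot),\mathrm{ID}(\cdot))$, so that $\min\{w(u),w(v)\}$ is the weight of the tail; then $\cost(\vec{x}^{*})=\sum_{t\in V}w(t)\,x^{*}_{t}\!\!\sum_{h\in N(t):\,h\succ t}\!\!x^{*}_{h}\le\sum_{t\in V}w(t)\,x^{*}_{t}\sum_{h\in N(t)}x^{*}_{h}\le\sum_{t\in V}w(t)\,x^{*}_{t}=\utility(\vec{x}^{*})$, using feasibility $\sum_{h\in N(t)}x^{*}_{h}\le\sum_{u\in N^{+}(t)}x^{*}_{u}\le1$ in the last step. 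Because $\utility$ is linear and $\cost$ quadratic in the fractional values, $\utility(\vec{x})=\tfrac12\utility(\vec{x}^{*})$ and $\cost(\vec{x})=\tfrac14\cost(\vec{x}^{*})\le\tfrac14\utility(\vec{x}^{*})=\tfrac12\utility(\vec{x})$, so $\utility(\vec{x})\ge2\cost(\vec{x})$, and moreover $\utility(\vec{x})-\cost(\vec{x})\ge\tfrac12\utility(\vec{x})=\tfrac14\utility(\vec{x}^{*})\ge\tfrac{1-O(\eps_{0})}{4}\,S^{*}(w)$.

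Finally I would apply \Cref{lemma:basicISalgorithm} to $\vec{x}$ with the given $O(\Delta^{2})$-coloring as $\zeta$ and accuracy parameter $\eps$ polynomially small in $\Delta W$: this yields an independent set of weight at least $(\tfrac12-\eps)\utility(\vec{x})=(\tfrac14-\tfrac{\eps}{2})\utility(\vec{x}^{*})\ge(1-O(\eps_{0}))(\tfrac14-\tfrac{\eps}{2})\,S^{*}(w)$, which is $\ge S^{*}(w)/4$ once the lower-order $\eps,\eps_{0}$ losses are absorbed. (Equivalently one can round $\vec{x}$ via \Cref{lemma:distributedonG} and extract the set via \Cref{lemma:computingIS}, obtaining weight $\ge(1-\eps)(\utility(\vec{x})-\cost(\vec{x}))$, an integer, so the clean bound follows by taking $\eps=1/\poly(\Delta W)$.) The round complexity is $O(\log^{2}(\Delta/\eps)+\log^{*}(\Delta^{2}))=O(\log^{2}(\Delta W))$ for the rounding, plus the $\poly\log(\Delta W)$ rounds to build $\vec{x}^{*}$, giving the claimed bound. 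The main obstacle is the first step: obtaining, deterministically and within an $O(\log^{2}(\Delta W))$-round budget, a fractional solution of~\eqref{eq:fractionISLP} that is simultaneously near-optimal and bounded below by $1/\poly(\Delta W)$; the orientation/feasibility argument of the second step is the clean structural ingredient that makes the generic rounding machinery applicable.
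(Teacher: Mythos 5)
Your overall strategy is the same as the paper's (compute an approximate LP solution, show utility dominates cost, feed it to \Cref{lemma:basicISalgorithm}), but your key lemma is slightly weaker than it needs to be, and that weakness cascades into a genuine gap.

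Your orientation argument only yields $\cost(\vec{x}^{*})\le\utility(\vec{x}^{*})$ for a feasible $\vec{x}^{*}$: each edge is charged once to its tail, and feasibility bounds the head-mass by $1$, but nothing gives back a factor of $2$. The paper's proof instead uses the symmetric double-count
\[
\cost(\vec{x}) = \tfrac{1}{2}\sum_{u\in V}\sum_{v\in N(u)}\min\{w(u),w(v)\}\,x_u x_v
\le \tfrac{1}{2}\sum_{u\in V} w(u)\,x_u\!\!\sum_{v\in N(u)}\! x_v
\le \tfrac{1}{2}\,\utility(\vec{x}),
\]
where each edge is counted from both endpoints and $\min\{w(u),w(v)\}\le w(u)$ on each side, so a factor $\tfrac{1}{2}$ appears for free. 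This gives $\utility(\vec{x})\ge 2\cost(\vec{x})$ for \emph{every} feasible LP solution without any rescaling. Because you only get $\cost\le\utility$, you are forced to rescale by $\vec{x}:=\vec{x}^{*}/2$, which costs you a factor $2$ in $\utility(\vec{x})=\tfrac12\utility(\vec{x}^{*})$, and then \Cref{lemma:basicISalgorithm} delivers weight roughly $\tfrac14\utility(\vec{x}^{*})$. To push this up to $S^{*}(w)/4$, you now need $\utility(\vec{x}^{*})\ge(1-\eps_0)S^{*}(w)$ with $\eps_0\to 0$ — a near-optimal fractional solution. Two things break here. First, deterministic distributed positive-LP solvers in the style of \cite{nearsighted} have round complexity that blows up polynomially as $\eps_0\to 0$, so a $(1-1/\poly(\Delta W))$-approximation does not fit in $O(\log^2(\Delta W))$ rounds; the paper deliberately uses only a fixed $(2/3)$-approximation, which is cheap. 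Second, the closing integrality step (``$w(I)$ is an integer, hence $\ge S^{*}(w)/4$'') is not valid: $S^{*}(w)/4$ is an LP value divided by $4$ and can be on the order of $nW$, so with $\eps,\eps_0=1/\poly(\Delta W)$ the multiplicative loss can exceed $1$ in absolute terms for large $n$, and integer rounding does not recover it.

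The fix is small and brings you to the paper's argument: replace the one-sided orientation bound by the symmetric count above (or equivalently use $\min\{a,b\}\le\tfrac{a+b}{2}$ on each edge before summing), obtaining $\cost(\vec{x}^{*})\le\tfrac12\utility(\vec{x}^{*})$ directly. Then no rescaling is needed, a $(2/3)$-approximate LP solution and a constant $\eps\le 1/8$ in \Cref{lemma:basicISalgorithm} already give $(1/2-\eps)\cdot(2/3)\,S^{*}(w)\ge S^{*}(w)/4$, and the round complexity is the claimed $O(\log^2(\Delta W))$.
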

\begin{proof}
  First, observe that a $(2/3)$-approximation of the LP \eqref{eq:fractionISLP} can be computed in $O\big(\log^2(\Delta W)\big)$ rounds by using the distributed covering and packing LP algorithm of \cite{nearsighted}.

  Next, note that any feasible solution of \eqref{eq:fractionISLP} satisfies $\utility(\vec{x})\geq 2 \cost(\vec{x})$:
  \begin{eqnarray*}
    \cost(\vec{x})
    &=&\sum_{ \{u,v\} \in E}x_u\cdot x_v \cdot \min\{w(u),w(v)\}\\
    & =  & \frac{1}{2}\sum_{u \in V} \sum_{v \in N(u)} x_u\cdot x_v \cdot \min \{w(u),w(v)\}\\
    & \leq &\frac{1}{2}\sum_{u \in V} x_u \cdot w(u)  \cdot \sum_{v \in N(u)} x_v\\
    & \stackrel{\eqref{eq:fractionISLP}}{\leq} &  \frac{1}{2}\sum_{u \in V} x_u \cdot w(u) =  \frac{1}{2}\ \utility(\vec{x}),
  \end{eqnarray*}
  The claim of the lemma now directly follows from \cref{lemma:basicISalgorithm}.
\end{proof}

Note that in \cref{lemma:LPapproxIS}, the constant $4$ could be replaced by any constant larger than $2$. In order to obtain an independent set of weight $(1-\eps)S^*(w)$, we however have to combine the lemma with some additional ideas. To this end, we adapt a technique that has been used in \cite{KawarabayashiKS20} (and which is based on the local-ratio technique described in \cite{localratio}). We first describe and analyze an abstract iterative process to compute an independent set. Consider a sequence of independent sets $I_1, I_2, \dots, I_T$ and node weight functions $w_1, w_2, \dots, w_{T+1}$ and $w_1',w_2'\dots,w_T'$, which are constructed as follows. The weight function $w_1$ is equal to the original weight function $w$ (i.e., $w_1:=w$) and $I_1$ is an arbitrary independent set of $G$. For each $i\in \set{1,\dots,T}$, $w_{i+1}$ and $w_{i}'$ are defined as follows:
\[
  \forall u\in V\,:\, w_{i+1}(u) = \max\set{0, w_{i}(u) - \sum_{v\in N^+(u)\cap I_{i}} w_i(v)}
  \quad\text{and}\quad
  w_{i}'(u) := w_{i}(u) - w_{i+1}(u).
\]
Further, for $i>1$, $I_i$ is an arbitrary independent set of the subgraph of $G$ induced by the nodes $v$ with $w_i(v)>0$. Hence, the process runs in phases. In phase $i$, we start with edge weights $w_i$ and we determine some independent set $I_i$ of the nodes of $G$ with positive weight. We then adjust the weights as follows. For each node in $I_i$, we set the weight $w_{i+1}$ to $0$ and for each node that has neighbors in $I_i$, we deduce the sum of the weights of the neighboring nodes in $I_i$ (or we set the weight to $0$ in case it would become negative otherwise). The weights $w_i'$ are just defined as the amount by which the weights are reduced when going from weights $w_i$ to weights $w_{i+1}$. Because for nodes $v\in I_i$, we set $w_{i+1}(v)=0$ and for such nodes, we therefore have $w_i(v)=w_i'(v)$. Also note that since the weights $w_{i+1}$ of nodes in independent set $I_i$ are set to $0$, the independent sets $I_i$ for different $i$ are disjoint.

We now define a combined independent set $I$ as follows. We start with $I=\emptyset$ and go through the sets $I_1, \dots, I_T$ in reverse order (i.e., starting with $I_T$). When considering set $I_j$, we add all nodes of $I_j$ to $I$, which do not already have a neighbor in $I$. That is, more formally, we consider $j=T,T-1,\dots,1$ and in step $j$, we first define $I_j' := I_j \setminus N^+(I)$ and we then set $I:=I\cup I_j'$. The following lemma shows that the total weight $w(I)$ of the independent set $I$ can be lower bounded by the sum of the weights of the independent sets $I_i$ w.r.t.\ the weight function $w_i$ that is used when computing $I_i$.

\begin{lemma}\label{lemma:totalindsetsize}
  \[
  w(I) \geq \sum_{i=1}^T w_i(I_i) = \sum_{i=1}^T w_i'(I_i).
  \] 
\end{lemma}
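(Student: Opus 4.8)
The plan is to prove the easy equality first and then the inequality by a downward induction over the phases. For the equality $\sum_{i=1}^T w_i(I_i) = \sum_{i=1}^T w_i'(I_i)$, note that since $I_i$ is an independent set, for every $v\in I_i$ we have $N^+(v)\cap I_i = \{v\}$, hence $w_{i+1}(v) = \max\{0,\, w_i(v) - w_i(v)\} = 0$ and therefore $w_i'(v) = w_i(v)$; summing over $v\in I_i$ gives the claim. Along the way one records the elementary facts that $0\le w_{i+1}\le w_i$ pointwise (so all weight functions are nonnegative and nonincreasing in $i$) and, for every node $u$, $w_i'(u) = \min\{w_i(u),\, w_i(N^+(u)\cap I_i)\}$, where $w_i(N^+(u)\cap I_i) := \sum_{v\in N^+(u)\cap I_i} w_i(v)$. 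These are used repeatedly below.

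For the inequality, write $I^{(j)} := \bigcup_{j'=j}^{T} I_{j'}'$ for the partial independent set that exists after index $j$ has been processed in the reverse sweep, so that $I^{(T)} = I_T$, $I^{(1)} = I$, and $I^{(j)} = I^{(j+1)}\sqcup I_j'$ with $I_j' = I_j\setminus N^+(I^{(j+1)})$ (the union is disjoint because $I^{(j+1)}\subseteq N^+(I^{(j+1)})$). I will prove by downward induction on $j$ the invariant $w_j(I^{(j)}) \ge \sum_{i=j}^{T} w_i(I_i)$; the base case $j=T$ is an equality. For the inductive step, decompose $w_j = w_{j+1} + w_j'$ and use that both $w_{j+1}$ and $w_j'$ vanish on $I_j\supseteq I_j'$ to get $w_j(I^{(j)}) = w_{j+1}(I^{(j+1)}) + w_j'(I^{(j+1)}) + w_j(I_j')$. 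Since $w_j(I_j') = w_j(I_j) - w_j(I_j\setminus I_j')$, and $w_{j+1}(I^{(j+1)}) \ge \sum_{i=j+1}^{T} w_i(I_i)$ by the induction hypothesis, the step reduces to establishing $w_j'(I^{(j+1)}) \ge w_j(I_j\setminus I_j')$.

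The only real obstacle is this last inequality; everything else is bookkeeping. The key observation is that every $u\in I^{(j+1)}$ was selected into some $I_{j'}$ with $j'>j$, hence $w_{j'}(u)>0$, hence (weights nonincreasing) $w_{j+1}(u)>0$; since $w_{j+1}(u) = \max\{0,\, w_j(u) - w_j(N^+(u)\cap I_j)\}$, this forces $w_j(u) > w_j(N^+(u)\cap I_j)$, so on $I^{(j+1)}$ the annoying minimum disappears and $w_j'(u) = w_j(N^+(u)\cap I_j)$ exactly. Now a double-counting argument finishes it: $w_j'(I^{(j+1)}) = \sum_{u\in I^{(j+1)}}\sum_{v\in N^+(u)\cap I_j} w_j(v) = \sum_{v\in I_j} w_j(v)\cdot |N^+(v)\cap I^{(j+1)}|$, using the symmetry $v\in N^+(u)\Leftrightarrow u\in N^+(v)$. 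By construction, every $v\in I_j\setminus I_j' = I_j\cap N^+(I^{(j+1)})$ has a neighbor in $I^{(j+1)}$ — and it is a genuine neighbor, since $v\in I_j$ is disjoint from $I^{(j+1)}\subseteq\bigcup_{j'>j} I_{j'}$ — so its multiplicity $|N^+(v)\cap I^{(j+1)}|$ is at least $1$; restricting the sum to such $v$ gives $w_j'(I^{(j+1)}) \ge \sum_{v\in I_j\setminus I_j'} w_j(v) = w_j(I_j\setminus I_j')$. This completes the induction, and taking $j=1$ yields $w(I) = w_1(I^{(1)}) \ge \sum_{i=1}^{T} w_i(I_i)$, as required.
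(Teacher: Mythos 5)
Your proof is correct and proceeds by a genuinely different route than the paper's. The paper establishes the bound with a one-shot charging argument: it defines a map $\nu:\bar I\to I$ sending each $u\in\bar I=I_1\cup\dots\cup I_T$ either to itself (if $u\in I$) or to a higher-index neighbor in $I$ that blocked it, and then shows that the charge received by each $v\in I$ is covered by $w(v)$, summing at the end. You instead carry out a backward induction over the sweep, with the explicit invariant $w_j(I^{(j)})\geq\sum_{i=j}^T w_i(I_i)$, and reduce each step to the local inequality $w_j'(I^{(j+1)})\geq w_j(I_j\setminus I_j')$, which you then prove by a double-counting over edges from $I^{(j+1)}$ into $I_j$. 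Both arguments hinge on the same core observation — for a node $u$ selected at a later phase, the $\max(0,\cdot)$ in the weight update never clips, so $w_j'(u)=w_j(N^+(u)\cap I_j)$ exactly — but your version packages the bookkeeping as a clean running invariant rather than a global charging function, which is arguably easier to verify step by step. One minor slip: you assert that ``both $w_{j+1}$ and $w_j'$ vanish on $I_j$,'' but in fact only $w_{j+1}$ vanishes there; on $I_j$ one has $w_j'=w_j$, which is generally nonzero. The correct justification for the identity $w_j(I^{(j)})=w_{j+1}(I^{(j+1)})+w_j'(I^{(j+1)})+w_j(I_j')$ is that $I^{(j)}=I^{(j+1)}\sqcup I_j'$, together with $w_j=w_{j+1}+w_j'$ on $I^{(j+1)}$ and $w_{j+1}=0$ (hence $w_j'=w_j$) on $I_j'$. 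The formula you use is right, so the rest of the argument is unaffected.
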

\begin{proof}
  Let $\bar{I}:= I_1\cup\dots\cup I_T$. Note that we have $I\subseteq \bar{I}$. For each node $u\in \bar{I}$, we define $i_u$ as the index $i\in \set{1,\dots,T}$ for which $v\in I_i$. We further define a function $\nu:\bar{I}\to I$ as follows. For $u\in I$, we set $\nu(u)=u$. Now, consider a node $u\in\bar{I}\setminus I$. The fact that $u$ was not added to $I$ implies that there is at least one neighbor $v\in N(u)$ such that $v$ was added to $I$ before considering set $I_{i_u}$, i.e., $i_v>i_u$. If there are several such nodes, we pick $v$ arbitrarily among them and we set $\nu(u)=v$. For each node $v\in I$, we further define $\nu^{-1}(v):=\set{u\in \bar{I} : \nu(u)=v}$. We next show that
  \begin{equation}\label{eq:indsetcharging}
      \forall v\in I\,:\,w(v)\geq \sum_{u\in \nu^{-1}(v)} w_{i_u}(u)
  \end{equation}
  To see \eqref{eq:indsetcharging}, recall the above process. Also recall that $i_v>i_u$ for all $u\in \nu^{-1}(v)$. Whenever a neighbor $u$ of $v$ is added to independent set $I_{j}$ for some $j<i_u$, the weight of $v$ is reduced by $w_{i_u}(u)$. Note that since $v\in I_{i_v}$, we have $w_{i_v}(v)>0$ and thus when $u$ is added to $I_{j}$, the full weight $w_{j}(u)$ is reduced from the weight of $v$. We therefore have
  \[
  w_{i_u}(v) = w(v) - \sum_{j=1}^{i_u-1} w_j(I_j\cap N(v))
  \]
  \cref{eq:indsetcharging} now follows because $\nu^{-1}(v)\subseteq \set{v}\cup\bigcup_{j<i_u} (I_j\cap N(v))$. From \cref{eq:indsetcharging}, we directly get the claim of the lemma by summing over all nodes in $I$. For the second part of the claim, note that for every $v\in I_i$, we have $w_i'(v)=w_i(v)$.
\end{proof}

Recall that $w_i'$ is the difference between $w_i$ and $w_{i+1}$, i.e., it is the amount by which the weights $w_i$ are decreased after adding independent set $I_i$ to the sequence. The next technical lemma shows that w.r.t.\ the weights $w_i'$, independent set $I_i$ achieves the bound given by the LP \eqref{eq:fractionISLP}.

\begin{lemma}\label{lemma:piecewiseindset}
	\[
	\forall i \in \set{1,\dots,T}, w_i(I_i) = w_i'(I_i) \geq S^*(w'_i).
	\]
\end{lemma}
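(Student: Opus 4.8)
The statement $w_i(I_i) = w'_i(I_i) \geq S^*(w'_i)$ has two parts. The equality $w_i(I_i) = w'_i(I_i)$ was already essentially established in the preceding text: for a node $v \in I_i$ we set $w_{i+1}(v) = 0$, hence $w'_i(v) = w_i(v) - w_{i+1}(v) = w_i(v)$, and summing over $v \in I_i$ gives the equality. So the work is entirely in the inequality $w'_i(I_i) \geq S^*(w'_i)$, i.e., the independent set $I_i$ — evaluated against the \emph{increment} weights $w'_i$ — is at least as heavy as the optimal LP value $S^*(w'_i)$ of the fractional independent set LP \eqref{eq:fractionISLP} with those same increment weights.

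**Key steps.** First I would unpack the definition of $w'_i$: for every node $u$,
\[
w'_i(u) = w_i(u) - w_{i+1}(u) = \min\Big\{w_i(u),\ \sum_{v \in N^+(u) \cap I_i} w_i(v)\Big\},
\]
using that $w_{i+1}(u) = \max\{0, w_i(u) - \sum_{v \in N^+(u)\cap I_i} w_i(v)\}$. In particular $w'_i(u) \leq \sum_{v \in N^+(u)\cap I_i} w_i(v)$ for \emph{every} node $u$, and this is the crucial inequality. Next, take \emph{any} feasible solution $\vec{x}$ of the LP \eqref{eq:fractionISLP} with weight function $w'_i$; I want to show $\sum_u w'_i(u) x_u \leq w'_i(I_i)$. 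Substituting the bound on $w'_i(u)$ and swapping the order of summation:
\[
\sum_{u \in V} w'_i(u)\, x_u \ \leq\ \sum_{u \in V} x_u \sum_{v \in N^+(u) \cap I_i} w_i(v)
\ =\ \sum_{v \in I_i} w_i(v) \sum_{u \in N^+(v)} x_u
\ \leq\ \sum_{v \in I_i} w_i(v) \cdot 1 \ =\ w_i(I_i),
\]
where the middle equality uses that $v \in N^+(u) \iff u \in N^+(v)$, and the last inequality is exactly the LP constraint $\sum_{u \in N^+(v)} x_u \leq 1$ applied at each $v \in I_i$. Since this holds for every feasible $\vec{x}$, it holds for the optimum, giving $S^*(w'_i) \leq w_i(I_i) = w'_i(I_i)$, as desired. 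Combining with the equality $w_i(I_i) = w'_i(I_i)$ completes the proof.

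**Main obstacle.** There is no serious obstacle here — the proof is a short double-counting argument. The one point requiring a little care is getting the direction of the key per-node inequality right: we need $w'_i(u) \leq \sum_{v \in N^+(u)\cap I_i} w_i(v)$, which holds because $w'_i(u)$ is a \emph{minimum} of $w_i(u)$ and that sum (equivalently, because $w_{i+1}(u) \geq w_i(u) - \sum_{v\in N^+(u)\cap I_i} w_i(v)$ by the $\max$ with $0$). One should also note that it does not matter whether $u$ itself lies in $I_i$: if $u \in I_i$ then $N^+(u) \cap I_i = \{u\}$ (as $I_i$ is independent), so the bound reads $w'_i(u) \leq w_i(u)$, which is consistent with $w'_i(u) = w_i(u)$ there. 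This handles all cases uniformly, so no case split is needed.
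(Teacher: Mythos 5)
Your proof is correct, and it is the paper's argument written out in its ``weak duality'' form. The paper constructs the explicit dual feasible solution $y_v = w'_i(v)$ for $v \in I_i$, $y_v = 0$ otherwise, verifies feasibility via exactly the bound you identify, namely $w'_i(u) \le \sum_{v \in N^+(u)\cap I_i} w_i(v)$, and then invokes LP duality to conclude that the dual objective $w'_i(I_i)$ upper bounds $S^*(w'_i)$. You instead take an arbitrary primal feasible $\vec{x}$, substitute the same bound, and swap the order of summation so that the primal constraint $\sum_{u\in N^+(v)} x_u \le 1$ can be applied at each $v \in I_i$ --- which is precisely the chain of inequalities one would write when proving weak duality for this particular pair of solutions. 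So the two proofs are logically the same; the only difference is presentational. Your version is self-contained and does not need to cite LP duality as a black box, which also makes visible that only the easy (weak) direction of duality is used, whereas the paper somewhat heavy-handedly cites the strong duality theorem even though weak duality suffices.
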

\begin{proof}
  We prove this by considering the dual LP of LP \eqref{eq:fractionISLP} on graph $G$ with weight function $w'_i$:
  \begin{equation}\label{eq:dualLP}
	\min \sum_{v\in V}  y_v\qquad\text{s.t.}\quad
	\forall v\in V\, :\!\!\! \sum_{u\in N^+(v)}\!\!\! y_u \geq w'_i(v)\quad\text{and}\quad
	y_v \geq 0.
  \end{equation}
  Note that by the strong duality theorem for linear programs, the optimal value of the dual LP \eqref{eq:dualLP} is equal to the optimal value of LP \eqref{eq:fractionISLP} and thus equal to $S^*(w_i')$. We next show that a feasible solution to \eqref{eq:dualLP} is achieved by setting $y_v= w'_i(v)$ if $v \in I_i$, $y_v=0$ otherwise. We need to show that for all $v\in V$, $\sum_{u\in N^+(v)} y_u \geq w_i'(v)$. This is clearly true for $v\in I_i$. It is also true if $v\not\in I_i$ and $v$ has no neighbor in $I_i$ because in this case $w_i'(v)=0$. Let us therefore assume that $N(v)\cap I_i\neq \emptyset$. Note that in this case, $w_i'(v) = \min\big\{w_i(v), \sum_{u\in N(v)\cap I_i} w_i(u) = \sum_{u\in N(v)\cap I_i} w_i'(u)\big\}$. We therefore have  $\sum_{u\in N^+(v)} y_u \geq w_i'(v)$ and thus $y_v=w_i'(v)$ gives a feasible solution for \eqref{eq:dualLP}. The objective value of this feasible solution is exactly $w_i'(I_i)=w_i(I_i)$, which is therefore an upper bound on the optimal value $S^*(w_i')$ of LP \eqref{eq:dualLP}. 
\end{proof}

The following theorem shows that if each independent set $I_i$ is chosen such that $w_i(I_i)=\Omega(S^*(w_i))$, then after computing a sequence of $O(\log(1/\eps))$ independent sets, the resulting independent set $I$ has weight at least $(1-\eps)\cdot S^*(w)$.

\begin{lemma}\label{lemma:boostingalgo}
	Let $G=(V,E)$ be a graph with node weights $w:V\to\N$ s.t.\ $w(v)\leq W$ for all $v\in V$. Assume that in the above iterative process, for every $i\geq 1$, independent set $I_i$ is chosen such that $w_i(I_i)\geq\rho\cdot S^*(w_i)$ for some given $\rho\in(0,1)$. Then after computing independent sets $I_1,\dots,I_T$ for $T\geq \ln(1/\eps)/\rho$, the resulting independent set $I$ has a total weight of $w(I)\geq (1-\eps)S^*(w)$.
\end{lemma}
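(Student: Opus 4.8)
The plan is to lower bound $w(I)$ by a quantity that telescopes through the LP values $S^*(w_i)$, and then reduce the statement to an elementary inequality about non‑increasing sequences. Write $s_i := S^*(w_i)$ for $i\in\set{1,\dots,T+1}$, so $s_1=S^*(w)$; all the $w_i$ are non‑negative by construction. Since the polytope underlying LP~\eqref{eq:fractionISLP} does not depend on the weight vector, the map $w\mapsto S^*(w)$ is monotone and subadditive on non‑negative weight vectors. As each $w_{i+1}$ is pointwise at most $w_i$, monotonicity gives $s_1\ge s_2\ge\cdots\ge s_{T+1}\ge 0$.

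The first key step is to record two lower bounds on $w_i(I_i)$ for each $i\in\set{1,\dots,T}$. By hypothesis, $w_i(I_i)\ge\rho\, s_i$. On the other hand, $w_i=w_i'+w_{i+1}$, so subadditivity of $S^*$ yields $s_i\le S^*(w_i')+s_{i+1}$, i.e.\ $S^*(w_i')\ge s_i-s_{i+1}$; combined with \Cref{lemma:piecewiseindset} (which asserts $w_i(I_i)=w_i'(I_i)\ge S^*(w_i')$) this gives $w_i(I_i)\ge s_i-s_{i+1}$. Hence $w_i(I_i)\ge\max\set{\rho\, s_i,\ s_i-s_{i+1}}$, and together with \Cref{lemma:totalindsetsize} we get
\[
  w(I)\ \ge\ \sum_{i=1}^{T}\max\set{\rho\, s_i,\ s_i-s_{i+1}}.
\]

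It then remains to prove the purely numerical claim that for every non‑increasing sequence $s_1\ge\cdots\ge s_{T+1}\ge 0$ one has $\sum_{i=1}^{T}\max\set{\rho\, s_i,\, s_i-s_{i+1}}\ge\big(1-(1-\rho)^T\big)s_1$. For this I would track the ``remaining potential'' $R_j:=s_1-\sum_{i=1}^{j}\max\set{\rho\, s_i,\, s_i-s_{i+1}}$, with $R_0=s_1$. Because $\max\set{\rho\, s_i,\, s_i-s_{i+1}}\ge s_i-s_{i+1}$, the dropped terms telescope, giving $R_j\le s_1-\sum_{i=1}^{j}(s_i-s_{i+1})=s_{j+1}$ for all $j$; in particular $R_{j-1}\le s_j$. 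Therefore $\max\set{\rho\, s_j,\, s_j-s_{j+1}}\ge\rho\, s_j\ge\rho\, R_{j-1}$, so $R_j\le(1-\rho)R_{j-1}$, and iterating (using $1-\rho>0$) gives $R_T\le(1-\rho)^T s_1$. Consequently $\sum_{i=1}^{T}\max\set{\rho\, s_i,\, s_i-s_{i+1}}=s_1-R_T\ge\big(1-(1-\rho)^T\big)s_1$. Finally, $T\ge\ln(1/\eps)/\rho$ gives $(1-\rho)^T\le e^{-\rho T}\le\eps$, whence $w(I)\ge(1-\eps)s_1=(1-\eps)S^*(w)$.

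The main subtlety — and the reason one cannot simply telescope $\sum_i(s_i-s_{i+1})$ — is that $S^*(w_i)$ need \emph{not} decrease geometrically, and can in fact stay constant across an iteration (e.g.\ when removing the weight ``used'' by $I_i$ leaves the LP optimum unchanged because other heavy vertices take over). The per‑iteration guarantee $w_i(I_i)\ge\rho S^*(w_i)$ is exactly what compensates: it forces the remaining potential $R_j$, rather than $s_j$ itself, to contract by a factor $1-\rho$ per step. Packaging the two bounds on $w_i(I_i)$ via the $\max$ and introducing the potential $R_j$ is the crux; the ingredients it relies on (monotonicity and subadditivity of $S^*$, \Cref{lemma:totalindsetsize}, \Cref{lemma:piecewiseindset}, and the telescoping estimate $R_{j-1}\le s_j$) are routine.
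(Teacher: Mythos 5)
Your proof is correct and is, at its core, the same argument as the paper's: the paper defines $\Upsilon_t := S^*(w) - \sum_{i=1}^t w_i(I_i)$ and proves by simultaneous induction that (I) $\Upsilon_t \leq S^*(w_{t+1})$ (via subadditivity of $S^*$ together with \Cref{lemma:piecewiseindset}) and (II) $\Upsilon_t \leq (1-\rho)^t\Upsilon_0$ (using (I) and the hypothesis $w_t(I_t)\geq \rho S^*(w_t)$), which is exactly your pair of facts $R_{j-1}\leq s_j$ and $R_j\leq(1-\rho)R_{j-1}$. The only difference is presentational: you first bundle the two lower bounds on $w_i(I_i)$ into a single $\max$, isolate a purely numerical claim about non-increasing sequences, and prove that claim with the same telescoping-plus-contraction argument, whereas the paper carries $w_i(I_i)$ itself through the induction.
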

\begin{proof}
  Let $I^{(t)}$ be the independent set that we obtain when running the above process for $t\leq T$ steps. That is, $I^{(t)}$ is the independent set that we get when greedily going through the independent sets $I_t, I_{t-1}, \dots, I_1$. Note that by \Cref{lemma:totalindsetsize}, we have $w(I^{(t)})\geq \sum_{i=1}^t w_i(I_i)$. For $t\geq 0$, we define a potential $\Upsilon_t$ as
  \[
  \Upsilon_t := S^*(w) - \sum_{i=1}^t w_i(I_i)
  \]
  to measure the progress towards obtaining an independent set of weight $w(I)$ approaching $S^*(w)$. For all $t\geq 0$, we inductively prove the following properties:
  \[
  \text{(I)}\,:\, \Upsilon_t \leq S^*(w_{t+1})\quad\text{and}\quad
  \text{(II)}\,:\, \Upsilon_t \leq (1-\rho)^{t}\cdot \Upsilon_0.
  \]
  Note that both statements are clearly true for $t=0$ (recall that for all $v$, $w_1(v)=w(v)$). For the induction step, note first that by assumption, we have that $w_t(I_t)\geq \rho S^*(w_t)$ and thus by the induction hypothesis (property (I) for $t-1$), $w_t(I_t)\geq \rho \Upsilon_{t-1}$. Since we have $\Upsilon_t = \max\set{0, \Upsilon_{t-1}-w_t(I_t)}$, this implies that $\Upsilon_t \leq \Upsilon_{t-1} - \rho\Upsilon_{t-1}$, which together with the induction hypothesis (property (II) for $t-1$) proves the induction step for property (II). To do the induction step for property (I), first consider three weight functions $w_a$, $w_b$, and $w_{ab}$ such that for all $v\in V$, $w_{ab}(v)= w_a(v) + w_b(v)$. Clearly, for any feasible solution for LP \eqref{eq:fractionISLP}, the objective value for weights $w_{ab}$ is equal to the sum of the objective values for weights $w_a$ and $w_b$. For the optimal objective values w.r.t.\ the three weight functions, we therefore have $S^*(w_a)+S^*(w_b)\geq S^*(w_{ab})$. Now recall that we have $w_t'=w_t - w_{t+1}$ and we therefore have $S^*(w_{t+1}) + S^*(w_t')\geq S^*(w_t)$. From \Cref{lemma:piecewiseindset}, we know that $S^*(w_t')\leq w_t(I_t)$. We therefore have
  \begin{eqnarray*}
    S^*(w_{t+1}) & \geq & S^*(w_t) - S^*(w_t')\ \geq\ S^*(w_t) - w_t(I_t)\\
    & \geq & S^*(w_t) - (\Upsilon_{t-1} - \Upsilon_t)
    \ \geq\ \Upsilon_{t-1} - (\Upsilon_{t-1}-\Upsilon_t) = \Upsilon_t.
  \end{eqnarray*}
  The last inequality follows from the induction hypothesis (property (I) for $t-1$). This also completes the induction step for property (I) for $t$.
  
  By \Cref{lemma:totalindsetsize} and the definition of $\Upsilon_T$, we have $w(I)\geq \Upsilon_0 - \Upsilon_T$. Note that we have $T\geq \ln(1/\eps)/\rho\geq \ln(1/\eps)/\ln(1/(1-\rho))$. This holds because $\ln(1/(1-\rho)) = \sum_{i=1}^\infty \rho^i/i!\geq \rho$. By property (II) for $t=T$, we therefore have 
  \[
  \Upsilon_T\leq (1-\rho)^T\cdot\Upsilon_0 = (1-\rho)^{\ln(1/\eps)/\ln(1/(1-\rho))}\cdot\Upsilon_0 = \eps\cdot\Upsilon_0.
  \]
  We therefore have $w(I) \geq (1-\eps)\cdot\Upsilon_0 = (1-\eps)\cdot S^*(w)$ as required.
\end{proof}

\paragraph{Proof of \Cref{thm:ISthm}, Part 1:} The following lemma proves result \eqref{eq:ISthm1} of \Cref{thm:ISthm}.

\begin{lemma}\label{lemma:ISbetaapprox}
Let $G=(V,E)$ be a graph with maximum degree $\Delta$, node weight function $w:V\to \Rp$, neighborhood independence $\beta$, and assume that $G$ is equipped with a proper $\xi$-vertex coloring. For any $\eps>0$, there is deterministic \CONGEST algorithm that computes a $(1-\eps)/\beta$-approximation to the maximum weight independent set problem on $G$. The round complexity of the algorithm is $O\big(\log^2(\Delta W) \cdot \log(1/\eps)+\log^* \xi\big)$. 
\end{lemma}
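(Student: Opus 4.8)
The plan is to combine the single-shot LP-rounding algorithm of \Cref{lemma:LPapproxIS} with the iterative boosting process analyzed in \Cref{lemma:totalindsetsize,lemma:piecewiseindset,lemma:boostingalgo}. As a preprocessing step, I would first turn the given proper $\xi$-coloring into a proper $O(\Delta^2)$-coloring of $G$ by running Linial's algorithm, which takes $O(\log^*\xi)$ rounds; this coloring is computed once and then reused in every phase. (As noted in the footnote at the start of this section, we may assume $w:V\to\N$; the general case follows by a standard scaling argument.)

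Next I would instantiate the iterative process described just before \Cref{lemma:totalindsetsize} with $T:=\lceil 4\ln(1/\eps)\rceil$ phases. In phase $i$ we have a weight function $w_i$ with $0\le w_i(v)\le w(v)\le W$, and the values stay integral because the update rule $w_{i+1}(u)=\max\{0,\,w_i(u)-\sum_{v\in N^+(u)\cap I_i}w_i(v)\}$ preserves integrality. We compute $I_i$ by applying \Cref{lemma:LPapproxIS} to the subgraph $G_i$ induced by the nodes with $w_i(v)>0$; since $G_i$ is a subgraph of $G$, it inherits the $O(\Delta^2)$-coloring and has maximum degree at most $\Delta$, so \Cref{lemma:LPapproxIS} produces an independent set $I_i$ of $G_i$ (hence of $G$) with $w_i(I_i)\ge S^*(w_i)/4$ in $O(\log^2(\Delta W))$ \CONGEST rounds. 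Updating $w_i$ to $w_{i+1}$ is a purely local operation costing one round, since each node already knows the weights and $I_i$-membership of its neighbors.

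After the $T$ phases, I would form the combined independent set $I$ by the greedy reverse-order procedure from the paragraph preceding \Cref{lemma:totalindsetsize}: going through $I_T, I_{T-1},\dots,I_1$, each time adding to $I$ all nodes of the current set that have no neighbor already in $I$; this takes $O(T)=O(\log(1/\eps))$ rounds. Applying \Cref{lemma:boostingalgo} with $\rho=1/4$ and $T\ge \ln(1/\eps)/\rho$ gives $w(I)\ge(1-\eps)\cdot S^*(w)$. Finally, since in a graph of neighborhood independence $\beta$ the scaled indicator $x_v=\mathbbm{1}[v\in I^*]/\beta$ of a maximum weight independent set $I^*$ is feasible for LP~\eqref{eq:fractionISLP}, we have $\OPT=w(I^*)\le\beta\cdot S^*(w)$, so $I$ is a $(1-\eps)/\beta$-approximation. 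The total round complexity is $T\cdot O(\log^2(\Delta W)) + O(T) + O(\log^*\xi)=O(\log^2(\Delta W)\cdot\log(1/\eps)+\log^*\xi)$, as claimed.

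The routine parts are the bookkeeping above; the only point requiring a little care is verifying the hypotheses of \Cref{lemma:LPapproxIS,lemma:boostingalgo} at every phase --- namely that the weight functions $w_i$ stay non-negative, integral, and bounded by $W$, that each induced subgraph $G_i$ still carries the $O(\Delta^2)$-coloring needed to run the LP rounding within $O(\log^2(\Delta W))$ rounds, and that the constant $\rho=1/4$ coming from \Cref{lemma:LPapproxIS} is exactly what drives the geometric decrease of the potential $\Upsilon_t$ in \Cref{lemma:boostingalgo}. No genuinely new idea beyond those already developed is needed.
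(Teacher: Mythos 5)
Your proof follows the paper's approach essentially verbatim: compute an $O(\Delta^2)$-coloring via Linial's algorithm, run $O(\log(1/\eps))$ phases of the iterative process using \Cref{lemma:LPapproxIS} with $\rho=1/4$ per phase, combine via \Cref{lemma:boostingalgo}, and observe that $\OPT\le\beta\cdot S^*(w)$ via LP feasibility. The only cosmetic difference is how you phrase the $\beta$-approximation step (feasibility of $\mathbbm{1}[v\in I^*]/\beta$ versus scaling the LP's right-hand sides), which are the same argument.
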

\begin{proof}
  At the start, we compute a proper $O(\Delta^2)$-vertex coloring of $G$. This can be done in $O(\log^*\xi)$ deterministic rounds in the \CONGEST model by using a classic algorithm of \cite{linial92}. We then use the above framework, where in each step $i$, we use the algorithm of \Cref{lemma:LPapproxIS} to compute an independent set $I_i$ of weight $w_i(I_i)\geq S^*(w_i)/4$. By \Cref{lemma:LPapproxIS} and by using the initial $O(\Delta^2)$-coloring, the round complexity for computing each such independent set $I_i$ is $O\big(\log^2(\Delta W)\big)$. By \Cref{lemma:boostingalgo}, we have to do $O(\log(1/\eps))$ steps to obtain an independent set of weight $(1-\eps)S^*(w)$. To see that this is a $(1-\eps)/\beta$-approximation, note that in a graph of neighborhood independence $\beta$, clearly the neighborhood $N^+(v)$ of every node $v$ can contain at most $\beta$ nodes of any independent set. Hence, if we replace the right-hand sides of all constraints in LP \eqref{eq:fractionISLP} by $\beta$, the LP is a relaxation of the maximum weight independent set problem (any independent set satisfies the constraints). Clearly, increasing the right-hand sides of all constraints to $\beta$ increases the optimal objective value by exactly a factor $\beta$. Consequently, the weight of an optimal weighted independent set in a graph of neighborhood independence $\beta$ is at most $\beta\cdot S^*(w)$. This completes the proof.
\end{proof}

\paragraph{Proof of \Cref{thm:ISthm}, Part 2:} Result \eqref{eq:ISthm2} of \Cref{thm:ISthm} can be proven in a similar way. The following lemma proves \eqref{eq:ISthm2} by adapting the inductive argument in \Cref{lemma:boostingalgo}.

\begin{lemma}\label{lemma:ISresult2}
  Let $G=(V,E)$ be a graph with maximum degree $\Delta$ and node weight function $w:V\to \Rp$, and assume that $G$ is equipped with a proper $\xi$-vertex coloring. Then, for any $\eps>0$, there is deterministic \CONGEST algorithm that computes an independent set of $G$ of total weight at least $(1-\eps)\cdot\frac{w(V)}{\Delta+1}$ in $O(\log^2\Delta\cdot\log(1/\eps) + \log^*\xi)$ rounds.
\end{lemma}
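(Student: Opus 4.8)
The plan is to re-run the local-ratio / boosting machinery of \Cref{lemma:totalindsetsize,lemma:piecewiseindset,lemma:boostingalgo}, but with the LP optimum $S^*(\cdot)$ replaced throughout by the explicit linear functional $L(w):=w(V)/(\Delta+1)$, so that no phase needs to solve a covering/packing LP and hence the $\log W$ factor never enters.

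First I would observe that $L$ enjoys exactly the two structural properties of $S^*$ that the boosting argument relies on. (i) $L$ is additive: $L(w_a+w_b)=L(w_a)+L(w_b)$, which is even stronger than the sub-additivity $S^*(w_a)+S^*(w_b)\ge S^*(w_{ab})$ used in the proof of \Cref{lemma:boostingalgo}. (ii) The analogue of \Cref{lemma:piecewiseindset} holds for \emph{every} independent set: for any independent set $I_i$ of $G$ and any weight function $w_i$, $w_i'(I_i)=w_i(I_i)\ge L(w_i')$. Indeed, for $v\in I_i$ we have $w_{i+1}(v)=0$ and hence $w_i'(v)=w_i(v)$; and bounding the reduced weight, $w_i'(V)=\sum_{v\in I_i}w_i(v)+\sum_{v\notin I_i}\min\{w_i(v),\sum_{u\in N(v)\cap I_i}w_i(u)\}\le w_i(I_i)+\sum_{u\in I_i}\deg(u)\,w_i(u)\le(\Delta+1)\,w_i(I_i)$, so $w_i'(I_i)=w_i(I_i)\ge w_i'(V)/(\Delta+1)=L(w_i')$. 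With (i) and (ii) in hand, the proof of \Cref{lemma:totalindsetsize} (which never mentions $S^*$) applies directly, and the inductive proof of \Cref{lemma:boostingalgo} goes through verbatim after replacing $S^*$ by $L$ and setting $\Upsilon_0=L(w)$, yielding: if in the iterative process each $I_i$ satisfies $w_i(I_i)\ge\rho\cdot L(w_i)$, then after $T\ge\ln(1/\eps)/\rho$ phases the combined independent set $I$ has $w(I)\ge(1-\eps)\,L(w)=(1-\eps)\cdot\frac{w(V)}{\Delta+1}$.

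It then remains to exhibit, for a constant $\rho$, a cheap per-phase subroutine producing $I_i$ with $w_i(I_i)\ge\rho\,L(w_i)$, the crucial point being that it must be $W$-free. Here I would use the \emph{fixed} fractional solution $x_v=1/(\Delta+1)$ for all $v$ in the subgraph $G_i$ induced by the nodes with $w_i(v)>0$, and feed it to \Cref{lemma:basicISalgorithm}. For this $\vec x$ we have $\utility(\vec x)=w_i(V)/(\Delta+1)=L(w_i)$ and $\cost(\vec x)=\frac{1}{(\Delta+1)^2}\sum_{\{u,v\}\in E(G_i)}\min\{w_i(u),w_i(v)\}\le\frac{1}{(\Delta+1)^2}\cdot\frac12\sum_v\deg_{G_i}(v)\,w_i(v)\le\frac{\Delta}{2(\Delta+1)^2}\,w_i(V)<\tfrac12\utility(\vec x)$, so the hypothesis $\utility(\vec x)\ge2\cost(\vec x)$ of \Cref{lemma:basicISalgorithm} is met; invoking it with parameter $\eps=1/4$ (and the coloring described next) yields an independent set of weight at least $\tfrac14\utility(\vec x)=\tfrac14 L(w_i)$, i.e.\ $\rho=1/4$, in $O(\log^2\Delta+\log^*\zeta)$ rounds. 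At the very start I would compute a proper $O(\Delta^2)$-coloring from the given $\xi$-coloring in $O(\log^*\xi)$ rounds via \cite{linial92}; using it as $\zeta=O(\Delta^2)$ makes each phase cost $O(\log^2\Delta)$ rounds, and between phases each node recomputes $w_{i+1}$ and re-learns its neighbors' updated weights in $O(1)$ rounds. Running $T=O(\log(1/\eps))$ phases gives the claimed total of $O\big(\log^2\Delta\cdot\log(1/\eps)+\log^*\xi\big)$ rounds.

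The main obstacle — and the only place this argument genuinely differs from \Cref{lemma:ISbetaapprox} — is precisely the need to avoid the LP solver: one has to check that the trivial uniform fractional solution $x_v=1/(\Delta+1)$ already satisfies the separation $\utility\ge2\cost$ required by \Cref{lemma:basicISalgorithm}, and that the resulting per-phase guarantee is against the functional $L(w_i)$ (rather than $S^*(w_i)$) so that the boosting recursion still closes. Both are the short computations above; everything else is a mechanical re-instantiation of the existing lemmas.
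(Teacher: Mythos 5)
Your proof is correct and follows essentially the same route as the paper's: both run the local-ratio boosting with the trivial uniform fractional solution $x_v=1/(\Delta+1)$ fed into the basic rounding lemma, and both close the recursion via the inequality $w_i'(V)\le(\Delta+1)\,w_i(I_i)$, with the paper inlining the resulting induction (its statements (i) and (ii) are exactly your $L$-versions of \Cref{lemma:boostingalgo}'s invariants) while you phrase it as ``replace $S^*$ by the explicit linear functional $L$.'' The only cosmetic difference is that the paper routes the per-phase bound through \Cref{lemma:LPapproxIS} by observing that $x_v=1/(\Delta+1)$ is a feasible LP solution, whereas you verify $\utility(\vec x)\ge 2\cost(\vec x)$ directly from \Cref{lemma:basicISalgorithm}; these are the same computation.
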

\begin{proof}
  As a initial step, we compute a proper $O(\Delta^2)$-vertex coloring of $G$ in time $O(\log^*\xi)$ by using an algorithm of \cite{linial92}. Now, we again use the method of iteratively computing independent sets $I_1,\dots,I_T$ as above. First note that for a given weight function $w:V\to\N$, we can efficiently compute an independent set of weight $w(V)/(4(\Delta+1))$ in $O(\log^2\Delta +\log^*(\Delta^2))=O(\log^2\Delta)$ rounds by using \Cref{lemma:LPapproxIS}. To see this, observe that setting the fractional value of all nodes $v$ to $x_v=1/(\Delta+1)$ gives a feasible solution to LP \eqref{eq:fractionISLP} with objective value $w(V)/(\Delta+1)$. We therefore in particular have $S^*(w)\geq w(V)/(\Delta+1)$ and for $W\leq\poly(\Delta)$, the claim of the lemma directly follows from \Cref{lemma:boostingalgo}. In the following, we show that by using a similar argument as in the proof of \Cref{lemma:boostingalgo}, we can also get a round complexity of $O(\log^2\Delta + \log^* \xi)$.
  
  As discussed, we use the iterative framework from above. In each iteration $t$, we choose the independent set $I_t$ such that $w_t(I_t)\geq w_t(V)/(4(\Delta+1))$. As discussed above, we can find such an independent set $I_t$ in $O(\log^2\Delta)$ rounds. We again define a potential $\Upsilon_t$ that measures how close we are to achieving our goal:
  \[
  \forall t\geq 0\,:\, \Upsilon_t := \frac{w(V)}{\Delta+1} - \sum_{i=1}^t w_i(I_i).
  \]
  For all $t\geq 0$, we inductively prove the following two properties:
  \[
  \text{(i)}\,:\, \Upsilon_t \leq \frac{w_{t+1}(V)}{\Delta+1}\quad\text{and}\quad
  \text{(ii)}\,:\, \Upsilon_t \leq \left(1-\frac{1}{4}\right)^{t}\cdot \Upsilon_0.
  \]
  Both statements are clearly true for $t=0$. Let us therefore consider the induction step from $t-1$ to $t$. We have
  \[
  \Upsilon_t = \Upsilon_{t-1} - w_t(I_t) \leq
  \Upsilon_{t-1} - \frac{w_t(V)}{4(\Delta+1)} \leq
  \Upsilon_{t-1}\cdot\left(1-\frac{1}{4}\right).
  \]
  The first inequality follows from $w_t(I_t)\geq w_t(V)/(4(\Delta+1))$ and the second inequality follows from the induction hypothesis for $t-1$ (part (i)). Together with the induction hypothesis (part (ii)), the above inequality directly implies statement (ii) for $t$. To also prove statement (i) for $t$, recall that $w_t'(V) = w_t(V) - w_{t+1}(V)$. Also note that we have $w_t'(V) \leq (\Delta+1)\cdot w_t(I_t)$ because for every node $v\in I_t$, the weight $w_t(v)$ is deducted (and thus added to $w_t'$) for at most $\Delta+1$ nodes. We therefore have $w_t(I_t) \geq \frac{w_t(V)}{\Delta+1}-\frac{w_{t+1}(V)}{\Delta+1}$. We thus obtain
  \[
  \Upsilon_t = \Upsilon_{t-1} - w_t(I_t) \leq
  \Upsilon_{t-1} + \left(\frac{w_{t+1}(V)}{\Delta+1}-\frac{w_{t}(V)}{\Delta+1}\right) \leq
  \frac{w_{t+1}(V)}{\Delta+1}.
  \]
  The last step follows from the induction hypothesis (part (i)). For $T=\log_{4/3}(1/\eps)$, statement (ii) implies $\Upsilon_T\leq \eps \Upsilon_0 =\eps\cdot\frac{w(V)}{\Delta+1}$ and \Cref{lemma:totalindsetsize} then gives $w(I)\geq (1-\eps)\cdot \frac{w(V)}{\Delta+1}$ as required. The overall time complexity is $O(\log^*\xi)$ for the initial $O(\Delta^2)$-coloring and $O(\log^2\Delta\cdot\log(1/\eps))$ for computing independent sets $I_1,\dots,I_T$ and then independent set $I$.
\end{proof}

We conclude \Cref{sec:boundedindep} by showing that our method yields an alternative to Fischer's $O(\log^2\Delta\cdot\log n)$-round deterministic \CONGEST algorithm~\cite{fischer2020improved} for computing a maximal matching of a graph. This show that all the four classic symmetry breaking problems (MIS, maximal matching, $(\Delta+1)$-vertex coloring, and $(2\Delta-1)$-edge coloring) can be solved by the same general method and in all cases, the method yields the best current distributed algorithms.

\begin{theorem}\label{thm:maximalmatching}
  A maximal matching in an $n$-node graph $G=(V,E)$ of maximum degree $\Delta$ can be computed deterministically in $O(\log^2\Delta\cdot\log n)$ rounds in the \CONGEST model.
\end{theorem}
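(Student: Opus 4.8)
The plan is to follow the same template as the MIS algorithm of Section~\ref{sec:MIS} but applied to the line graph, exploiting the fact that the line graph of $G$ has neighborhood independence $\beta=2$. Recall from Theorem~\ref{thm:ISthm} (result~\eqref{eq:ISthm1}) and \Cref{lemma:ISbetaapprox} that in a graph of neighborhood independence $\beta$ we can compute a $(1-\eps)/\beta$-approximation to maximum weight independent set in $O(\log^2(\Delta' W)\cdot\log(1/\eps)+\log^*\xi)$ rounds, where $\Delta'$ is the maximum degree of that graph. For the line graph $L(G)$, the nodes are the edges of $G$, an independent set in $L(G)$ is exactly a matching in $G$, and the maximum degree of $L(G)$ is at most $2\Delta-2$. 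A constant-factor (e.g.\ $1/4$-) approximation of maximum cardinality matching in $G$ is therefore computable in $O(\log^2\Delta)$ rounds, after an initial $O(\log^*n)$-round step to properly color $L(G)$ with $\poly(\Delta)$ colors (which follows from a $\poly(\Delta)$-coloring / distance-$2$ coloring of $G$).

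First I would note that we do not even need the weighted machinery: taking the unweighted maximum cardinality matching LP relaxation (put $x_e = 1/(2\Delta)$ for every edge, which is a feasible fractional independent set of $L(G)$ with $\utility \geq 2\cost$ by the same computation as in \Cref{lemma:LPapproxIS}), \Cref{lemma:basicISalgorithm} applied to $L(G)$ yields a matching of size $\Omega(|E|/\Delta_{\mathrm{LP}})$ — more precisely a constant fraction of the maximum matching of the current graph — in $O(\log^2\Delta)$ rounds. The one subtlety is that \Cref{lemma:basicISalgorithm} is stated for a communication graph, whereas here the relevant graph is $L(G)$; but $L(G)$ is trivially simulated on $G$ with $O(1)$ overhead (each node of $L(G)$ is an edge $\{u,v\}$, whose neighbors in $L(G)$ are the other edges incident to $u$ or to $v$, all reachable in one round of $G$), and since $|\Labels|=2$ the messages are $O(\log n)$ bits, so \Cref{lemma:distributedonG} applies directly on the simulated $L(G)$ within the claimed round complexity. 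Then I would invoke the standard reduction from maximal matching to repeated constant-approximate matching: add the computed matching to the output, delete its edges and all edges incident to matched vertices, and recurse on the remaining graph. Since each iteration removes a constant fraction of the remaining edges, after $O(\log n)$ iterations the graph is empty and the union of the computed matchings is a maximal matching, giving the overall $O(\log^2\Delta\cdot\log n)$ bound. The $\log^*n$ term from the initial coloring is a one-time cost subsumed by the $O(\log n)$ iterations (the coloring can be recomputed/maintained cheaply, or kept fixed on the original vertex set).

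The main obstacle — and essentially the only thing that needs care beyond citing the earlier lemmas — is the \CONGEST implementation of the rounding on $L(G)$. Two adjacent nodes of $L(G)$ are two edges of $G$ sharing an endpoint $w$; they need not be adjacent in $G$, so $L(G)$ is naturally a d2-multigraph over $G$ in the sense of \Cref{def:d2multigraph}, with the shared endpoint $w$ serving as the manager of the virtual edge. Thus I would appeal to \Cref{lemma:d2rounding} rather than \Cref{lemma:distributedonG}: the rounding instance is polynomially bounded in $\Delta$ (all fractional values and all utility/cost contributions are between $1/\poly(\Delta)$ and $\poly(\Delta)$), $|\Labels|=2$, and $\eps,\mu$ are constants, so \Cref{lemma:d2rounding} gives round complexity $O(\log^2\Delta\cdot\log\log\Delta + \log\Delta\cdot\log^*n)$ per iteration, i.e.\ $\tilde O(\log^2\Delta)$. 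However, the theorem statement claims the cleaner $O(\log^2\Delta\cdot\log n)$ bound in \CONGEST without the $\log\log\Delta$ factor, so the real point is that for the matching problem the virtual-edge structure is simple enough — each manager $w$ has exactly $\deg_G(w)$ incident edges of $G$, the degree of $L(G)$ is $\le 2\Delta$, and the defective-coloring step can be run with exact (not approximate) aggregation because $w$ can afford $O(\log\Delta)$-bit messages to its $\Delta$ neighbors — that the $\log\log\Delta$ loss of the generic \Cref{lemma:d2defective}/\Cref{lemma:d2rounding} can be avoided here, recovering the $O(\log^2\Delta)$ per-iteration bound and hence $O(\log^2\Delta\cdot\log n)$ overall. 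I would therefore either (a) re-derive the defective-coloring and rounding communication for this specific d2-multigraph by hand, observing that exact aggregation of the $O(\log\Delta)$-bit contributions over the $O(\Delta)$ edges a manager handles fits in $O(1)$ \CONGEST rounds on $G$ (since each endpoint $v$ of a virtual edge is a neighbor of the manager $w$ in $G$), or (b) simply cite Fischer's bound and present this as an alternative route; the former is the honest approach and is where the bulk of the (routine but nontrivial) work lies.
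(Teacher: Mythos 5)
Your high-level template matches the paper's: reduce maximal matching to $O(\log n)$ rounds of constant-factor matching approximation, obtain each approximation by rounding a fractional solution on the line graph, and treat the line graph as a d2-multigraph for \CONGEST. Two steps of the argument, however, have genuine gaps.

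First, the claim that the uniform fractional solution $x_e = 1/(2\Delta)$ already suffices is false. That assignment has total value $|E|/(2\Delta)$, which need not be within a constant factor of the maximum matching of the current graph: take a perfect matching on $v_1,\dots,v_n$ together with one extra vertex $v_0$ adjacent to all of them; then $\Delta\approx n$, $|E|\approx 3n/2$, so $|E|/(2\Delta)=O(1)$ while $\OPT=n/2$. Consequently, a rounded matching of size $\Omega(|E|/\Delta)$ neither removes a constant fraction of the remaining edges nor reduces $\OPT$ geometrically, and the ``$O(\log n)$ iterations'' step breaks down. The paper avoids this by first running the standard deterministic fractional matching routine (start at $1/(2\Delta)$, freeze edges at saturated vertices, double the rest, repeat $O(\log\Delta)$ times) to get a genuine constant-factor approximation of LP~\eqref{eq:fractionISLP} on $L(G)$, and only then applies the rounding.

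Second, the proposed fix for the \CONGEST $\log\log\Delta$ overhead is backwards. In \Cref{lemma:d2defective} the $O(\log\log q)$ term is not a penalty for being \emph{inexact}; it is precisely the price of sending a factor-$2$ \emph{approximate} aggregate over each edge for each of the $p=\Theta(\log n/\delta)$ candidate colors in the Linial-style color reduction, packed into $O(\log n)$-bit messages. Exact aggregation would need $\Theta(\log(q\Delta))$ bits per candidate color, raising the defective-coloring cost to $\Theta(\log\Delta/\delta)$ per call instead of $\Theta(\log\log\Delta/\delta)$---strictly worse. The fact that a manager ``can afford one $O(\log\Delta)$-bit message per neighbor'' addresses the rounding step of \Cref{alg:basicroundingstep} (where, with $|\Labels|=2$, there is indeed no $\log\log$ factor), but not the defective-coloring subroutine where the factor actually originates. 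What the paper does instead is \emph{replace} the generic weighted defective-coloring algorithm on $L(G)$ by a combinatorial construction that exploits the edge-coloring structure: each node $v$ of $G$ first marks its at most $4/\delta$ edges carrying more than a $\delta/4$-fraction of $v$'s total incident weight and properly colors those via the $O(\Delta^2)$-edge-coloring trick, then greedily bins the remaining light edges into $O(1/\delta)$ labels so that every label class carries at most a $\delta/2$-fraction of $v$'s weight; combining the two endpoints' labels yields a weighted $\delta$-relative defective coloring in $O(\log^*\Delta)$ rounds, with no $\log\log\Delta$ overhead. That specialized coloring, not exact aggregation, is what recovers the clean $O(\log^2\Delta)$ per-iteration bound.
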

\begin{proof}[Proof Sketch]
  We start by computing an $O(\Delta^2)$-edge coloring. A simple way to do this in $O(\log^* n)$ rounds in the \CONGEST model is the following (also see \cite{Kuhn2009WeakColoring}). Each node $v$ first uniquely labels its $\deg(v)$ edges with the number $1,\dots,\deg(v)$ in an arbitrary way. In this way, each edge obtains two numbers in $[\Delta]$, and we thus have a coloring of the edges with $O(\Delta^2)$ colors. The coloring is however not proper. For each of the colors, each node can however have at most $2$ incident edges. Therefore, the color classes induce paths and cycles. Those paths and cycles can be properly edge-colored with $3$ colors in $O(\log^* n)$ rounds by using a standard algorithm~\cite{cole86,linial92}.
  
  Given an $O(\Delta^2)$-edge coloring, we next show that a constant factor approximation for the maximum cardinality matching problem can be computed in $O(\log^2\Delta)$ rounds. One can then obtain a maximal matching by repeating $O(\log n)$ times. The matching approximation is computed by using the rounding approach for graphs with bounded neighborhood independence.
  
  As a first step, we need to be able to compute a constant factor approximate feasible solution for LP \eqref{eq:fractionISLP} on the line graph of $G$ (and with all weights being equal to $1$). This can be done by slightly adapting a standard fractional matching approximation algorithm (see, e.g., \cite{fischer2020improved}). We first assign a fractional value $y_e=1/(2\Delta)$ to each edge. We call a node saturated if the sum of the fractional values of its edges is at least $1/4$ and we call an edge frozen if at least one of its nodes is saturated. We now proceed in $O(\log \Delta)$ phases, where in each phase, we double the fractional value of all non-frozen edges. At the end all edges are frozen and it is not hard to see that this gives a $1/4$ approximation for LP \eqref{eq:fractionISLP} (multiplying each fractional edge value with $4$ gives a feasible solution of the dual LP and in the unweighted case, LP \eqref{eq:fractionISLP} and its dual LP both have the same objective function).
  
  We now just have to show that the rounding of this fractional solution to an integral solution can be done in $O(\log^2\Delta)$ rounds. For this, observe that we can use the algorithm for rounding in d2-multigraphs (\Cref{def:d2multigraph}): An edge of the line graph of $G$ is defined by a pair of edges that share a common node and we can thus use this node as the managing node of the edge. We therefore obtain at least the same round complexity as for d2-multigraphs. For the case of rounding independent sets, we have $|\Labels|=2$ and as a consequence, the only expensive step is the computation of the weighted relative average defective coloring. Once the defective coloring is given, doing one rounding step for the edges of a single color can be done in $O(1)$ \CONGEST rounds (cf.\ the proof of \Cref{lemma:d2rounding}). The computation of the defective coloring in \Cref{lemma:d2defective} consists of two parts. First, one computes a weighted $\delta$-relative defective coloring with $O(1/\delta^2)$ colors and one then reduces the number of colors to $O(1/\delta)$ by adapting the algorithm of \cite{BEG18}. The second step requires only $O(1/\delta)$ rounds in d2-multigraphs (cf.\ the proof of \Cref{lemma:d2defective}) and only the first step is expensive and requires the additional $\log\log \Delta$ factor (note that in our case $q=O(\Delta)$). However, on line graphs, a weighted $\delta$-relative defective coloring can be computed fast by using a similar approach as for computing a proper $O(\Delta^2)$-edge coloring at the beginning of the algorithm.
  
  We proceed as follows. Each node $v$ first marks all its edges of weight more than a $\delta/4$-fraction of the total weight of all of $v$'edges. As every node has at most $4/\delta$ such edges, those edges can be properly colored with $O(1/\delta^2)$ colors by using the same algorithm as above. An edge might only be marked from one of its nodes, however, the algorithm also works if an edge only gets a number in $[\Delta]$ from one side. We now proceed with all the edges that are not yet colored and we use a new set of $O(1/\delta^2)$ colors for those. Each node $v$ now labels its edges with numbers between $1$ and $4/\delta$ in such a way that each label is given to edges of total weight at most a $\delta/2$-fraction of the initial total weight of all of $v$'s edges. This is possible by greedily assigning the labels. Now for each combination of two labels, at most at $\delta$-fraction of the total weight of $v$'s edges have this combination of labels (because $v$ must assign one of the two labels). The algorithm can clearly be implemented in $O(\log^*\Delta)$ rounds in the \CONGEST model (with a given initial proper $O(\Delta^2)$-edge coloring.
\end{proof}

\subsection{Approximating a Generalized Caro-Wei-Tur\'{a}n Bound}
\label{sec:turan}

It is well-known that there exist independent sets of size at least $\sum_{v\in V}\frac{1}{\deg(v)+1}$ (sometimes known as the Caro-Wei bound~\cite{Wei1981generalizedturan,griggs1983lower}). This can be lower bounded by $n/\deg_{\mathrm{avg}}$, which is known as the Tur\'{a}n bound~\cite{turan1941external}. 
Moreover, the Caro-Wei independent set  can be computed sequentially in polynomial time by a simple greedy algorithm (see, e.g., \cite{halldorsson1997greed}) and a simple randomized one-round algorithm (one round of the classic Luby algorithm~\cite{luby86}) achieves an independent set of this size in expectation.
For node-weighted graphs, a simple generalization of this greedy algorithm~\cite{kako2005approximation} gives an independent set of weight $\sum_{v\in V}\frac{(w(v))^2}{w(N^+(v))} \geq  \frac{w(V)}{\mathrm{wdeg}_{\mathrm{avg}}+1}$, where $\mathrm{wdeg}_{\mathrm{avg}}= \frac{ \sum_{v\in V} w(N(v))}{w(V)}=\frac{\sum_{v \in V} w(v)\deg(v)}{w(V)}$.\footnote{The inequality $\sum_{v\in V}\frac{(w(v))^2}{w(N^+(v))} \geq  \frac{w(V)}{\mathrm{wdeg}_{\mathrm{avg}}+1}$ can be obtained by a simple application of the Cauchy-Schwarz inequality  $ (\sum_i x^2_i)( \sum_i y^2_i) \geq  (\sum_i x_i y_i )^2$ by assigning $x_v=\sqrt{w(N^+(v))}$ and $y_v=  \frac{w(v)} { \sqrt{w(N^+(v))} }$.} We next show that there is an efficient deterministic \CONGEST algorithm to get within an $\big(\frac{1}{2}-\eps\big)$-factor of those bounds.

\paragraph{Proof of \Cref{thm:ISthm}, Part 3:} The following lemma proves \eqref{eq:ISthm3}, the third claim of \Cref{thm:ISthm}.
	
\begin{lemma}
Let $G=(V,E)$ be a graph with node weights $w:V\to \N$ and let $\vec{x}$ be a vector of fractional node values $x_v= \frac{w(v)}{W_v}$, where $W_v:=w(N^+(v))$. Then, for any $\eps>0$, we can deterministically compute an independent set of total weight $\big(\frac{1}{2}- \eps\big)\cdot \sum_{v\in V}\frac{(w(v))^2}{W_v}$ in time $O\big(\frac{\log^2(\Delta/\eps)}{\eps} + \log^* n\big)$ in the \CONGEST model.
\end{lemma}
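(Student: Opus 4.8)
The plan is to instantiate the basic independent-set rounding of this section with the fractional solution $\vec x$ from the statement, $x_v=w(v)/W_v$ with $W_v:=w(N^+(v))$, and to invoke \Cref{lemma:basicISalgorithm}. The first observation is that for the utility and cost functions of \eqref{eq:ISutilityandcost} this $\vec x$ has
\[
\utility(\vec x)=\sum_{v\in V}w(v)\cdot\frac{w(v)}{W_v}=\sum_{v\in V}\frac{(w(v))^2}{W_v},
\]
which is exactly the target quantity. Since $|\Labels|=2$ and every node knows its neighbours' weights, setting up $\vec x$ needs no communication. Hence the whole argument reduces to (i) verifying the hypothesis $\utility(\vec x)\ge 2\cost(\vec x)$ of \Cref{lemma:basicISalgorithm}, (ii) precomputing a proper $O(\Delta^2)$-colouring so that the lemma's colouring assumption holds, and (iii) reading off the resulting independent set via \Cref{lemma:computingIS}.

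The heart of the proof is step (i), i.e.\ the inequality
\[
\cost(\vec x)=\sum_{\set{u,v}\in E}\min\set{w(u),w(v)}\cdot\frac{w(u)w(v)}{W_uW_v}\ \le\ \frac12\sum_{v\in V}\frac{(w(v))^2}{W_v}=\frac12\,\utility(\vec x).
\]
I would prove this by a charging argument: write $2\cost(\vec x)=\sum_{(u,v):\,u\sim v}\min\set{w(u),w(v)}\,x_ux_v$ over ordered pairs, orient each edge towards the endpoint of larger $W$-value (ties broken by ID), and charge every edge to its smaller-$W$ endpoint $u$; using $W_v\ge W_u$ and $\min\set{w(u),w(v)}\le w(u)$, the charge accumulated at $u$ is at most $\frac{(w(u))^2}{W_u^2}\sum_{v\in N(u)}w(v)=\frac{(w(u))^2}{W_u}\bigl(1-\frac{w(u)}{W_u}\bigr)\le\frac{(w(u))^2}{W_u}$. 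This already yields $\cost(\vec x)\le\utility(\vec x)$, and the remaining factor of $2$ is the main obstacle: the per-node estimate is essentially tight for the lowest-priority vertex of a clique, but each edge is charged on only one of its two endpoints, so globally there is a factor-$2$ slack that a purely per-node bound cannot see. I expect to recover it by arguing globally from the symmetrized estimate
\[
2\cost(\vec x)\le\sum_{(u,v):\,u\sim v}w(u)\,x_ux_v=\sum_{\set{u,v}\in E}x_ux_v\bigl(w(u)+w(v)\bigr),
\]
and then exploiting $w(u)+w(v)\le\min\set{W_u,W_v}$ together with $\sum_{u\in N(v)}w(u)=W_v-w(v)$, which accounts for both orientations of each edge at once. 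Should a clean factor-$2$ bound prove awkward, it suffices to establish $\cost(\vec x)\le\bigl(\frac12+O(\eps)\bigr)\utility(\vec x)$ and then scale $\vec x$ down by a factor $1-O(\eps)$ before rounding: this changes $\utility(\vec x)$ only by a $1-O(\eps)$ factor and restores $\utility\ge 2\cost$ exactly.

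Given the inequality, the rest is routine. Compute a proper $O(\Delta^2)$-colouring in $O(\log^* n)$ rounds by Linial's algorithm; apply \Cref{lemma:basicISalgorithm} to $\vec x$ with a suitable parameter $\Theta(\eps)$ to obtain, in $O\bigl(\frac{\log^2(\Delta/\eps)}{\eps}+\log^* n\bigr)$ rounds of \CONGEST (the $1/\eps$ factor stemming from the rounding precision in \Cref{lemma:distributedonG}, and all messages being $O(\log n)$ bits since $|\Labels|=2$), an integral assignment $\vec y$ with $\utility(\vec y)-\cost(\vec y)\ge\bigl(\frac12-\eps\bigr)\utility(\vec x)$; finally convert $\vec y$ into an independent set of weight at least $\utility(\vec y)-\cost(\vec y)\ge\bigl(\frac12-\eps\bigr)\sum_{v\in V}\frac{(w(v))^2}{W_v}$ in one round via \Cref{lemma:computingIS}. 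The only genuinely new ingredient is $\cost(\vec x)\le\frac12\utility(\vec x)$, which is where I expect the real work to lie.
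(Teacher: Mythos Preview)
Your overall plan is right and matches the paper: compute $\utility(\vec x)=\sum_v (w(v))^2/W_v$, prove $\utility(\vec x)\ge 2\cost(\vec x)$, and then invoke \Cref{lemma:basicISalgorithm} together with an initial $O(\Delta^2)$-colouring. The only substantive step is the inequality, and that is exactly where your sketch has a gap.

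Your charging argument (orient towards the larger-$W$ endpoint and bound the charge at $u$ by $(w(u))^2/W_u$) indeed gives only $\cost(\vec x)\le\utility(\vec x)$, as you say. But your proposed ``global'' rescue does not close the factor~$2$. From $2\cost(\vec x)\le\sum_u w(u)x_u\sum_{v\in N(u)}x_v$, you would need $\sum_{v\in N(u)}x_v\le 1$, which is false in general (a star with unit weights already gives $\sum_{v\in N(c)}x_v=k/2$ at the centre). The hint $w(u)+w(v)\le\min\{W_u,W_v\}$ does not help either: applying it yields per-edge terms like $w(u)w(v)/W_v$ that sum to $\sum_v w(v)$ rather than to $\utility(\vec x)$. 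And your fallback of scaling $\vec x$ by $1-O(\eps)$ only works once you already have $\cost\le(\tfrac12+O(\eps))\utility$; with merely $\cost\le\utility$ you would need to scale by $\tfrac12$, losing a factor~$2$ in the final weight and getting only $(\tfrac14-\eps)\utility(\vec x)$.

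The paper obtains the factor~$2$ with a one-line symmetrisation that your per-edge bound is missing: use $\frac{1}{W_uW_v}\le\frac12\bigl(\frac{1}{W_u^2}+\frac{1}{W_v^2}\bigr)$ (equivalently $(1/W_u-1/W_v)^2\ge0$) and then apply $\min\{w(u),w(v)\}\le w(u)$ to the $1/W_u^2$-term and $\min\{w(u),w(v)\}\le w(v)$ to the $1/W_v^2$-term. This gives
\[
\cost(e,\vec x)\ \le\ \tfrac12\Bigl(\frac{(w(u))^2\,w(v)}{W_u^2}+\frac{(w(v))^2\,w(u)}{W_v^2}\Bigr),
\]
and now summing over edges yields $\cost(\vec x)\le\tfrac12\sum_u\frac{(w(u))^2}{W_u^2}\sum_{v\in N(u)}w(v)\le\tfrac12\sum_u\frac{(w(u))^2}{W_u}=\tfrac12\utility(\vec x)$, using $\sum_{v\in N(u)}w(v)=W_u-w(u)\le W_u$. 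Once you have this, the rest of your write-up (Linial colouring, \Cref{lemma:basicISalgorithm}, \Cref{lemma:computingIS}, and your remark about the $1/\eps$ factor coming from the rounding precision) is correct and essentially identical to the paper.
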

	\begin{proof}
		Recall that $\utility(\vec{x}) = \sum_{v\in V} w(v)\cdot x_v$ and $\cost(\vec{x})=\sum_{\set{u,v}\in E} \min\set{w(u), w(v)}\cdot x_u\cdot x_v$. We first show that for the fractional values $x_v$ as given, we have $\utility(\vec{x})\geq 2\cost(\vec{x})$. For this, we first give an upper bound on the cost $\cost(e,\vec{x})=\min\set{w(u), w(v)}\cdot x_u\cdot x_v$ of a single edges $e=\set{u,v}$. We have
		\begin{eqnarray}
			\cost(e,\vec{x}) & = & w(u) \cdot w(v)\cdot\min\set{w(u),w(v)}\cdot
			\frac{1}{W_u\cdot W_v}\nonumber\\
			& \leq &
			w(u)\cdot w(v)\cdot \min\set{w(u),w(v)}\cdot \frac{1}{2}\cdot \left[
			\frac{1}{W_u^2} + \frac{1}{W_v^2}\right]\label{eq:123}\\
			& \leq & 
			\frac{1}{2}\cdot\left[
			\frac{(w(u))^2\cdot w(v)}{W_u^2} + \frac{(w(v))^2\cdot w(u)}{W_v^2}
			\right].\label{eq:edgecostupper}
		\end{eqnarray}
    Inequality \eqref{eq:123} follows because $\big(\frac{1}{W_u}-\frac{1}{W_v}\big)^2=\frac{1}{W_u^2} + \frac{1}{W_v^2} - 
  \frac{2}{W_u\cdot W_v}\geq 0$. We can now use this to obtain an upper bound on the total cost $\cost(\vec{x})$:
  \begin{eqnarray*}
    \cost(\vec{x}) & = & \sum_{e=\set{u,v}\in E} \cost(e,\vec{x})\\
    & \stackrel{\eqref{eq:edgecostupper}}{\leq} &
    \sum_{\set{u,v}\in E} \frac{1}{2}\cdot\left[
     \frac{(w(u))^2\cdot w(v)}{W_u^2} + \frac{(w(v))^2\cdot w(u)}{W_v^2}
    \right]\\
    & = &
    \frac{1}{2}\cdot\sum_{u\in V}\sum_{v\in N(u)} \frac{(w(u))^2\cdot w(v)}{W_u^2}\\
    & = & \frac{1}{2}\cdot\underbrace{\sum_{u\in V}\frac{(w(u))^2}{W_u}}_{=\utility(\vec{x})}\cdot
    \underbrace{\sum_{v\in N(u)}\frac{w(v)}{W_u}}_{\leq 1}\ \leq\ \frac{1}{2}\cdot\utility(\vec{x}).
  \end{eqnarray*}
  The claim of the lemma now follows directly from \Cref{lemma:basicISalgorithm}.
\end{proof}


\section{Minimum Set Cover}
\label{sec:setcover}

In this section we use our framework to give a fast deterministic algorithm for the set cover problem. 
In this problem, the input consists of a universe of elements and a family of their subsets. We want to find the smallest subfamily covering all elements. We will use the following bipartite-graph formulation of the problem. 

\begin{definition}
\label{def:set_cover}
In the \emph{set cover} problem, the input consists of a bipartite graph $G$ with $V(G) = U \sqcup V$. 
The goal is to find the smallest possible subfamily $V_{out} \subseteq V$ such that $N(V_{out}) = U$.
\end{definition}

The section is dedicated to the proof of the following theorem. 

\begin{theorem}
\label{thm:set_cover}
Consider an instance of a set cover problem where $\max_{u \in U} \deg(u) \le t$ (i.e., each element $u \in U$ is present in at most $t$ sets) and $\max_{v \in V} \deg(v) \le s$ (i.e., each set $v \in V$ contains at most $s$ elements). There is a distributed $O(\log s)$-approximation algorithm for the problem with round complexity of
\begin{enumerate}
\item $O(\log s \cdot \log ^2 t + \log^* n)$ in the \LOCAL model,
\item $O\left(\log s \ \cdot \left( \log^2 t\ \cdot  \log\log t + \log t \cdot \log^* n\right)\right)$ in the \CONGEST model.
\end{enumerate}
\end{theorem}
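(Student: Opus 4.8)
The plan is to follow the same template as the MIS algorithm of Section~\ref{sec:MIS}: fix a natural randomized set-cover algorithm, express one "round" of its analysis as a utility-minus-cost potential over a d2-multigraph, derandomize via Lemmas~\ref{lemma:d2rounding}/\ref{lemma:distributedonG}, and iterate. The randomized algorithm I would use is the standard parallel greedy: maintain a fractional-like marking where each still-needed set $v$ is picked with probability $\Theta(1/t)$ (so that each still-uncovered element $u$, which lies in at most $t$ sets but--by a standard "good element" argument--in at least $\Omega(1)$ expected picked sets over its incident sets, gets covered with constant probability), add the picked sets to $V_{out}$, delete covered elements, and recurse on the residual instance. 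Since the maximum set size is $s$, after $O(\log s)$ such phases every element is covered. The cost incurred is $O(\log s)$ times the optimum per the usual dual-fitting/LP argument, since each phase pays at most $O(1)$ times the LP value of the residual instance.

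The key technical point, emphasized in Section~\ref{sec:setcoversummary} of the overview, is that one cannot afford a \emph{fresh} potential function per phase, because recomputing the parameters of the residual instance (which elements remain, residual degrees) after each phase would require global communication, costing an extra $\poly\log n$ factor rather than the desired $\log^* n$. Instead I would define a \emph{single} utility and cost function spanning all $O(\log s)$ phases: the label of each set-node $v$ is the (discretized) phase in $\{1,\dots,O(\log s)\}$ in which $v$ is taken (or $\infty$), the utility counts, summed over elements $u$ and phases $i$, the indicator-weighted contribution of "$u$ is first covered in phase $i$", and the cost subtracts, via pairwise terms on the d2-multigraph $H$, (a) overcounting when two sets incident to $u$ are taken in the same phase and (b) the charge for sets taken in a phase where their covering contribution is already used up. As in the MIS case, $H$ has a virtual edge between two set-nodes $v,v'$ for each common element $u$, managed by $u$; so $H$ is a d2-multigraph (or long-range d2-multigraph if the underlying communication graph is $G$ itself and one wants the bipartite incidences as edges). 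The initial fractional assignment is the uniform "take $v$ in phase $i$ with probability $\approx 2^{-i}$"-type distribution scaled by $1/t$; one checks, exactly as in Claim~\ref{clm:MIS1}, that $\utility(\lambda)-\cost(\lambda)\ge \mu\,\utility(\lambda)$ for a constant $\mu$, and that the instance is polynomially bounded in $q=\Theta(st)=\poly(n)$ with label set $|\Labels|=O(\log s)$.

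With that setup, the algorithm is a single invocation of the rounding procedure. In the \LOCAL model I would invoke Lemma~\ref{lemma:distributedonG} with $\eps$ a constant, $\mu=\Theta(1)$, $\lambda_{\min}=1/\poly(st)$, and $\zeta=\poly(n)$ after an $O(\log^* n)$ precoloring, giving $O(\log^2(st)+\log(st)\log^*\zeta)$ rounds; since $\log^*\zeta$ can be reduced to $O(\log^*(st))$ by a one-time $\poly(st)$-coloring of the relevant power graph, and $\log(st)=O(\log t)$ up to the $\log s$ factor already present, this matches $O(\log s\cdot\log^2 t+\log^* n)$. In the \CONGEST model I would instead invoke Lemma~\ref{lemma:d2rounding} (or Lemma~\ref{lemma:long-range-d2rounding}) with $q=\Theta(st)$, $|\Labels|=O(\log s)$, which contributes the extra $\log\log q=\log\log t$ factor in the defective-coloring step and yields $O(\log s\cdot(\log^2 t\cdot\log\log t+\log t\cdot\log^* n))$. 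The rounded integral labeling gives, by the guarantee $\utility(\ell)-\cost(\ell)\ge(1-\eps)(\utility(\lambda)-\cost(\lambda))\ge(1-\eps)\mu\,\utility(\lambda)$, a choice of sets that covers all but a $1/\poly(s)$ fraction of elements at cost $O(\log s)\cdot\OPT$; a final clean-up phase that assigns one arbitrary covering set to each still-uncovered element costs at most $\OPT$ additional sets (each such set is "blamed" on a distinct element and charged against the LP), is implementable in $O(1)$ rounds, and completes a valid cover.

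The main obstacle I expect is \emph{getting the single cross-phase potential right}: one must define $\utility$ and $\cost$ as genuine sums of per-element and pairwise (d2) terms so that (i) $\utility(\ell)-\cost(\ell)$ is a true lower bound on the number of covered elements minus an upper bound on the number of sets bought, simultaneously valid for \emph{any} integral labeling and not just the random one, and (ii) the gap condition $\utility(\lambda)-\cost(\lambda)\ge\mu\,\utility(\lambda)$ holds for the initial fractional assignment with a constant $\mu$ independent of the number of phases. This is exactly the place where the overcounting bookkeeping (an element covered by several sets in the same phase, a set bought but "wasted") has to be absorbed into pairwise cost terms, mirroring--but more delicately than--the $\binom{N(u)}{2}$ term in the MIS analysis, precisely because here the phase index is a genuine multi-valued label rather than a bit. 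Everything else (the defective-coloring costs, the d2-simulation on $G$, the $\log^*$ handling) is already packaged by the lemmas of Section~\ref{sec:genericrounding}.
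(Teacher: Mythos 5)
Your plan is in the right spirit and correctly identifies most of the moving parts: start from the Kuhn--Moscibroda--Wattenhofer fractional LP solution, run $O(\log s)$ coverage iterations, use a d2-multigraph with a virtual edge between two sets $v,v'$ for each common element $u$ (managed by $u$), and finish with a cheap cleanup that buys one arbitrary set per still-uncovered element, charged against $\OPT\ge|U|/s$. But the central design choice you make --- a \emph{single} rounding invocation over labels $\ell(v)\in\{1,\dots,O(\log s),\infty\}$ recording the phase in which $v$ is taken --- is not what the paper does, and the obstacle you flag at the end is a genuine gap that your proposal leaves open.

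The paper's Algorithm~4 performs $O(\log s)$ \emph{separate} invocations of the per-phase rounding primitive (\Cref{lemma:distributedonG} in \LOCAL, \Cref{lemma:d2rounding} in \CONGEST), each with the binary label set, $|\Labels|=2$. Every one of these invocations rounds the \emph{same} initial fractional LP solution $\vec{x}$ (scaled and truncated once, at the start); what changes between phases is only (i) the set $U_i$ of still-uncovered elements, which each element knows locally, and (ii) a \emph{hard-coded} scaling factor $1/1.01^{\tau-i}$ multiplying the coverage terms in $\utility$ and $\cost$. Thus no parameter of the residual instance (such as $|U_i|$ or a fresh residual LP value) ever has to be gathered globally. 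The overview's ``potential function for the entirety of the iterations'' refers to the monotone \emph{analysis} potential $\Phi_i = \frac{1}{1.01^{\tau-i}}|U_i| + \sum_{j<i}|V'_j| + 3(\tau-i)\OPT$ that stitches together the $O(\log s)$ independent per-phase rounding guarantees; it is not a single $\utility-\cost$ handed to the rounding machinery as a whole.

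Your single-shot version has a real gap precisely where you say the ``main obstacle'' lies. The event ``$u$ is first covered in phase $i$'' depends on the labels of \emph{all} of $N(u)$ (no neighbor may carry a smaller phase), so it is not a pairwise function. The natural union-bound surrogate $\sum_{v\in N^*(u)}\sum_i [\ell(v)=i]\bigl(1-\sum_{v'\neq v}[\ell(v')\le i]\bigr)$ is pairwise, but evaluated on a fractional assignment that spreads a total mass of $\Theta(1)$ on $N^*(u)$ over each of $\tau=\Theta(\log s)$ phases, the linear part is $\Theta(\log s)$ while the quadratic subtraction scales like $\Theta(\log^2 s)$, so the utility--cost gap $\mu$ degrades to $\Theta(1/\log s)$ rather than a constant. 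That breaks the requirement $\utility(\lambda)-\cost(\lambda)\ge\mu\,\utility(\lambda)$ with constant $\mu$ on which your entire derivation rests, and it would also inject a $1/\mu=\Theta(\log s)$ factor into the rounding round complexity. Moreover, even setting that aside, a single call to \Cref{lemma:d2rounding} with $|\Labels|=\Theta(\log s)$ and $q=\Theta(st)$ gives roughly $O\bigl(\log^2(st)\cdot(\log s+\log\log(st))+\log(st)\log^* n\bigr)$ rounds, which exceeds the claimed $O(\log s\cdot\log^2 t\cdot\log\log t + \log s\cdot\log t\cdot\log^* n)$ whenever $s$ is polynomially larger than $t$. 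The paper's $O(\log s)$ per-phase roundings with $|\Labels|=2$ and integrality $1/\Theta(t)$ are exactly what recover the linear-in-$\log s$, quadratic-only-in-$\log t$ dependence.
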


The rest of this section is devoted to the proof of \cref{thm:set_cover}. 
The outline of the proof is similar to the MIS algorithm from \cref{sec:MIS}:
We first discuss how this problem can be solved using a randomized algorithm, then we construct a natural pessimistic estimator for the randomized algorithm, and finally we verify that the estimator satisfies the requirements of \cref{lemma:d2rounding}. 

\paragraph{A Randomized Set Cover Algorithm}
We start with an informal discussion of a randomized algorithm that we derandomize to get \cref{thm:set_cover}. 
We rely on the following result of \cite{nearsighted} for the \emph{fractional} variant of the set cover problem. 
There, we ask for a function $x : V \rightarrow [0,1]$ that assigns a fractional value to each set $v \in V$ and we require each element $u \in U$ to satisfy that $\sum_{v \in N(u)} x_v \ge 1$. Note that a solution to the set cover problem directly implies a solution to its fractional variant. 

\begin{theorem}[Kuhn, Moscibroda, and Wattenhofer~\cite{nearsighted}]
There exists a distributed algorithm that computes a $2$-approximate solution to the fractional set cover problem in $O(\log s \cdot \log t)$ rounds of the \congest model. 
\end{theorem}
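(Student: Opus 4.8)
The plan is to solve the fractional set cover LP by an iterative ``doubling'' algorithm that grows a feasible solution of the dual packing LP (a fractional matching in the bipartite incidence graph) and turns it into a feasible fractional cover whose cost is certified by the dual value. First I would set up the dual certificate. Writing the fractional set cover LP as $\min \sum_{v\in V} x_v$ subject to $\sum_{v\in N(u)} x_v \ge 1$ for all $u\in U$ and $x\ge 0$, its dual is the fractional matching LP $\max \sum_{u\in U} y_u$ subject to $\sum_{u\in N(v)} y_u \le 1$ for all $v\in V$ and $y\ge 0$, and both have the same optimum $\tau^\ast$ by LP duality. The task then becomes: compute in $O(\log s\cdot\log t)$ \congest rounds a feasible $y$ and, from it, a feasible fractional cover $x$ with $\sum_v x_v \le 2\tau^\ast$. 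If $y$ is, in a suitably strong sense, a \emph{maximal} fractional matching, then for every element $u$ some incident set $v$ has load $\sum_{u'\in N(v)} y_{u'}$ close to $1$, so setting $x$ proportional to the set loads gives a feasible cover; the delicate point is to keep the proportionality constant and the double-counting of loads small enough that $\sum_v x_v \le 2\tau^\ast$.

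I would compute the maximal fractional matching by a two-level doubling process. Start every $y_u$ at a value that is polynomially small in $s,t$, and run $O(\log t)$ \emph{super-phases}. In a super-phase, elements are grouped by the number of their incident sets that are still ``growing'' (not yet heavily loaded) --- this count is at most $t$ --- and one runs $O(\log s)$ \emph{sub-rounds} in which each still-growing element doubles $y_u$ (capped so that dual feasibility is preserved); after each sub-round a set becomes \emph{heavy} once its load crosses a fixed threshold and an element becomes \emph{satisfied} once one of its incident sets is heavy, and heavy sets and satisfied elements stop growing. Each sub-round costs $O(1)$ \congest rounds on $G$: each growing set reports its current load to its growing neighbours (an $O(\log n)$-bit message), and each element then decides locally whether to freeze or double. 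The $O(\log s)$ sub-rounds per super-phase suffice because a set has at most $s$ elements, each carrying at least the (polynomially small) start value, so its load crosses the threshold within $O(\log s)$ doublings; the $O(\log t)$ super-phases suffice because, with the growth rates adapted to each element's growing-degree, every still-unsatisfied element loses a constant fraction of its growing incident sets per super-phase, and it started with at most $t$ of them. Total: $O(\log s\cdot\log t)$ rounds.

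For the analysis, feasibility of the cover $x$ follows because at the end every element is satisfied (its growing-set count has dropped to $0$, forcing an incident heavy set), so $\sum_{v\in N(u)} x_v \ge 1$ after rescaling. For the cost bound I would charge each heavy set $v$'s cost (roughly $1$) against its load $\sum_{u\in N(v)} y_u$, bound how much of a single element's dual weight can be charged to distinct heavy sets, and combine this with dual feasibility $\sum_{u\in N(v)} y_u\le 1$; tuning the threshold and the doubling granularity yields $\sum_v x_v \le 2\sum_u y_u \le 2\tau^\ast$. One also checks that freezing a set --- which halts the growth of its elements --- only ever decreases the potential ``number of growing incident sets per unsatisfied element'', so the super-phase count argument is unaffected by these interactions.

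The hard part is twofold. First, the two-level schedule must guarantee that \emph{no element is starved}: an element all of whose incident sets stop growing because of \emph{other} elements must still reach a heavy set. This is precisely why the growth rate has to be adapted to each element's current growing-degree, and it is the reason the round complexity is the \emph{product} $\log s\cdot\log t$ rather than the sum. Second, pushing the approximation constant down to $2$ is non-trivial: a crude charging of set cost against dual load only bounds $\sum_v x_v$ by $O(t)\cdot\tau^\ast$, since an element's dual weight can be spread over up to $t$ incident heavy sets, so one needs the finer bookkeeping afforded by the doubling structure and the threshold choice. Both of these are carried out in \cite{nearsighted}.
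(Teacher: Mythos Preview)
The paper does not prove this theorem at all; it is stated as a black-box citation to \cite{nearsighted} and used without further argument. There is therefore no ``paper's own proof'' to compare against. Your proposal is a high-level sketch of the primal--dual doubling approach underlying the cited Kuhn--Moscibroda--Wattenhofer result, and you yourself explicitly defer the two delicate points (starvation-freeness of the two-level schedule and the charging that brings the constant down to~$2$) to \cite{nearsighted}. That is exactly what the present paper does as well, so in that sense your treatment matches the paper's.

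One minor remark on the sketch itself: the actual algorithm in \cite{nearsighted} is phrased for general covering/packing LPs and updates primal and dual variables together in a multiplicative-weights style, rather than first computing a maximal fractional matching and then extracting a cover as you describe; but the two viewpoints are close, and since both you and the paper ultimately point to \cite{nearsighted} for the details, this difference is immaterial here.
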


Consider the following randomized algorithm that has $\tau = O(\log s)$ rounds and in each round it simply constructs a subset $V'_i \subseteq V$ by including each $v \in V'_i$ with probability $x_v$. 
Note that if we define $V' = \bigcup_{i = 1}^{\tau} V'_i$, we have for the expected size of $V'$ that $\E[x_{V'}] = O(\log s) \cdot OPT$. 
On the other hand, we observe that whenever we consider some node $u$ not covered by $V'_1 \cup \dots \cup V'_i$, i.e., $u \not\in N(V'_1 \cup \dots \cup V'_i)$, we have $u \in N(V'_{i+1})$ with constant probability because $\sum_{v \in N(u)} x_v \ge 1$. 
Hence, choosing $\tau = O(\log s)$ large enough, we get that the expected number of uncovered elements is $\E\left[ U \setminus N(V')\right] \le |U|/s$. 

This means that although $V'$ is not expected to be a solution to the set cover problem, we can fix this issue as follows. Observe that for the size of the optimum solution $OPT$ we have 
\begin{align}
\label{eq:opt_lb}
OPT \ge |U| / s    
\end{align}
because every subset of $V$ smaller than $|U|/s$ would cover less than $|U|/s \cdot s =  |U|$ elements. 
This means that we can define $V''$ by letting each uncovered element of $U \setminus N(V')$ choose an arbitrary neighboring node $v \in V$ that we add to $V''$. 
The set $V' \cup V''$ clearly covers all nodes of $U$ and its expected size is at most $|V'| + |U \setminus N(V')| = O(\log s) \cdot OPT + |U|/s = \left( 1 + O(\log s)\right) \cdot OPT$.

\paragraph{The Deterministic Algorithm}

In \cref{alg:set_cover}, we present the derandomized version of the above randomized procedure. 
In the rest of the section, we first explain the algorithm, then verify that we can use \cref{lemma:distributedonG}, prove that the algorithm is $O(\log s)$-approximate, and finally discuss its implementation in the \local and \congest models.

Let us explain the intuition behind \cref{alg:set_cover}. The issue with the derandomization is that we want to optimize two conflicting goals. 
On one hand, we want to cover as many elements of the remaining set $U_i$ as possible. This means that we want to round our fractional solution $x$ to maximize the value of the following pessimistic lower bound for $|U_i| \setminus |U_{i+1}|$ coming from the pairwise analysis:
\[
\sum_{u \in U_i} \sum_{v \in V} \left( x_v - \sum_{v' \not= v \in V} x_{v'} \right) .
\]
On the other hand, we want to minimize the number of selected nodes of $V$, i.e., \[
\sum_{v \in V} x_v.
\]
To optimize these two conflicting expressions, we simply subtract the second one from the first one and invoke \cref{lemma:distributedonG} that can deal with summing terms of different signs. 
An issue with this approach is that to optimize both expressions at once, they should be of the same order of magnitude. The value of $|U_i|$ however gradually decreases from $|U|$ to $|U|/s^{O(1)}$ until we can finish by constructing the set $V''$ as in the randomized algorithm. This is why in \cref{alg:set_cover} we slowly increase the relative weight of the first optimized expression with respect to the second one (see \cref{line:u,line:c}). 
We also add a constant term of $10\sum_{v \in V}x_v$ to the utility function ${\bf u}(\vec{\tx})$ so that our functions satisfy ${\bf u}(\vec{x}) - {\bf c}(\vec{x}) \ge {\bf u}(\vec{x})/2$ which is needed to apply \cref{lemma:distributedonG}. 

\begin{algorithm}[ht]
	\caption{Deterministic Set Cover}
	\label{alg:set_cover}
	Input: A bipartite graph $G$ with $V(G) = U \sqcup V$.\\
	Output: A set $V_{out} \subseteq V$ with $N(V_{out}) = U$
	\begin{algorithmic}[1]
	    \State $\tau \leftarrow O(\log s)$
	    \State $\vec{x}_0: V \rightarrow [0,1]$ is a $2$-approximate fractional set cover of \cite{nearsighted}
	    \State For every $v \in V$, define $x_v \leftarrow x_0(v)/10$ unless $x_0(v) \le 1/(2t)$ in which case $x_v \leftarrow 0$.
	    \State For every $u \in U_i$ define $N^*(u) \subseteq N(u)$ such that $\frac{1}{20} \le \sum_{v \in N^*(u)}x_v \le \frac{1}{5}$ (see \cref{eq:frac2}) \label{line:n*}
	    \State Compute a $2$-hop coloring of $G$ with $\zeta = (st)^2$ colors using Linial's algorithm \cite{linial1987LOCAL}
        \For{$i \leftarrow 1, \dots, \tau$}
	        \State $U_i \leftarrow U \setminus \left( V'_1 \cup \dots \cup V'_{i-1} \right)$
            \State Define ${\bf u}(\vec{\tilde{x}}) \leftarrow  \frac{1}{1.01^{\tau-i}} \cdot \sum_{u \in U_i} \sum_{v \in N^*(u)} \tilde{x}_v + 10\sum_{v \in V} x_v$ \label{line:u}
            \State Define ${\bf c}(\vec{\tilde{x}}) \leftarrow \frac{1}{1.01^{\tau-i}} \cdot \sum_{u \in U_i} \sum_{v \in N^*(u)}\sum_{v'\not=v \in N^*(u)} \tilde{x}_v \tilde{x}_{v'}   
            + \sum_{v \in V} \tilde{x}_v$\label{line:c}
            \State By plugging $\vec{\tilde{x}} = \vec{x}$, use \cref{lemma:distributedonG} with ${\bf u}, {\bf c}, \zeta, \eps = \frac{1}{100}, \mu = \frac12, \lambda_{\min} = \frac{1}{2t}$ to round $\vec{x}$ to $\vec{x_i'}$
            \State Define $V'_i \leftarrow \{v \in V: x_i'(v) = 1\}$
        \EndFor
        \State Define $V''$ by letting each element of $U \setminus  N\left( V'_1 \cup \dots \cup V'_{\tau} \right)$ choose an arbitrary neighbor. 
		\State \Return {$V_{out} = V'_1 \cup \dots \cup V'_\tau \cup V''$}
	\end{algorithmic}
\end{algorithm}

\paragraph{Properties of Fractional Weights}
Let us first discuss the properties of the fractional solution $\vec{x}$ that we repeatedly round inside \cref{alg:set_cover}. By definition, $\vec{x}$ satisfies for every $u \in U$ that
\begin{align}
    \label{eq:frac0}
     x_v \in \{0\} \cup [1/(20t), 1/10],
\end{align}
so in particular $\vec{x}$ is $t$-fractional. Next, we note that for every $u \in U$ we have
\begin{align}
    \label{eq:frac1}
    \sum_{v \in N(u)} x_v \ge 1/20. 
\end{align}
To see this, let $N_{small}(u) \subseteq N(u)$ be the subset of nodes $v \in N(u)$ such that $x_0(v)\le 1/(2t)$. We have $\sum_{v \in N_{small}(u)} x_v \le t \cdot 1/(2t) \le 1/2$ where we used $|N_{small}(u)| \le |N(u)| \le t$. 
Hence $\sum_{v \in N(u)} x_v \ge \sum_{v \in N(u) \setminus N_{small}(u)} x_v \ge (1/2) / 10 = 1/20$. 

In view of \cref{eq:frac1}, in \cref{line:n*} we can define $N^*(u)$ for each $u \in U$ by repeatedly adding neighbors of $u$ to $N^*(u)$ until $\sum_{v \in N^*(u)} x_v \ge 1/20$. 
By \cref{eq:frac0} we then get $\sum_{v \in N^*(u)} x_v \le 1/20 + 1/10 \le 1/5$. Hence, we have  
\begin{align}
    \label{eq:frac2}
    1/20 \le \sum_{v \in N^*(u)} x_v \le 1/5 
\end{align}
as required in \cref{line:n*}.  
Finally, observe that
\begin{align}
    \label{eq:frac3}
    \sum_{v \in V} x_v \le 2\cdot OPT/10
\end{align}
where we use that $x_0$ is $2$-approximation of $OPT$.

\paragraph{Checking Assumptions for \cref{lemma:distributedonG}}

We continue by verifying the assumptions of \cref{lemma:distributedonG}. 
\cref{eq:frac0} implies that $\lambda_{min} = 1/(20t)$ is the smallest nonzero value to round. Next, we need to prove that ${\bf u}(\vec{x}) - {\bf c}(\vec{x}) \ge {\bf u}(\vec{x})/2$. 
To see this, note that for every $u$ we have 
\begin{align}
\label{eq:ass}
\sum_{v \in N^*(u)} \sum_{v' \not= v \in N^*(u)} x_v x_{v'} 
\le \sum_{v \in N^*(u)}\left( x_v \cdot \left( \sum_{v'\in N^*(u)} x_{v'}\right)\right) 
\le \sum_{v \in N^*(u)} x_v/5    
\end{align}
using \cref{eq:frac2}.
That is, the first term of ${\bf c}(\vec{x})$ is dominated by the first term of ${\bf u}(\vec{x})$. Similarly, by definition the second term of ${\bf c}(\vec{x})$ is dominated by the second term in ${\bf u}(\vec{x})$ and we thus have 
\begin{align}
 {\bf c}(\vec{x}) \le {\bf u}(\vec{x})/5.   
\end{align}

\paragraph{Analyzing One Step With Pessimistic Estimators}

The utility and cost functions ${\bf u}(\vec{x})$ and ${\bf c}(\vec{x})$ correspond to a pessimistic estimator for the original randomized procedure.
Namely, consider sampling each $v \in S_i$ with probability $x_v$ that yields a binary vector $\vec{x'}$. For a fixed $u \in U_i$ let $\fE_{u,v}$ be the event that $v$ is the only neighbor of $u$ in $N^*(u)$ that is sampled. By union bound the indicator of the event $R_u(\vec{x})$ that $u \in U_i$ gets covered by a set selected in $\tx'$ can be bounded for any $\vec{x'}$ as 
\begin{align}
\label{eq:domination}
    R_u(\vec{x'}) \ge \sum_{v \in N^*(u)} \left( x'_v  - \sum_{v'\not= v \in N^*(u)} x'_v x'_{v'} \right)
\end{align}

For our rounded weights $x'$, this fact implies that after we plug in the definition of ${\bf u}(\vec{x'}), {\bf c}(\vec{x'})$ we have
\begin{align}
    \label{eq:fingers}
    {\bf u}(\vec{x'}) - {\bf c}(\vec{x'}) - 10\sum_{v \in V} x_v 
    &\le  \frac{1}{1.01^{\tau - i}}\sum_{u \in U_i} R_u(\vec{x'}) - \sum_{v \in V}x'(v)
    =  \frac{1}{1.01^{\tau - i}}\left( |U_i| - |U_{i+1}|\right) - |V'_i|
\end{align}
On the other hand, let us now bound the value of $ {\bf u}(\vec{x'}) - {\bf c}(\vec{x'})$ from the other side by comparing it with the ideal randomized process corresponding to fractional weights $\vec{x}$. For those we have
\begin{align}
\label{eq:above}
    {\bf u}(\vec{x}) - {\bf c}(\vec{x}) - 10\sum_{v \in V}x_v
    &= \frac{1}{1.01^{\tau - i}} \sum_{u \in U_i} \sum_{v \in N^*(u)} \left( x_v  - \sum_{v'\not= v \in N^*(u)} x_v x_{v'} \right)    - 10\sum_{v \in V}x_v\\ 
    &\ge \frac{1}{1.01^{\tau - i}} \cdot |U_{i}|\cdot \frac{4}{5} \cdot \frac{1}{20}   - 2OPT\nonumber
\end{align}
where we first used \cref{eq:ass} to get rid of the quadratic term and then used the lower bound on $\sum_{v \in N(u)} x_u \ge 1/20$ from \cref{eq:frac2}. We also used \cref{eq:frac3}. 
We now use \cref{lemma:distributedonG} to relate $x'$ with $x$. Our choice of $\eps_{\text{L\ref{lemma:distributedonG}}} = 1/100$ implies that
\begin{align}
\label{eq:toes}
    {\bf u}(\vec{x'}) - {\bf c}(\vec{x'}) - 10\sum_{v \in V}x_v
    &\ge \frac{99}{100} \cdot \left( {\bf u}(\vec{x}) - {\bf c}(\vec{x})\right) - 10\sum_{v \in V}x_v\\
    &= \frac{99}{100} \cdot \left( {\bf u}(\vec{x}) - {\bf c}(\vec{x}) - 10\sum_{v \in V}x_v\right) - 0.1\sum_{v \in V}x_v\nonumber\\
    &\ge 0.02 \cdot \frac{1}{1.01^{\tau - i}} |U_i|  - 3OPT\nonumber
\end{align}
where the last inequality follows from  \cref{eq:above} together with the fact that $0.1 \sum_{v \in V} x_v < OPT$ by \cref{eq:frac3}. 
By comparing \cref{eq:fingers} with \cref{eq:toes}, we conclude that 
\begin{align}
\label{eq:pot_dec}
 \frac{1}{1.01^{\tau - i}}\left( |U_i| - |U_{i+1}|\right) - |V'_i|
 &\ge 0.02\cdot \frac{1}{1.01^{\tau - i}} |U_i|  - 3OPT
\end{align}

\paragraph{Finishing the analysis}

To analyze the progress of the algorithm, we define the following potential function $\Phi$ in every step. 

\begin{definition}
For every $0 \le i \le \tau = O(\log s)$, we define 
\[
\Phi_i = \frac{1}{1.01^{\tau-i}} \cdot |U_i| + |V'_1| + |V'_2| + \dots + |V'_{i-1}| + 3(\tau-i)\cdot OPT
\]
\end{definition}

At the very beginning before the first step of the algorithm, we have $|U_0| = |U|$ and $|S_0| = 0$, hence $\Phi_0 = \frac{1}{1.01^{\tau}} \cdot |U| + 2\tau \cdot OPT$. Choosing $\tau = O(\log s)$ large enough, we have $\Phi_0 \le |U|/s + 2\tau \cdot OPT \le 3\tau \cdot OPT$ using \cref{eq:opt_lb}. 

We will now prove that the potential $\Phi$ is monotone, that is $\Phi_{i+1} \le \Phi_{i}$ for every $1 \le i \le \tau$. 
Let us first see why the proof of monotonicity of $\Phi$ also finishes the analysis. 
It implies that $\Phi_\tau \le 3\tau \cdot OPT$. Using the definition of $\Phi$, this means that $|U_\tau| + |V'_1| + \dots + |V'_\tau| \le 3\tau \cdot OPT$. 
Since $|U_\tau| = |V''|$, we conclude that $|V_{out}| \le 3\tau\cdot OPT = O(\log s) \cdot OPT$ as needed.  

To see that $\Phi$ is monotone, we write the difference of two consecutive potentials as
\begin{align}
    \Phi_{i} - \Phi_{i+1}
    &= \frac{1}{1.01^{\tau - i}} \left( |U_{i}| -  1.01|U_{i+1}| \right) - |V'_{i}| + 3OPT\\
    & = \frac{1}{1.01^{\tau - i}} \left( |U_{i}| - |U_{i+1}| \right) - |V'_{i}| + 3OPT - 0.01 |U_{i+1}|/1.01^{\tau - i}\\
\end{align}
Applying \cref{eq:pot_dec}, we conclude that
\begin{align}
    \Phi_{i} - \Phi_{i+1}
    &\ge 0.02 \cdot \frac{1}{1.01^{\tau - i}} |U_i|  - 3OPT + 3OPT - 0.01 |U_{i+1}|/1.01^{\tau - i}
    \ge 0
 \end{align}
and we are done. 

\paragraph{Implementation and Round Complexity}

We begin by discussing the implementation in the \local model. 
The algorithm of \cite{nearsighted} needs $O(\log s \cdot \log t)$ rounds and we construct the 2-hop coloring of $G$ with  $\zeta = (st)^2$  colors in $O(\log^* n)$ rounds using Linial's $\Delta^2$-coloring algorithm \cite{linial1987LOCAL}. 
The complexity of every subsequent step is dominated by the call to \cref{lemma:distributedonG} with round complexity $O(\log^2 (1/\lambda_{\min}) + \log^*(\zeta)) = O(\log^2 t + \log^* (st))$. 
We note that if the first term in this expression does not dominate the second one, we have in particular $t < \log s$. In that case, we can however achieve $O(\log s)$-approximation by simply taking all nodes $v \in V$ with $x_0(v) \ge 1/t$ to $V_{out}$. This solution clearly covers all elements of $U$ since every $u \in U$ has necessarily a neighbor with $x_0(v) \ge 1/|N(u)| \ge 1/t$. Moreover, its weight is at most $t$-times larger than the fractional weight $x$. Hence, we may assume that $O(\log^2 t + \log^*(st)) = O(\log^2 t)$ and we are getting an algorithm with round complexity $O(\log s \cdot \log^2 t + \log^* n)$ as needed. 

We continue by discussing the implementation in the \congest model. There, we simply replace all calls of \cref{lemma:distributedonG} by calls to \cref{lemma:d2rounding}. 
Another change that we do is that we do not construct the $2$-hop coloring with $\zeta = (st)^2$ colors at the beginning, but simply use the unique identifiers instead, i.e., we have $\zeta = n^{O(1)}$. 
We are using the $d2$-multigraph $H$ where each node $u \in U$ simulates a virtual edge between every pair of its neighbors $v, v'$. 
The round complexity of one round of the algorithm becomes $O\left(\log s \ \cdot \left( \log^2 t\ \cdot  \log\log t + \log t \cdot \log^* n\right)\right)$.

\begin{remark}
We believe that one can directly generalize \cref{thm:set_cover} to give an $O(\log (sW))$-approximation for the more general \emph{min cost set cover} problem. There, each node $v \in V$ comes with a cost $w(v)$ such that $1 \le  w(v) \le W$. The goal is to find a subset $V_{out} \subseteq V$ of smallest total cost such that $N(V_{out}) = U$. 

We discuss the changes that need to be done in \cref{alg:set_cover}. We use a min-cost version of the fractional algorithm of \cite{nearsighted} that needs $O(\log (sW) \cdot \log t)$ rounds. Then, on \cref{line:u,line:c} in \cref{alg:set_cover} we change the utility $10\sum_{v \in V} x_v$ to $10\sum_{v \in V} w(v) \cdot x_v$ and the cost $\sum_{v \in V)} \tx_v$ to $\sum_{v \in V)} w(v) \cdot \tx_v$. 
Finally, we run the algorithm for $\tau = O(\log (sW))$ steps. This way, we have $|U \setminus N(V_1 \cup \dots \cup V_\tau)| \le |U|/(sW)$ which implies that $w(V'') \le OPT$. 
The distributed complexity of this algorithm is
\begin{enumerate}
    \item $O(\log (sW) \cdot \log ^2 t + \log^* n)$ in the \LOCAL model,
\item $O\left(\log (sW)\ \cdot \left( \log^2 t\ \cdot  \log\log t + \log t \cdot \log^* n\right)\right)$ in the \CONGEST model, assuming $W \le n^{O(1)}$. 

\end{enumerate}
\end{remark}

\section*{Acknowledgments}
M.G., C.G. and V.R. were supported in part by the European Research Council (ERC) under the European Unions Horizon 2020 research and innovation program (grant agreement No.~853109) and the Swiss National Science Foundation (project grant 200021\_184735).

\bibliographystyle{alpha}
\bibliography{ref}

\newcommand{\etalchar}[1]{$^{#1}$}
\begin{thebibliography}{BNRBY{\etalchar{+}}01}

\bibitem[ABI86]{alon86}
N.~Alon, L.~Babai, and A.~Itai.
\newblock A fast and simple randomized parallel algorithm for the maximal
  independent set problem.
\newblock {\em Journal of Algorithms}, 7(4):567--583, 1986.

\bibitem[ARVX12]{alon2012LCA}
Noga Alon, Ronitt Rubinfeld, Shai Vardi, and Ning Xie.
\newblock Space-efficient local computation algorithms.
\newblock In {\em Proceedings of the ACM-SIAM Symposium on Discrete Algorithms
  (SODA)}, pages 1132--1139, 2012.

\bibitem[BBH{\etalchar{+}}19]{balliu2019LB}
Alkida Balliu, Sebastian Brandt, Juho Hirvonen, Dennis Olivetti, Mika{\"e}l
  Rabie, and Jukka Suomela.
\newblock Lower bounds for maximal matchings and maximal independent sets.
\newblock In {\em Proceedings of the Symposium on Foundations of Computer
  Science (FOCS)}, 2019.

\bibitem[BE10]{barenboim10}
L.~Barenboim and M.~Elkin.
\newblock Deterministic distributed vertex coloring in polylogarithmic time.
\newblock In {\em Proc.\ 29th Symp.\ on Principles of Distributed Computing
  (PODC)}, pages 410--419, 2010.

\bibitem[BE13]{barenboimelkin_book}
L.~Barenboim and M.~Elkin.
\newblock {\em Distributed Graph Coloring: Fundamentals and Recent
  Developments}.
\newblock Morgan \& Claypool Publishers, 2013.

\bibitem[BEG18]{BEG18}
L.~Barenboim, M.~Elkin, and U.~Goldenberg.
\newblock Locally-iterative distributed {$(\Delta+1)$}-coloring below
  {Szegedy-Vishwanathan} barrier, and applications to self-stabilization and to
  restricted-bandwidth models.
\newblock In {\em Proc.\ 37th {ACM} Symp.\ on Principles of Distributed
  Computing (PODC)}, pages 437--446, 2018.

\bibitem[BEK15]{BEK15}
L.~Barenboim, M.~Elkin, and F.~Kuhn.
\newblock Distributed {$(\Delta+1)$}-coloring in linear (in {$\Delta$}) time.
\newblock {\em SIAM Journal on Computing}, 43(1):72--95, 2015.

\bibitem[BNRBY{\etalchar{+}}01]{localratio}
Amotz Bar-Noy, Reuven, Bar-Yehuda, Ari Freund, Joseph Naor, and Baruch
  Schieber.
\newblock A unified ap- proach to approximating resource allocation and
  scheduling.
\newblock {\em J.\ of the ACM (JACM)}, 48:1069--1090, 2001.

\bibitem[CFG{\etalchar{+}}19]{chang2019complexity}
Yi-Jun Chang, Manuela Fischer, Mohsen Ghaffari, Jara Uitto, and Yufan Zheng.
\newblock The complexity of (${\Delta}$+ 1) coloring in congested clique,
  massively parallel computation, and centralized local computation.
\newblock In {\em Proc.\ ACM Symp.\ on Principles of Distributed Computing
  (PODC)}, pages 471--480, 2019.

\bibitem[CHPS20]{censor2017derandomizing}
K.~Censor-Hillel, M.~Parter, and G.~Schwartzman.
\newblock Derandomizing local distributed algorithms under bandwidth
  restrictions.
\newblock {\em Distributed Comput.}, 33(3-4):349--366, 2020.

\bibitem[CV86]{cole86}
R.~Cole and U.~Vishkin.
\newblock Deterministic coin tossing with applications to optimal parallel list
  ranking.
\newblock {\em Information and Control}, 70(1):32--53, 1986.

\bibitem[DKM19]{Deurer2019}
Janosch Deurer, Fabian Kuhn, and Yannic Maus.
\newblock Deterministic distributed dominating set approximation in the congest
  model.
\newblock In {\em Proceedings of the International Symposium on Principles of
  Distributed Computing (PODC)}, pages 94--103, 2019.

\bibitem[EFF82]{erdos82}
P.~Erd\H{o}s, P.~Frankl, and Z.~F{\"u}redi.
\newblock Families of finite sets in which no set is covered by the union of
  two others.
\newblock {\em Journal of Combinatorial Theory, Series A}, 33(2):158 -- 166,
  1982.

\bibitem[FGK17]{FischerGK17}
M.~Fischer, M.~Ghaffari, and F.~Kuhn.
\newblock Deterministic distributed edge-coloring via hypergraph maximal
  matching.
\newblock In {\em Proc.\ 58th IEEE Symp.\ on Foundations of Computer Science
  (FOCS)}, 2017.

\bibitem[Fis20]{fischer2020improved}
M.~Fischer.
\newblock Improved deterministic distributed matching via rounding.
\newblock {\em Distributed Computing}, 33(3):279--291, 2020.

\bibitem[GGR21]{GGR20}
M.~Ghaffari, C.~Grunau, and V.~Rozhon.
\newblock Improved deterministic network decomposition.
\newblock In {\em Proc.\ 33rd {ACM-SIAM} Symp.\ on Discrete Algorithms (SODA)},
  pages 2904--2923, 2021.

\bibitem[Gha16]{gmis}
M.~Ghaffari.
\newblock An improved distributed algorithm for maximal independent set.
\newblock In {\em Proc.\ 27th ACM-SIAM Symp.\ on Discrete Algorithms (SODA)},
  pages 270--277, 2016.

\bibitem[GHK18]{ghaffari2018derandomizing}
M.~Ghaffari, D.~Harris, and F.~Kuhn.
\newblock On derandomizing local distributed algorithms.
\newblock In {\em Proceedings of the Symposium on Foundations of Computer
  Science (FOCS)}, pages 662--673, 2018.

\bibitem[GK21]{GhaffariK21}
M.~Ghaffari and F.~Kuhn.
\newblock Deterministic distributed vertex coloring: Simpler, faster, and
  without network decomposition.
\newblock In {\em Proc.\ 62nd IEEE Symp.\ on Foundations of Computer Science
  (FOCS)}, pages 1009--1020, 2021.

\bibitem[GKM17]{ghaffari2017complexity}
M.~Ghaffari, F.~Kuhn, and Y.~Maus.
\newblock On the complexity of local distributed graph problems.
\newblock In {\em Proc.\ 49th ACM Symp.\ on Theory of Computing (STOC)}, pages
  784--797, 2017.

\bibitem[GMRV20]{grunau2020improved}
Christoph Grunau, Slobodan Mitrovi{\'c}, Ronitt Rubinfeld, and Ali Vakilian.
\newblock Improved local computation algorithm for set cover via
  sparsification.
\newblock In {\em Proceedings of the ACM-SIAM Symposium on Discrete Algorithms
  (SODA)}, pages 2993--3011. SIAM, 2020.

\bibitem[Gri83]{griggs1983lower}
Jerrold~R Griggs.
\newblock Lower bounds on the independence number in terms of the degrees.
\newblock {\em Journal of Combinatorial Theory, Series B}, 34(1):22--39, 1983.

\bibitem[Har19]{harris2019distributed}
D.~G. Harris.
\newblock Distributed local approximation algorithms for maximum matching in
  graphs and hypergraphs.
\newblock In {\em Proc.\ 60th IEEE Symp.\ on Foundations of Computer Science
  (FOCS)}, pages 700--724, 2019.

\bibitem[HKP01]{hanckowiak01}
M.~Ha\'n\'ckowiak, M.~Karo\'nski, and A.~Panconesi.
\newblock On the distributed complexity of computing maximal matchings.
\newblock {\em SIAM Journal on Discrete Math.}, 15(1):41--57, 2001.

\bibitem[HR97]{halldorsson1997greed}
Magn{\'u}s~M Halld{\'o}rsson and Jaikumar Radhakrishnan.
\newblock Greed is good: Approximating independent sets in sparse and
  bounded-degree graphs.
\newblock {\em Algorithmica}, 18(1):145--163, 1997.

\bibitem[KKSS20]{KawarabayashiKS20}
K.~Kawarabayashi, S.~Khoury, A.~Schild, and G.~Schwartzman.
\newblock Improved distributed approximations for maximum independent set.
\newblock In {\em Proc.\ 34th Symp.\ on Distributed Computing (DISC)}, pages
  35:1--35:16, 2020.

\bibitem[KMW06]{nearsighted}
F.~Kuhn, T.~Moscibroda, and R.~Wattenhofer.
\newblock The price of being near-sighted.
\newblock In {\em Proc.\ 17th Symp.\ on Discrete Algorithms (SODA)}, pages
  980--989, 2006.

\bibitem[KOHH05]{kako2005approximation}
Akihisa Kako, Takao Ono, Tomio Hirata, and Magn{\'u}s~M Halld{\'o}rsson.
\newblock Approximation algorithms for the weighted independent set problem.
\newblock In {\em Proc.\ Int.\ Workshop on Graph-Theoretic Concepts in Computer
  Science}, pages 341--350, 2005.

\bibitem[KS18]{KawarabayashiS18}
K.~Kawarabayashi and G.~Schwartzman.
\newblock Adapting local sequential algorithms to the distributed setting.
\newblock In {\em Proc.\ 32nd Int.\ Symp.\ on Distributed Computing (DISC)},
  pages 35:1--35:17, 2018.

\bibitem[Kuh09]{Kuhn2009WeakColoring}
F.~Kuhn.
\newblock Local weak coloring algorithms and implications on deterministic
  symmetry breaking.
\newblock In {\em Proc.\ 21st ACM Symp.\ on Parallelism in Algorithms and
  Architectures (SPAA)}, 2009.

\bibitem[Lin87]{linial1987LOCAL}
N.~Linial.
\newblock Distributive graph algorithms -- global solutions from local data.
\newblock In {\em Proc.\ 28th IEEE Symp.\ on Foundations of Computer Science
  (FOCS)}, pages 331--335, 1987.

\bibitem[Lin92]{linial92}
N.~Linial.
\newblock Locality in distributed graph algorithms.
\newblock {\em SIAM Journal on Computing}, 21(1):193--201, 1992.

\bibitem[Lub86]{luby86}
M.~Luby.
\newblock A simple parallel algorithm for the maximal independent set problem.
\newblock {\em SIAM Journal on Computing}, 15:1036--1053, 1986.

\bibitem[Lub93]{luby1993removing}
M.~Luby.
\newblock Removing randomness in parallel computation without a processor
  penalty.
\newblock {\em J.\ of Computer and System Sciences}, 47(2):250--286, 1993.

\bibitem[LW08]{lenzen08}
C.~Lenzen and R.~Wattenhofer.
\newblock Leveraging linial's locality limit.
\newblock In {\em Proc.\ 22nd Symp.\ on Distributed Computing (DISC)}, pages
  394--407, 2008.

\bibitem[MT20]{MausTonoyan20}
Y.~Maus and T.~Tonoyan.
\newblock Local conflict coloring revisited: {Linial} for lists.
\newblock In {\em Proc.\ 34th Int.\ Symp.\ on Distributed Computing (DISC)},
  pages 16:1--16:18, 2020.

\bibitem[Pel00]{peleg00}
D.~Peleg.
\newblock {\em Distributed Computing: A Locality-Sensitive Approach}.
\newblock SIAM, 2000.

\bibitem[PR07]{parnas2007approximating}
Michal Parnas and Dana Ron.
\newblock Approximating the minimum vertex cover in sublinear time and a
  connection to distributed algorithms.
\newblock {\em Theoretical Computer Science}, 381(1-3):183--196, 2007.

\bibitem[RG20]{rozhonghaffari20}
V.~Rozho\v{n} and M.~Ghaffari.
\newblock Polylogarithmic-time deterministic network decomposition and
  distributed derandomization.
\newblock In {\em Proc.\ 52nd {ACM} Symp.\ on Theory of Computing (STOC)},
  pages 350--363, 2020.

\bibitem[RTVX11]{rubinfeld2011LCA}
Ronitt Rubinfeld, Gil Tamir, Shai Vardi, and Ning Xie.
\newblock Fast local computation algorithms.
\newblock In {\em Innovations in Theoretical Computer Science (ITCS)}, pages
  223--238, arxiv:1104.1377, 2011.

\bibitem[Tur41]{turan1941external}
P.~Tur\'{a}n.
\newblock On an external problem in graph theory.
\newblock {\em Mat. Fiz. Lapok}, 48:436--452, 1941.

\bibitem[Wei81]{Wei1981generalizedturan}
V.~K. Wei.
\newblock A lower bound on the stability number of a simple graph.
\newblock Technical Report No. 81-11217-9, Bell Laboratories Technical
  Memorandum, 1981.

\end{thebibliography}

\end{document}